%
%
%
%
%
\RequirePackage{fix-cm}
\documentclass[twocolumn]{svjour3}          
\smartqed  
%
%
%
%
%
%
\newcommand{\papertitle}{Free Gap Estimates from the Exponential Mechanism, Sparse Vector, Noisy Max and Related Algorithms}

\usepackage{mydefs}
\usepackage{etoolbox}
\newcommand*{\affaddr}[1]{#1} 
\newcommand*{\affmark}[1][*]{\textsuperscript{#1}}
\begin{document}

\title{\papertitle
}





\author{%
Zeyu Ding\protect\affmark[1] \and Yuxin Wang\affmark[1] \and Yingtai Xiao\affmark[1] \and Guanhong Wang\affmark[1] \and Danfeng Zhang\affmark[1] \and Daniel Kifer\affmark[1]
}

\institute{
            Zeyu Ding \at \email{zyding@psu.edu}\and
            Yuxin Wang \at \email{yxwang@psu.edu}\and 
            Yingtai Xiao\at \email{yxx5224@psu.edu}\and
            Guanhong Wang\at \email{gpw5092@psu.edu}\and
            Danfeng Zhang\at \email{zhang@cse.psu.edu}\and
            Daniel Kifer\at \email{dkifer@cse.psu.edu}\\
              \affaddr{\affmark[1]Department of Computer Science and Engineering, Pennsylvania State University, University Park, PA 16802, USA}\\
}

\date{Received: date / Accepted: date}

\maketitle

\begin{abstract}
Private selection algorithms, such as the Exponential Mechanism, Noisy Max and Sparse Vector, are used to select items (such as queries with large answers) from a set of candidates, while controlling privacy leakage in the underlying data. Such algorithms serve as building blocks for more complex differentially private algorithms. In this paper we show that these algorithms can release additional information related to the gaps between the selected items and the other candidates for free (i.e., at no additional privacy cost).  This free gap information can improve the accuracy of certain follow-up counting queries by up to 66\%. We obtain these results from a careful privacy analysis of these algorithms. Based on this analysis, we further propose novel hybrid algorithms that can dynamically save additional privacy budget.

\keywords{Differential Privacy \and Exponential Mechanism \and Noisy Max \and Sparse Vector}
\end{abstract}

\section{Introduction}\label{sec:intro}
Industry and government agencies are increasingly adopting differential privacy \cite{dwork06Calibrating} to protect the confidentiality of users who provide data. Current and planned major applications include data gathering by Google \cite{rappor,prochlo}, Apple \cite{applediffp}, and Microsoft  \cite{DingKY17}; database querying by Uber \cite{elasticsensitivity}; and publication of population statistics at the U.S. Census Bureau \cite{ashwin08:map,onthemap,Haney:2017:UCF,abowd18kdd}.

The accuracy of differentially private data releases is very important in these applications. One way to improve accuracy is to increase the value of the privacy parameter $\epsilon$, known as the privacy loss budget, as it provides a tradeoff between an algorithm's utility and its privacy protections. However, values of $\epsilon$ that are deemed too high can subject a company to criticisms of not providing enough privacy \cite{TangPLApple}. For this reason, researchers invest significant effort in tuning algorithms \cite{diffperm,ectelo,pythia,deepdp,pate2,fanaeepour2018histogramming} and privacy analyses \cite{BS2016:zcdp,M2017:Renyi,pate2,ErlingssonFMRTT19} to provide better utility while using smaller privacy budgets.

Differentially private algorithms are built on smaller components called \emph{mechanisms} \cite{pinq}. Popular mechanisms include the Laplace Mechanism \cite{dwork06Calibrating}, Geometric Mechanism \cite{universallyUtilityMaximizingPrivacyMechanisms},  \noisymax \cite{diffpbook}, Sparse Vector Technique (\svt) \cite{diffpbook,lyu2017understanding}, and the Exponential Mechanism \cite{exponentialMechanism}. As we will explain in this paper, some of these mechanisms, such as the Exponential Mechanism, \noisymax and \svt, inadvertently throw away information that is useful for designing accurate algorithms. Our contribution is to present novel variants of these mechanisms that provide more functionality at the same privacy cost (under pure differential privacy).

Given a set of queries, \noisymax returns the identity (not value) of the query that is likely to have the largest value -- it adds noise to each query answer and returns the index of the query with the largest noisy value. The Exponential Mechanism is a replacement for \noisymax in situations where query answers have utility scores.
Meanwhile, \svt is an online algorithm that takes a stream of queries and a predefined public threshold $T$. It tries to return the identities (not values) of the first $k$ queries that are likely larger than the threshold. To do so, it adds noise to the threshold. Then, as it sequentially processes each query, it outputs ``$\top$'' or ``$\bot$'', depending on whether the noisy value of the current query is larger or smaller than the noisy threshold. The mechanism terminates after $k$ ``$\top$'' outputs.

In recent work \cite{shadowdp}, using program verification tools, Wang et al. showed that \svt can provide additional information \emph{at no additional cost to privacy}. That is, when \svt returns ``$\top$'' for a query, it can also return the gap between its noisy value and the noisy threshold.\footnote{This was a surprising result given the number of incorrect attempts at improving \svt based on flawed manual proofs \cite{lyu2017understanding} and shows the power of automated program verification techniques.} We refer to their algorithm as \gapsvt.

Inspired by this program verification work, we propose novel variations of Exponential Mechanism, \svt and {\noisymax} that add new functionality. For \svt, we show that in addition to releasing this gap information, even stronger improvements are possible -- we present an adaptive version that can answer more queries than before by controlling how much privacy budget it uses to answer each query. The intuition is that we would like to spend less of our privacy budget for queries that are probably much larger than the threshold (compared to queries that are probably closer to the threshold). A careful accounting of the privacy impact shows that this is possible. Our experiments confirm that \adaptivesvt can answer many more queries than the prior versions \cite{lyu2017understanding,diffpbook,shadowdp} at the same privacy cost. 

For \noisymax, we show that it too inadvertently throws away information. Specifically, \emph{at no additional cost to privacy}, it can release an estimate of the gap between the largest and second largest queries (we call the resulting mechanism \gapmax). We  generalize this result to \topk~-- showing that one can release an estimate of the \emph{identities} of the $k$ largest queries and, at no extra privacy cost, release noisy estimates of the pairwise \emph{gaps} (differences) among the top $k+1$ queries. 

For Exponential Mechanism, we show that there is also a concept of a gap, which corresponds to the noisy difference in utility between the selected query and the best non-selected query. One of the challenges with the Exponential Mechanism is that  for efficiency purposes it can use complex sampling algorithms to select the chosen candidate. We show that it is possible to release the noisy gap information even if the sampling algorithms are treated as black boxes (i.e., without access to its intermediate computations).

The extra noisy gap information opens up new directions in the construction of differentially private algorithms and can be used to improve accuracy of certain subsequent queries. For instance, one common task is to use \noisymax to select the approximate top $k$ queries and then use additional privacy loss budget to obtain noisy answers to these queries. We show that a postprocessing step can combine these noisy answers with gap information to improve accuracy by up to 66\% for counting queries. 
We provide similar applications for the free gap information in \svt.

This paper is an extension of a conference paper \cite{freegapinfo}. For this extension we have added the following results: (a) free gap results for the Exponential Mechanism, (b) free gap results when \noisymax and \svt are used with one-sided noise, which improves on the accuracy reported in \cite{freegapinfo} for two-sided noise, (c) novel hybrid algorithms that combine \svt and \noisymax into an offline selection procedure; these algorithms return the \emph{identities} of the approximate top-$k$ queries, but only if they are larger than a pre-specified threshold. These algorithms save privacy budget if fewer than $k$ queries are approximately over the threshold, in which case they also provide free estimates of the query answers (if all $k$ queries are approximately over the threshold, then we obtain information about the gaps between them).

We prove most of our results using the alignment of random variables framework \cite{lyu2017understanding,diffperm,shadowdp,lightdp}, which is based on the following question: if we change the input to a program, how must we change its random variables so that output remains the same? This technique is used to prove the correctness of almost all pure differential privacy mechanisms \cite{diffpbook} but needs to be used in sophisticated ways to prove the correctness of the more advanced algorithms \cite{lyu2017understanding,diffperm,diffpbook,shadowdp,lightdp}. 
Nevertheless, alignment of random variables is often used incorrectly (as discussed by Lyu et al. \cite{lyu2017understanding}). Thus a secondary contribution of our work is to lay out the precise steps and conditions that must be checked and to provide helpful lemmas that ensure these conditions are met. 
The Exponential Mechanism does not fit in this framework and requires its own proof techniques, which we explain in Section \ref{sec:em}.
%
%
To summarize, our contributions are as follows:
\begin{itemize}[leftmargin=0.5cm,itemsep=0cm,topsep=0.5em,parsep=0.5em]
    \item We provide a simplified template for writing correctness proofs for intricate differentially private algorithms.
    \item Using this technique, we propose and prove the correctness of two new mechanisms: \gaptopk and \adaptivesvt.
    These algorithms improve on the original versions of {\noisymax} and {\svt} by taking advantage of \emph{free} information (i.e., information that can be released at no additional privacy cost) that those algorithms inadvertently throw away. We also show that the free gap information can be maintained even when these algorithms use one-sided noise. This variation improves the accuracy of the gap information.
    \item We demonstrate some of the uses of the gap information that is provided by these new mechanisms.  When an algorithm needs to use \noisymax or \svt to select some queries and then measure them (i.e., obtain their noisy answers), we show how the gap information from our new mechanisms can be used to improve the accuracy of the noisy measurements. We also show how the gap information in \svt can be used to estimate the confidence that a query's true answer really is larger than the threshold. 
    \item We show that the Exponential Mechanism can also release free gap information. Noting that the free gap extensions of \noisymax and \svt required access to the internal state of those algorithms, we show that this is unnecessary for Exponential Mechanism. This is useful because implementations of Exponential Mechanism can be very complex and use a variety of different sampling routines. 
    \item We propose two novel hybridizations of \noisymax and \svt. These algorithms can release the identities of the approximate top-$k$ queries as long as they are larger than a pre-specified threshold. If fewer than $k$ queries are returned, the algorithms save privacy budget and the gap information they release directly turns into estimates of the query answers (i.e., the algorithm returns the query identities and their answers for free). If $k$ queries are returned then the algorithms still return the gaps between their answers.
    \item We empirically evaluate the mechanisms on a variety of datasets to demonstrate their improved utility.
\end{itemize}

In Section \ref{sec:relatedwork}, we discuss related work. We present background and notation in Section \ref{sec:background}. We present simplified proof templates for randomness alignment in Section \ref{sec:alignment}. We present \adaptivesvt in Section \ref{sec:svt} and \gaptopk in Section \ref{sec:noisymax}. We present the novel algorithms that combine elements of \noisymax and \svt in \ref{sec:hybrid}. We present Exponential Mechanism with Gap algorithms in Section \ref{sec:em}. We present  experiments in Section \ref{sec:experiments}, proofs underlying the alignment of randomness framework in Section \ref{subsec:incproofs} and conclusions in Section \ref{sec:conc}. Other proofs appear in the \appendixref.

\section{Related Works}\label{sec:relatedwork}
Selection algorithms, such as Exponential Mechanism \cite{exponentialMechanism,RaskhodnikovaS16}, Sparse Vector Technique (\svt) \cite{diffpbook,lyu2017understanding}, and \noisymax \cite{diffpbook} are used to select a set of items (typically queries) from a much larger set. They have applications in hyperparameter tuning \cite{diffperm,LiuPSPC}, iterative construction of microdata \cite{mwem}, feature selection \cite{GuhaSmith13}, frequent itemset mining \cite{BhaskarDFP}, exploring a privacy/accuracy tradeoff \cite{LigettNRWW17}, data pre-processing \cite{ChenDPRD}, etc.
Various generalizations have been proposed \cite{LigettNRWW17,BeimelNS16,GuhaSmith13,RaskhodnikovaS16,ChaudhuriLMM,LiuPSPC}. Liu and Talwar \cite{LiuPSPC} and Raskhodnikova and Smith \cite{RaskhodnikovaS16} extend the exponential mechanism for arbitrary sensitivity queries.
Beimel et al. \cite{BeimelNS16} and Thakurta and Smith \cite{GuhaSmith13} use the propose-test-release framework \cite{diffrobust} to find a gap between the best and second best queries and, if the gap is large enough, release the identity of the best query. These two algorithms rely on a relaxation of differential privacy called approximate $(\epsilon,\delta)$-differential privacy \cite{dworkKMM06:ourdata} and can fail to return an answer (in which case they return $\perp$). Our algorithms work with pure $\epsilon$-differential privacy. Chaudhuri et al. \cite{ChaudhuriLMM} also proposed a large margin mechanism (with approximate differential privacy) which finds a large gap separating top queries from the rest and returns one of them.

There have also been unsuccessful attempts to generalize selection algorithms such as \svt (incorrect versions are catalogued by Lyu et al. \cite{lyu2017understanding}), which has sparked innovations in program verification for differential privacy (e.g.,  \cite{Barthe16,Aws:synthesis,lightdp,shadowdp}) with techniques such as probabilistic coupling \cite{Barthe16} and a simplification based on randomness alignment \cite{lightdp}. These are similar to ideas behind handwritten proofs \cite{diffperm,diffpbook,lyu2017understanding} -- they consider what changes need to be made to random variables in order to make two executions of a program, with different inputs, produce the same output. It is a powerful technique that is behind almost all proofs of differential privacy, but is very easy to apply incorrectly \cite{lyu2017understanding}. In this paper, we state and prove a more general version of this technique in order to prove correctness of our algorithms and also provide additional results that simplify the application of this technique.


\section{Background and Notation}\label{sec:background}
 In this paper, we use the following notation. $D$ and $D^\prime$ refer to databases. We use the notation $D\sim D^\prime$ to represent adjacent databases.\footnote{The notion of adjacency depends on the application. Some papers define it as $D$ can be obtained from $D^\prime$ by modifying one record \cite{dwork06Calibrating} or by adding/deleting one record \cite{Dwork06diffpriv}.} $M$ denotes a randomized algorithm whose input is a database.  $\Omega$ denotes the range of $M$ and $\omega\in \Omega$ denotes a specific output of $M$. We use $E\subseteq \Omega$ to denote a set of possible outputs. Because $M$ is randomized, it also relies on a random noise vector $H\in  \bbR^\infty$. This noise sequence is infinite, but of course $M$ will only use a finite-length prefix of $H$. 
Some of the commonly used noise distributions for this vector $H$ include the Laplace distribution, the Exponential distribution  and the Geometric distribution. Their properties are summarized in Table~\ref{tab:randomnoise}.
\begin{table}[!ht]
\begin{center}
\caption{Noise Distributions}\label{tab:randomnoise}
\resizebox{\linewidth}{!}{
\begin{tabular}{c c c c c}
\Xhline{1.5\arrayrulewidth}
\textbf{Symbol} & \textbf{Support} & \textbf{Density/Mass} & \textbf{Mean} & \textbf{Variance}  \\ \hline
$\lap(\beta)$ & $\bbR$ & $\frac{1}{2\beta}\exp(-\frac{\abs{x}}{\beta})$ & 0& $2\beta^2$\\ 
$\Exp(\beta)$ & $[0,\infty)$ & $\frac{1}{\beta}\exp(-\frac{x}{\beta})$ & $\beta$ & $\beta^2$ \\ 
$\geo(p)$ & $\set{0, 1, \ldots}$ & $ p(1-p)^{n}$ & $\frac{1}{p}$ & $\frac{1-p}{p^2}$\\ 
\Xhline{1.5\arrayrulewidth}
\end{tabular}
}
\end{center}
\end{table}

 When we need to draw attention to the noise, we use the notation $M(D,H)$ to indicate the execution of $M$ with database $D$ and randomness coming from $H$. Otherwise we use the notation $M(D)$. We define $\smde = \set{H\mid M(D,H)\in E }$ to be the set
 of noise vectors that allow $M$, on input $D$, to produce an output in the set $E\subseteq \Omega$. To avoid overburdening the notation, we write $\sde$ for $\smde$ and $\sdep$ for $\smdep$ when $M$ is clear from the context. When $E$ consists of a single point $\omega$, we write these sets as $\sdo$ and $\sdop$. 
 This notation is summarized in Table \ref{tab:notation}.

\begin{table}[!ht]
\begin{center}
\caption{Notation}\label{tab:notation}
\begin{tabular}{c c}
\Xhline{1.5\arrayrulewidth}
\textbf{Symbol} & \textbf{Meaning} \\ \hline
$M$ &  randomized algorithm  \\ 
$D, D'$& database \\ 
$D\sim D^\prime$ & $D$ is adjacent to $D^\prime$\\
$H=(\eta_1,\eta_2,\ldots)$ & input noise vector \\
$\Omega$ & the space of all output of $M$  \\ 
$\omega$ & a possible output; $\omega\in\Omega$  \\ 
$E$ & a set of possible outputs; $E\subseteq\Omega$  \\
$\sde = \smde$ & $\set{H  \mid M(D,H)\in E}$ \\ 
$\sdo=\smdo$ & $\set{H\mid M(D,H)= \omega}$ \\ 
\Xhline{1.5\arrayrulewidth}
\end{tabular}
\end{center}
\end{table}

\subsection{Formal Privacy} Differential privacy \cite{dwork06Calibrating,Dwork06diffpriv,diffpbook} is currently the gold standard for releasing privacy-preserving information about a database. It has a parameter $\epsilon>0$ known as the privacy loss budget. The smaller it is, the more privacy is provided. Differential privacy bounds the effect of one record on the output of the algorithm (for small $\epsilon$, the probability of any output is barely affected by any person's record).
\begin{definition}[Pure Differential Privacy \cite{Dwork06diffpriv}]
Let $\epsilon>0$. A randomized algorithm $M$ with output space $\Omega$ satisfies (pure) $\epsilon$-differential privacy if for all $E\subseteq\Omega$ and all pairs of adjacent databases $D\sim D^\prime$, the following holds:
\begin{equation}\label{eq:dpdef}
    \prob{M(D,H)\in E} \leq e^\epsilon \prob{M(D',H')\in E}
\end{equation}
where the probability is only over the randomness of $H$. 
With the notation in Table~\ref{tab:notation}, the differential privacy condition from Equation \eqref{eq:dpdef} is $\prob{\sde}\leq e^\epsilon \prob{\sdep}$.
\end{definition}
Differential privacy enjoys the following  properties:
\begin{itemize}[leftmargin=0.5cm,itemsep=0cm,topsep=0.5em,parsep=0.5em]
    \item Resilience to Post-Processing. If we apply an algorithm $A$ to the output of an $\epsilon$-differentially private algorithm $M$, then the composite algorithm $A\circ M$ still satisfies $\epsilon$-differential privacy. In other words, privacy is not reduced by post-processing.
    
    \item Composition. If $M_1,M_2,\dots, M_k$ satisfy differential privacy with privacy loss budgets $\epsilon_1,\dots,\epsilon_k$, the algorithm that runs all of them and releases their outputs satisfies $(\sum_i\epsilon_i)$-differential privacy.
\end{itemize}

Many differentially private algorithms take advantage of the Laplace mechanism \cite{exponentialMechanism}, which provides a noisy answer to a vector-valued query $\vq$ based on its $L_1$ \emph{global sensitivity} $\Delta_{\vq}$, defined as follows:
\begin{definition}[$L_1$ Global Sensitivity \cite{diffpbook}] The ($L_1$) global sensitivity of a query $\vq$ is
\[\Delta_{\vq} = \sup_{D\sim D'} \norm{\vq(D)-\vq(D')}_1.\]
\end{definition}
\begin{theorem}[Laplace Mechanism \cite{dwork06Calibrating}]\label{thm:laplace}
Given a privacy loss budget $\epsilon$, consider the mechanism that returns $\vq(D)+ H$, where $H$ is a vector of independent random samples from the $\lap(\Delta_{\vq}/\epsilon)$ distribution.
This Laplace mechanism satisfies $\epsilon$-differential privacy.
\end{theorem}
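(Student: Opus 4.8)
The plan is to prove this directly by bounding the pointwise ratio of output densities on adjacent databases, which is the classical argument for the Laplace mechanism. Fix adjacent databases $D\sim D'$ and let $\vq$ have global sensitivity $\Delta_{\vq}$. Since $M(D,H)=\vq(D)+H$ with the coordinates of $H$ drawn independently from $\lap(\Delta_{\vq}/\epsilon)$, the output $M(D)$ has a product density; at any point $\omega$ in the (finite-dimensional) range, the density is $f_D(\omega)=\prod_i \frac{\epsilon}{2\Delta_{\vq}}\exp\!\left(-\frac{\epsilon\,\abs{\omega_i-q_i(D)}}{\Delta_{\vq}}\right)$, and similarly $f_{D'}(\omega)$ with $q_i(D)$ replaced by $q_i(D')$.

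The first key step is to form the ratio $f_D(\omega)/f_{D'}(\omega)$: the normalizing constants $\frac{\epsilon}{2\Delta_{\vq}}$ cancel, leaving
\[
\frac{f_D(\omega)}{f_{D'}(\omega)} = \exp\!\left(\frac{\epsilon}{\Delta_{\vq}}\sum_i\bigl(\abs{\omega_i-q_i(D')}-\abs{\omega_i-q_i(D)}\bigr)\right).
\]
The second step applies the reverse triangle inequality coordinatewise, $\abs{\omega_i-q_i(D')}-\abs{\omega_i-q_i(D)}\le \abs{q_i(D)-q_i(D')}$, so the exponent is at most $\frac{\epsilon}{\Delta_{\vq}}\norm{\vq(D)-\vq(D')}_1$. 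The third step invokes the definition of $L_1$ global sensitivity, $\norm{\vq(D)-\vq(D')}_1\le\Delta_{\vq}$, which makes the exponent at most $\epsilon$ and hence $f_D(\omega)\le e^{\epsilon} f_{D'}(\omega)$ for every $\omega$. Integrating this pointwise inequality over any measurable $E\subseteq\Omega$ yields $\prob{M(D,H)\in E}\le e^{\epsilon}\prob{M(D',H')\in E}$, i.e.\ $\prob{\sde}\le e^\epsilon\prob{\sdep}$, which is exactly the definition of $\epsilon$-differential privacy.

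I do not expect a genuine obstacle here, as this is a textbook result; the only points requiring a little care are bookkeeping ones. First, although $H\in\bbR^\infty$, the mechanism uses only the finite prefix whose length equals $\dim\vq$, so the density computation takes place in that finite-dimensional space and the unused coordinates are irrelevant. Second, one should note that the bound is a genuine pointwise density bound, so passing from densities to the probabilities of arbitrary measurable sets $E$ is just monotonicity of the integral. (Alternatively, one could phrase the same argument in the randomness-alignment language used later in the paper: the map $H\mapsto H'=H+(\vq(D)-\vq(D'))$ is a measure-preserving bijection sending $\sdo$ onto $\sdop$, and the ratio of Laplace densities along this alignment is bounded by $e^\epsilon$ by the identical triangle-inequality/sensitivity calculation; the direct computation above is the more self-contained route for this standalone theorem.)
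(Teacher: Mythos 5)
Your proof is correct: the pointwise density-ratio computation, the coordinatewise reverse triangle inequality, and the $L_1$-sensitivity bound together give exactly the inequality $\prob{\sde}\leq e^\epsilon\prob{\sdep}$, and your bookkeeping remarks (finite prefix of $H$, integrating a pointwise density bound over measurable $E$) close the minor measure-theoretic gaps. The paper itself does not prove this theorem directly --- it is stated as cited background --- but the closest thing to an in-paper proof is the randomness-alignment treatment in Examples~\ref{ex:lp} and~\ref{ex:raf}: there the alignment $\ali(H)=H'$ with $\eta'_1=\eta_1+q(D)-q(D')$ is shown to have $\cost(\alio)=\frac{\epsilon}{\Delta_{\vq}}\abs{q(D)-q(D')}\leq\epsilon$, and privacy follows from Lemma~\ref{lem:alignmentbound}. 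Your direct argument and the paper's alignment route are the same computation in different packaging: the alignment cost bound is precisely your triangle-inequality-plus-sensitivity step, while Lemma~\ref{lem:alignmentbound} (via Theorem~\ref{thm:main}) supplies the change-of-variables and one-to-one-ness facts that in your write-up are immediate because the shift map is a translation. The direct density-ratio proof is more self-contained and elementary for this standalone theorem; the alignment formulation buys uniformity with the paper's template, which is what lets the same machinery scale to \svt, \noisymax, and their gap variants where a closed-form output density is unavailable. You already note this equivalence in your final parenthetical, so nothing is missing.
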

Other kinds of additive noise distributions that can be used in place of Laplace in Theorem \ref{thm:laplace} include Discrete Laplace \cite{universallyUtilityMaximizingPrivacyMechanisms} (when all query answers are integers or multiples of a common base) and Staircase \cite{staircase}.

In some cases, queries may have additional structure, such as \emph{monotonicity}, that can allow algorithms to provide privacy with less noise (such as one-sided Noisy Max \cite{diffpbook}). 
\begin{definition}[Monotonicity]\label{def:mono}
A list of queries $\vq=(q_1, q_2,\ldots)$ with numerical values is monotonic if for all pair of adjacent databases $D\sim D'$ we have either $\forall i: q_i(D) \leq q_i(D')$, or $\forall i: q_i(D) \geq q_i(D')$.
\end{definition}
Monotonicity is a natural property that is satisfied by  counting queries --  when a person is added to a database, the value of each query either stays the same or increases by 1.

\section{Randomness Alignment}\label{sec:alignment}
To establish that the algorithms we propose are differentially private, we use an idea called \emph{randomness alignment} that previously had been used to prove the privacy of a variety of sophisticated algorithms \cite{diffpbook,lyu2017understanding,diffperm} and incorporated into verification/synthesis tools \cite{lightdp,shadowdp,Aws:synthesis}. While powerful, this technique is also easy to use incorrectly \cite{lyu2017understanding}, as there are many technical conditions that need to be checked. 
In this section, we present results (namely Lemma \ref{lem:alignmentbound}) that significantly simplify this process and make it easy to prove the correctness of our proposed algorithms. 

In general, to prove $\epsilon$-differential privacy for an algorithm $M$, one needs to show $\prob{M(D,H)\in E}\leq e^\epsilon \prob{M(D^\prime,H')\in E}$ for all pairs of adjacent databases $D\sim D^\prime$ and sets of possible outputs $E\subseteq \Omega$. In our notation, this inequality is represented as $\prob{\sde}\leq e^\epsilon \prob{\sdep}$. Establishing such inequalities is often done with the help of a function $\ali$, called a \emph{randomness alignment} (there is a function $\ali$ for every pair $D\sim D^\prime$), that maps noise vectors $H$ into noise vectors $H^\prime$ so that $M(D^\prime,H^\prime)$ produces the same output as  $M(D, H)$. Formally, 
\begin{definition}[Randomness Alignment] Let $M$ be a randomized algorithm. Let $D\sim D^\prime$ be a pair of adjacent databases. A \emph{randomness alignment} is a function $\ali: \bbR^\infty \rightarrow \bbR^\infty$ such that 
\begin{enumerate}[leftmargin=5mm,parsep=0.25em,itemsep=0.25em,topsep=0.5em
]
    \item The alignment does not output invalid noise vectors (e.g., it cannot produce negative numbers for random variables that should have the exponential distribution).
    \item For all $H$ on which $M(D,H)$ terminates, $M(D,H) = M(D',\ali(H))$.
\end{enumerate}
\end{definition}

\begin{example}\label{ex:lp}
Let $D$ be a database that records the salary of every person, which is guaranteed to be between 0 and 100. Let $q(D)$ be the sum of the salaries in $D$. The sensitivity of $q$ is thus $100$. Let $H=(\eta_1,\eta_2,\dots)$ be a vector of independent $\lap(100/\epsilon)$ random variables. The Laplace mechanism outputs $q(D) + \eta_1$ (and ignores the remaining variables in $H$). For every pair of adjacent databases $D\sim D^\prime$, one can define the corresponding randomness alignment $\ali(H)=H^\prime=(\eta^\prime_1,\eta^\prime_2,\dots)$, where $\eta^\prime_1=\eta_1 + q(D)-q(D^\prime)$ and $\eta^\prime_i=\eta_i$ for $i>1$. Note that $q(D)+\eta_1 = q(D^\prime)+\eta^\prime_1$, so the output of $M$ remains the same.
\end{example}

 In practice, $\ali$ is constructed locally (piece by piece) as follows. For each possible output $\omega\in \Omega$, one defines a function $\alio$ that maps noise vectors $H$ into noise vectors $H^\prime$ with the following properties:  if $M(D,H)=\omega$ then $M(D^\prime, H^\prime)=\omega$ (that is, $\alio$ only cares about what it takes to produce the specific output $\omega$). We obtain our randomness alignment $\ali$ in the obvious way by piecing together the $\alio$ as follows:  $\ali(H)=\alios(H)$, where $\omega^*$ is the output of $M(D,H)$. Formally,

\begin{definition}[Local Alignment] Let $M$ be a randomized algorithm. Let $D\sim D^\prime$ be a pair of adjacent databases and $\omega$ a possible output of $M$. A \emph{local alignment} for $M$ is a function $\alio: \sdo \rightarrow \sdop$ (see notation in Table \ref{tab:notation}) such that  for all $ H\in \sdo$, we have $M(D,H) = M(D',\alio(H))$.
\end{definition}

\begin{example}\label{ex:thresh}
Continuing the setup from Example \ref{ex:lp}, consider the mechanism $M_1$ that, on input $D$, outputs $\top$ if $q(D)+\eta_1 \geq 10,000$ (i.e. if the noisy total salary is at least $10,000$) and $\perp$ if $q(D)+\eta_1 < 10,000$. Let $D^\prime$ be a database that differs from $D$ in the presence/absence of one record. Consider the local alignments $\altemplate{,\top}$ and $\altemplate{,\perp}$ defined as follows.  $\altemplate{,\top}(H)=H^\prime=(\eta^\prime_1,\eta^\prime_2,\dots)$ where $\eta^\prime_1 = \eta_1 + 100$ and $\eta^\prime_i=\eta_i$ for $i> 1$; and $\altemplate{,\perp}(H)=H^{\prime\prime}=(\eta^{\prime\prime}_1,\eta^{\prime\prime}_2,\dots)$ where  $\eta^{\prime\prime}_1 = \eta_1 - 100$ and $\eta^{\prime\prime}_i=\eta_i$ for $i> 1$. Clearly, if $M_1(D,H)=\top$ then $M_1(D^\prime, H^\prime)=\top$ and if $M_1(D, H)=\perp$ then $M_1(D^\prime, H^{\prime\prime})=\perp$. We piece these two local alignments together to create a randomness alignment $\ali(H)=H^*=(\eta^*_1,\eta^*_2,\dots)$ where:
\begin{align*}
    \eta^*_1 &= \begin{cases}
        \eta_1 + 100 & \text{ if } M(D,H) =\top \\
        &\text{ (i.e. $q(D)+\eta_1 \geq 10,000$)}\\
        \eta_1 - 100 & \text{ if } M(D,H) = \perp\\
        &\text{ (i.e. $q(D)+\eta_1 < 10,000$)}\\
    \end{cases}\\
    \eta^*_i &= \eta_i \text{ for }  i>1
\end{align*}
\end{example}

\paragraph*{Special properties of alignments.}~
Not all alignments can be used to prove differential privacy. In this section we discuss some additional properties that help prove differential privacy. 
We first make two mild assumptions about the mechanism $M$: (1) it terminates with probability\footnote{That is, for each input $D$,  there might be some random vectors $H$ for which $M$ does not terminate, but the total probability of these vectors is 0, so we can ignore them.} one  and (2) based on the output of $M$, we can determine how many random variables it used. The vast majority of differentially private algorithms in the literature satisfy these properties.

We next define two properties of  a local alignment: whether it is \emph{acyclic} and what its \emph{cost} is.
\begin{definition}[Acyclic]\label{def:acyclic}
Let $M$ be a randomized algorithm. Let $\alio$ be a local alignment for $M$. For any $H=(\eta_1, \eta_2,\dots)$, let $H^\prime=(\eta_1^\prime, \eta_2^\prime, \dots)$ denote $\alio(H)$. We say that $\alio$ is acyclic if there exists a permutation $\pi$ and piecewise differentiable functions $\alpsi{j}$ such that:
\begin{align*}
    \eta^\prime_{\pi(1)} &= \eta_{\pi(1)} + \text{constant that only depends on $D$, $D^\prime$, $\omega$}\\
    \eta^\prime_{\pi(j)} &= \eta_{\pi(j)} + \alpsi{j}(\eta_{\pi(1)},\dots,\eta_{\pi(j-1)}) ~\text{for $j\geq 2$}
\end{align*}
\end{definition}
Essentially, a local alignment $\alio$ is acyclic if there is some ordering of the variables so that $\eta^\prime_j$ is the sum of $\eta_j$ and a function of the variables that came earlier in the ordering. The local alignments  $\altemplate{,\top}$ and  $\altemplate{,\perp}$ from Example \ref{ex:thresh} are both acyclic  (in general, each local alignment function is allowed to have its own specific ordering and differentiable functions $\alpsi{j}$). The pieced-together randomness alignment $\ali$ itself need not be acyclic. 

\begin{definition}[Alignment Cost]\label{def:alignmentcost}
Let $M$ be a randomized algorithm that uses $H$ as its source of randomness. Let $\alio$ be a local alignment for $M$.  
 For any $H=(\eta_1, \eta_2,\dots)$, let $H^\prime=(\eta_1^\prime, \eta_2^\prime, \dots)$ denote $\alio(H)$. Suppose each $\eta_i$ is generated independently from a
distribution $f_i$ with the property that $\ln(\frac{f_i(x)}{f_i(y)}) \leq c_i\abs{x-y}$ for all $x,y$ in the domain of $f_i$  -- this includes the $\lap(\beta)$, $\Exp(\beta)$, $\geo(p)$ distributions along with Discrete Laplace \cite{universallyUtilityMaximizingPrivacyMechanisms} and Staircase \cite{staircase}. Then the cost of $\alio$ is defined as: $\cost(\alio) = \sum_i c_i\abs{\eta_i - \eta^\prime_i}.$    \end{definition}

The following lemma uses those properties to establish that $M$ satisfies $\epsilon$-differential privacy.

\begin{lemma}\label{lem:alignmentbound}  
Let $M$ be a randomized  algorithm with input randomness $H=(\eta_1,\eta_2,\dots)$. If the following conditions are satisfied, then $M$ satisfies $\epsilon$-differential privacy.
\begin{enumerate}[leftmargin=5mm,parsep=0.25em,itemsep=0.25em,topsep=0.5em
]
    \item $M$ terminates with probability 1.
    \item The number of random variables used by $M$ can be determined from its output.
    \item\label{conditioniii} Each $\eta_i$ is generated independently from a distribution  $f_i$ with the property that $\ln(f_i(x)/f_i(y)) \leq c_i\abs{x-y}$ for all $x,y$ in the domain of $f_i$. 
    \item\label{conditioniv} \underline{For every $D\sim D^\prime$ and $\omega$} there exists a local alignment $\alio$ that is acyclic with $\cost(\alio)\leq \epsilon$.
    \item \underline{For each $D\sim D^\prime$} the number of distinct local alignments is countable. That is, the set $\{\alio\mid \omega\in\Omega\}$ is countable (i.e., for many choices of $\omega$ we get the same exact alignment function).
\end{enumerate}
\end{lemma}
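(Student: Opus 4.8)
The plan is to prove the two-sided-free, one-directional inequality $\prob{\sde}\le e^\epsilon\prob{\sdep}$ for an arbitrary adjacent pair $D\sim D'$ and measurable $E\subseteq\Omega$, which is exactly $\epsilon$-differential privacy by the definition in the excerpt. The strategy is the classical ``change of variables plus density-ratio bound'' argument, organized around the local alignments: cut $E$ into countably many pieces on each of which a single acyclic local alignment applies, push forward the noise measure along that alignment (which is volume-preserving because it is acyclic), bound the resulting ratio of densities by the alignment cost, and reassemble.

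First I would set up the partition. Fix $D\sim D'$. For each $\omega\in\Omega$ let $\phi_\omega$ be a choice of acyclic local alignment with $\cost(\phi_\omega)\le\epsilon$ (Condition~\ref{conditioniv}), and let $n_\omega$ be the number of noise variables $M$ consumes when it outputs $\omega$, which is well defined by Condition~2. Since the set of distinct alignment functions is countable (Condition~5) and $n_\omega$ is a non-negative integer, the map $\omega\mapsto(\phi_\omega,n_\omega)$ has countable range; grouping the outcomes of $E$ by the value of this map yields a countable measurable partition $E=\bigsqcup_k E_k$ on which the alignment equals a single function $\phi_k$ and the variable count equals a single integer $n_k$. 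Writing $S^{(k)}=\set{H\mid M(D,H)\in E_k}$ and $S'^{(k)}=\set{H'\mid M(D',H')\in E_k}$, the $S^{(k)}$ partition $\sde$ and the $S'^{(k)}$ partition $\sdep$ (up to the probability-zero non-terminating traces, handled by Condition~1), each is a cylinder event depending only on the first $n_k$ coordinates, and the local-alignment property gives $\phi_k(S^{(k)})\subseteq S'^{(k)}$.

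Next, on each piece I would apply the change-of-variables formula to $H\mapsto H'=\phi_k(H)$. Because $\phi_k$ is acyclic, after relabeling coordinates by its permutation $\pi$ it has the lower-triangular ``shear'' form $\eta'_{\pi(j)}=\eta_{\pi(j)}+g_j(\eta_{\pi(1)},\dots,\eta_{\pi(j-1)})$ with piecewise differentiable $g_j$; such a map is injective and, on each region where the $g_j$ are smooth, has Jacobian determinant $1$, hence is measure-preserving. Therefore $\prob{S'^{(k)}}\ge\prob{\phi_k(S^{(k)})}=\int_{S^{(k)}}\prod_i f_i(\phi_k(H)_i)\,dH$. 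Comparing integrands pointwise and using the log-Lipschitz bound of Condition~\ref{conditioniii}, for every $H\in S^{(k)}$ we have $\prod_i f_i(\eta_i)/\prod_i f_i(\eta'_i)=\prod_i \frac{f_i(\eta_i)}{f_i(\eta'_i)}\le \exp(\sum_i c_i\abs{\eta_i-\eta'_i})=\exp(\cost(\phi_k))\le e^\epsilon$, the last inequality by Condition~\ref{conditioniv}. Hence $\prob{S^{(k)}}=\int_{S^{(k)}}\prod_i f_i(\eta_i)\,dH\le e^\epsilon\int_{S^{(k)}}\prod_i f_i(\phi_k(H)_i)\,dH\le e^\epsilon\prob{S'^{(k)}}$. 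Summing over the countable index $k$ by monotone convergence gives $\prob{\sde}=\sum_k\prob{S^{(k)}}\le e^\epsilon\sum_k\prob{S'^{(k)}}=e^\epsilon\prob{\sdep}$, as desired.

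I expect the main obstacle to be the measure-theoretic bookkeeping around the change of variables rather than the density-ratio estimate: showing the pieces $E_k$ and the sets $S^{(k)}$, $\phi_k(S^{(k)})$ are measurable; reducing the infinite-dimensional product measure to an honest Lebesgue integral over finitely many coordinates so that the Jacobian computation is meaningful (this also needs the observation that a finite-cost acyclic alignment effectively perturbs only the relevant coordinates); and verifying that the unit-Jacobian claim survives the facts that the permutation $\pi$ and the functions $g_j$ vary from piece to piece and that the $g_j$ are only piecewise differentiable, so the change-of-variables formula must be invoked on each smooth region and summed. The cleanest route is to isolate this as a reusable sub-lemma --- ``an injective, acyclic, piecewise-$C^1$ shear of $\bbR^n$ preserves Lebesgue measure'' --- and then the rest of the argument is the routine density-ratio bookkeeping above.
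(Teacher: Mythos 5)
Your proof is correct and is essentially the paper's own argument: partition $E$ by the (countably many) distinct local alignments, use acyclicity to obtain injectivity and a unit Jacobian for each alignment, apply the change-of-variables formula together with the log-Lipschitz density bound to pick up the factor $e^{\cost(\alio)}\le e^{\epsilon}$, and sum over the pieces. The paper merely routes these same steps through a general intermediate result (Theorem~\ref{thm:main}) whose hypotheses it then verifies from acyclicity, so your direct push-forward formulation and the extra refinement of the partition by the number of noise variables used are only cosmetic differences.
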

We defer the proof to Section \ref{subsec:incproofs}.

\begin{example}\label{ex:raf}
Consider the randomness alignment $\ali$ from Example~\ref{ex:lp}. We can define all of the local alignments $\alio$ to be the same function: $\alio(H)=\ali(H)$. Clearly $\cost(\alio) = \sum_{i=0}^\infty \frac{\epsilon}{100}\abs{\eta'_i-\eta_i}\ =\frac{\epsilon}{100}\abs{q(D')-q(D)} \leq 
\epsilon$. For Example~\ref{ex:thresh}, there are two acyclic local alignments $\altemplate{\top}$ and $\altemplate{\bot}$, both have $\cost = 100\cdot \frac{ \epsilon}{100} = \epsilon$. The other conditions in Lemma~\ref{lem:alignmentbound} are trivial to check. Thus both mechanisms satisfy $\epsilon$-differential privacy by Lemma~\ref{lem:alignmentbound}.
\end{example}

\section{Improving Sparse Vector}\label{sec:svt}
In this section we propose an adaptive variant of \svt that can answer more queries than both the original \svt \cite{diffpbook,lyu2017understanding} and the \gapsvt of Wang et al. \cite{shadowdp}. 
We explain how to tune its privacy budget allocation. We further show that using other types of random noise, such as exponential and geometric  random variables, in place of the Laplace, makes the free gap information more accurate at the same cost to privacy.
Finally, we discuss how the free gap information can be used for improved utility of data analysis.

\subsection{\adaptivesvt}\label{sec:sparsegap}

The Sparse Vector Technique (SVT) is designed to solve the following problem in a privacy-preserving way: given a stream of queries (with sensitivity 1), find the first $k$ queries whose answers are larger than a public threshold $T$. This is done by adding noise to the queries and threshold and finding the first $k$ queries whose noisy answers exceed the noisy threshold. Sometimes this procedure creates a feeling of regret -- if these $k$ queries are much larger than the threshold, we could have used more noise (hence consumed less privacy budget) to achieve the same result.
In this section, we show that Sparse Vector can be made adaptive -- so that it will \emph{probably} use more noise (less privacy budget) for the larger queries. This means if the first $k$ queries are very large, it will still have privacy budget left over to find additional queries that are likely to be over the threshold. Adaptive SVT is shown in Algorithm \ref{alg:adaptivesvt}. 


\begin{algorithm}[ht]
\SetKwProg{Fn}{function}{\string:}{}
\SetKwFunction{Test}{\funcadaptivesvt}
\SetKwInOut{Input}{input}
\SetKwInOut{Output}{output}
\DontPrintSemicolon
\Indm
\Input{$\vq$: a list of queries of global sensitivity 1\\
$D$: database, $\epsilon$: privacy budget, $T$: threshold\\
$k$: minimum number of above-threshold\\ \hspace{8pt} queries algorithm is able to output}
\Indp\Indpp
\Fn{\Test{$\vq$, $D$, $T$, $k$, $\epsilon$}}{
$\epsilon_0 \gets\theta \epsilon$;~~ $\epsilon_1 \gets (1-\theta)\epsilon /k$;~~ 
$\epsilon_2 \gets \epsilon_1 / 2$\;\label{line:theta}
$\sigma \gets 2\sqrt{2}/\epsilon_2$\;
$\eta \gets \lap(1/\epsilon_0)$; ~~$\widetilde{T} \gets T + \eta$\; 
$\mathtt{cost} \gets \epsilon_0$\;
\ForEach{$\mathtt{i} \in \set{1, \cdots, \len(\vq)}$}{
$\xi_i \gets \lap(2/\epsilon_2)$;~~ $\tilde{q}_i \gets q_i(D)  + \xi_i$\;\label{line:adaptivesvt_top_noise}
$\eta_i \gets \lap(2/\epsilon_1)$;~~ $\hat{q}_i \gets q_i(D)  + \eta_i $\;\label{line:adaptivesvt_middle_noise}
\uIf{$\tilde{q}_i - \widetilde{T} \geq 2\sigma$ $\label{line:adaptivesvt_top_start}$}
{
\textbf{output:} ($\top$, $\tilde{q}_i\! -\! \widetilde{T}$, $\mathtt{bud\_used}=\epsilon_2$)\;
$\mathtt{cost} \gets \mathtt{cost} + \epsilon_2$ $\label{line:adaptivesvt_top_stop}$\; 
}
\uElseIf{$\hat{q}_i - \widetilde{T} \geq 0$ $\label{line:adaptivesvt_middle_start}$}{
\textbf{output:} ($\top$, $\hat{q}_i\! -\! \widetilde{T}$, $\mathtt{bud\_used}=\epsilon_1$)\;
$\mathtt{cost} \gets \mathtt{cost} + \epsilon_1$ $\label{line:adaptivesvt_middle_stop}$\;
}
\Else{
    \textbf{output:} ($\bot$, $\mathtt{bud\_used=0}$)\;
    }
    \lIf{$\mathtt{cost} > \epsilon - \epsilon_1$}{\textbf{break}}\label{line:adaptivesvt_loopinv}
}
}
\caption{\adaptivesvt. The hyperparameter $\theta\in (0,1)$  controls the budget allocation between threshold and queries.}
\label{alg:adaptivesvt}
\end{algorithm}

The main idea behind this algorithm is that, given a target privacy budget $\epsilon$ and an integer $k$, the algorithm will create three budget parameters: $\epsilon_0$ (budget for the threshold), $\epsilon_1$ (baseline budget for each query) and $\epsilon_2$ (smaller alternative budget for each query, $\epsilon_2<\epsilon_1$). The privacy budget allocation between threshold and queries is controlled by a hyperparameter $\theta\in (0,1)$ on Line \ref{line:theta}. These budget parameters are used as follows. First, the algorithm adds $\lap(1/\epsilon_0)$ noise to the threshold and consumes $\epsilon_0$ of the privacy budget. Then, when a query comes in, the algorithm first adds a lot of noise (i.e., $\lap(2/\epsilon_2)$) to the query. The first ``if'' branch checks if this value is much larger than the noisy threshold (i.e. checks if the gap is $\geq 2\sigma$ for some\footnote{In our algorithm, we set $\sigma$ to be the standard deviation of the noise distribution.} $\sigma$). If so, then it outputs the following three items: (1) $\top$, (2) the noisy gap, and (3) the amount of privacy budget used for this query (which is $\epsilon_2$). The use of alignments will show that failing this ``if'' branch consumes no privacy budget. If the first ``if'' branch fails, then the algorithm adds more moderate noise (i.e., $\lap(2/\epsilon_1)$) to the query answer. If this noisy value is larger than the noisy threshold, the algorithm outputs: (1$^\prime$) $\top$, (2$^\prime$) the noisy gap, and (3$^\prime$) the amount of privacy budget consumed (i.e., $\epsilon_1$). If this ``if'' condition also fails, then the algorithm outputs: (1$^{\prime\prime}$) $\perp$ and (2$^{\prime\prime}$) the privacy budget consumed ($0$ in this case).

To summarize, there is a one-time cost for adding noise to the threshold. Then, for each query, if the top branch succeeds  the privacy budget consumed is $\epsilon_2$, if the middle branch succeeds, the privacy cost is $\epsilon_1$, and if the bottom branch succeeds, there is no additional privacy cost. These properties can be easily seen by focusing on the local alignment -- if $M(D,H)$ produces a certain output, how much does $H$ need to change to get a noise vector $H^\prime$ so that $M(D^\prime,H^\prime)$ returns the same exact output.

\paragraph*{Local alignment.}~
To create a local alignment for each pair $D\sim D^\prime$, let $H=(\eta,\xi_1,\eta_1,\xi_2,\eta_2, \ldots)$ where $\eta$ is the noise added to the threshold $T$, and $\xi_i$ (resp. $\eta_i$) is the noise that should be added to the $i^\text{th}$ query $q_i$ in Line~\ref{line:adaptivesvt_top_noise} (resp. Line~\ref{line:adaptivesvt_middle_noise}), if execution ever reaches that point. We view the output $\omega=(w_1,\dots,w_s)$ as a variable-length sequence where each $w_i$ is either $\perp$ or a nonnegative gap (we omit the $\top$ as it is redundant), together with a $\otag\in\set{0,\epsilon_1,\epsilon_2}$ indicating  which branch $w_i$ is from (and the privacy budget consumed to output $w_i$). 
Let $\io = \set{i\mid \otag(w_i) = \epsilon_2}$ and $\jo = \set{i\mid \otag(w_i) = \epsilon_1}$. That is, $\io$ is the set of indexes where the output is a gap from the top branch, and $\jo$ is the set of indexes where the output is a gap from the middle branch.
For $H\in\sdo$ define $\alio(H)=H'=(\eta',\xi_1',\eta_1',\xi_2', \eta_2', \ldots)$ where
\begin{equation}\label{eq:adaptivesvt_align}
    \begin{aligned}
    \eta' &= \eta + 1, \\
    (\xi_i', ~~~\eta_i') &=\begin{cases}
(\xi_i + 1 + q_i - q_i',~~~\eta_i ), &i \in\cI_\omega\\
(\xi_i,~~~ \eta_i + 1 + q_i - q_i'), &i \in\cJ_\omega\\
(\xi_i, ~~~\eta_i), & \textrm{otherwise}
\end{cases}
    \end{aligned}
\end{equation}
In other words, we add 1 to the noise that was added to the threshold (thus if the noisy $q(D)$ failed a specific branch, the noisy $q(D^\prime)$ will continue to fail it because of the higher noisy threshold). If a noisy $q(D)$ succeeded in a specific branch, we adjust the query's noise so that the noisy version of $q(D^\prime)$ will succeed in that same branch.

\begin{lemma}\label{lem:adaptivesvt_align}
Let $M$ be the \adaptivesvt algorithm.
For all $D\sim D^\prime$ and $\omega$, the functions $\alio$ defined above are acyclic local alignments for $M$. 
Furthermore, for every pair $D\sim D^\prime$, there are countably many distinct $\alio$.
\end{lemma}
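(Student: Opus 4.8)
The plan is to verify the three requirements separately: that each $\alio$ of Equation~\eqref{eq:adaptivesvt_align} is a genuine local alignment (it maps $\sdo$ into $\sdop$ and preserves the output), that it is acyclic in the sense of Definition~\ref{def:acyclic}, and that only countably many distinct $\alio$ arise for a fixed pair $D\sim D'$. Write $q_i$ for $q_i(D)$ and $q_i'$ for $q_i(D')$. The last two requirements are almost immediate. Inspecting \eqref{eq:adaptivesvt_align}, every coordinate of $H'=\alio(H)$ equals the corresponding coordinate of $H$ plus a constant that depends only on $D$, $D'$, and $\omega$ (the threshold noise is shifted by $+1$; for $i\in\io$ the noise $\xi_i$ is shifted by $1+q_i-q_i'$; for $i\in\jo$ the noise $\eta_i$ is shifted by $1+q_i-q_i'$; all other coordinates are unchanged). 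Hence, taking $\pi$ to be the identity permutation and each $\alpsi{j}$ to be the relevant constant (a constant function is piecewise differentiable), $\alio$ is acyclic. Moreover the formula for $\alio$ refers to $\omega$ only through the disjoint finite index sets $\io,\jo\subseteq\{1,\dots,\len(\vq)\}$, and there are only finitely many such pairs, so $\{\alio:\omega\in\Omega\}$ is finite, in particular countable. Finally, all noise variables in Algorithm~\ref{alg:adaptivesvt} are Laplace with support $\bbR$, so $\alio(H)$ is automatically a valid noise vector and lands in $\sdop$ as soon as output preservation is checked.

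For output preservation I would argue by induction on the loop counter $\mathtt{i}$, with the invariant that after processing query $i$, the run $M(D',H')$ has emitted exactly the same outputs as $M(D,H)$ (same $\top/\bot$ symbols, same real-valued gaps, same $\mathtt{bud\_used}$ tags) and holds the same value of the internal variable $\mathtt{cost}$. The threshold step gives $\widetilde{T}' = \widetilde{T}+1$. For a query $i\in\io$, the shift $\xi_i'=\xi_i+1+q_i-q_i'$, $\eta_i'=\eta_i$ makes $\tilde q_i'-\widetilde{T}' = q_i+\xi_i-\widetilde{T} = \tilde q_i-\widetilde{T}\ge 2\sigma$, so the top branch fires again with the identical gap; symmetrically, for $i\in\jo$ the shift of $\eta_i$ makes $\hat q_i'-\widetilde{T}' = \hat q_i-\widetilde{T}$, reproducing the middle branch and its gap. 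The remaining point is that branches that \emph{failed} on $D$ still fail on $D'$: because $\xi_i'=\xi_i$, $\eta_i'=\eta_i$ on the unshifted coordinates and $|q_i-q_i'|\le 1$, we get $\tilde q_i'-\widetilde{T}' = q_i'+\xi_i-\widetilde{T}-1\le q_i+\xi_i-\widetilde{T} = \tilde q_i-\widetilde{T}$ and likewise $\hat q_i'-\widetilde{T}'\le\hat q_i-\widetilde{T}$, so a cutoff not met on $D$ is not met on $D'$ (this covers the $w_i=\bot$ case and the ``top branch fails'' part of the $i\in\jo$ case). In every case the emitted triple and the per-query cost coincide with the $D$-run, so $\mathtt{cost}$ is updated identically and the loop-exit test $\mathtt{cost}>\epsilon-\epsilon_1$ evaluates the same way; hence both runs break (or exhaust the query list) at the same iteration and $M(D',\alio(H))=\omega=M(D,H)$ for every $H\in\sdo$.

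The one delicate point, and the step I would be most careful about, is this ``failed branch stays failed'' direction: it goes through precisely because the alignment raises the threshold noise by $1$, which absorbs the at-most-$1$ change in any query answer, so a noisy query that sat below a cutoff on $D$ still sits below it on $D'$. This is the only place the sensitivity-$1$ hypothesis enters, it is exactly why the constant $+1$ (rather than anything smaller) appears in the threshold part of \eqref{eq:adaptivesvt_align}, and it does not require monotonicity — only $|q_i(D)-q_i(D')|\le 1$. I would also double-check that the gaps are preserved \emph{exactly}, not merely that the same branches fire, since $\omega$ includes the real-valued gaps; the computations above show each gap is invariant because the $+1$ added to $\eta$ cancels against the $+1$ built into the shift of the corresponding query noise. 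The cost bound $\cost(\alio)\le\epsilon$ needed to invoke Lemma~\ref{lem:alignmentbound} is a separate computation and is not part of this lemma.
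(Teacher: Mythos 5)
Your proposal is correct and follows essentially the same route as the paper's proof: the same branch-by-branch case analysis showing that the $+1$ shift of the threshold noise absorbs the sensitivity-$1$ change so failed branches stay failed while the shifted query noises reproduce the successful branches with identical gaps, plus the identity-permutation argument for acyclicity and the observation that $\alio$ depends on $\omega$ only through $(\io,\jo)$; your invariant-based induction on the loop counter (tracking $\mathtt{cost}$) is just a slightly repackaged version of the paper's prefix-plus-termination argument. One small imprecision: since the paper allows $\len(\vq)$ to be infinite, there are countably many (not finitely many) pairs $(\io,\jo)$, which is still all the lemma requires.
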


\begin{proof}
Pick an adjacent pair $D\sim D^\prime$ and an $\omega=(w_1,\dots,w_s)$.
For a given $H=(\eta, \xi_1,\eta_1,\dots)$ such that $M(D,H)=\omega$, let $H^\prime=(\eta^\prime,\xi_1^\prime, \eta_1^\prime, \dots) = \alio(H)$.
Suppose $M(D',H') = \omega' =(w_1', \ldots, w_{t}')$. Our goal is to show $\omega^\prime=\omega$. Choose an $i\leq \min(s,t)$. 
\begin{itemize}[leftmargin=0.5cm,itemsep=0cm,topsep=0.5em,parsep=0.5em]
    \item If $i\in\cI_\omega$, then by \eqref{eq:adaptivesvt_align} we have
\begin{align*}
    \lefteqn{q_i' + \xi_i' - (T + \eta')}\\
    &= q_i' + \xi_i + 1 + q_i - q_i' - (T + \eta + 1) \\
    & = q_i + \xi_i -(T+\eta)  \geq \sigma.
\end{align*}
This means the first ``if'' branch succeeds in both executions and the gaps are the same. Therefore, $w_i' = w_i$.

\item If $i\in\cJ_\omega$, then by \eqref{eq:adaptivesvt_align} we have
\begin{align*}
    \lefteqn{q_i' + \xi_i' - (T + \eta')}\\
    &= q_i' + \xi_i  - (T + \eta + 1)
    = q_i' - 1 + \xi_i -(T+\eta)\\
    &\leq q_i + \xi_i - (T+\eta) < \sigma,\\
    \lefteqn{q_i' + \eta_i' - (T + \eta')}\\
    &= q_i' +\eta_i + 1 + q_i - q_i'  - (T + \eta + 1) \\
    &=  q_i + \eta_i - (T+\eta)  \geq 0.
\end{align*}
The first inequality is due to the sensitivity restriction: $\abs{q_i-q_i'}\leq 1 \implies q_i'-1\leq q_i$. These two equations mean that the first ``if'' branch fails and the second ``if'' branch succeeds in both executions, and the gaps are the same.
Hence $w_i' =  w_i$.
\item If $i\not\in \cI_\omega\cup\cJ_\omega$, then by a similar argument we have 
\begin{align*}
   q_i' + \xi_i' - (T + \eta') &\leq q_i + \xi_i - (T+\eta) < \sigma,\\
    q_i' + \eta_i' - (T + \eta') &\leq q_i + \eta_i - (T+\eta) < 0.
\end{align*}
Hence both executions go to the last ``else'' branch and $w_i' = (\bot, 0) = w_i$.
\end{itemize}
Therefore for all $1\leq i \leq \min(s,t)$, we have $w_i'=w_i$. That is, either $\omega'$ is a prefix of $\omega$, or vice versa. Let $\vq$ be the vector of queries passed to the algorithm and let $\len(\vq)$ be the number of queries it contains (which can be finite or infinity). By the termination condition of Algorithm \ref{alg:adaptivesvt} we have two possibilities.  
\begin{itemize}[leftmargin=0.5cm,itemsep=0cm,topsep=0.5em,parsep=0.5em]
\item $s=\len(\vq)$: in this case there is still enough privacy budget left after answering $s-1$ above-threshold queries, and we must have $t=\len(\vq)$ too because $M(D',H')$ will also run through all the queries (it cannot stop until it has exhausted the privacy budget or hits the end of the query sequence). 
\item $s<\len(\vq)$: in this case the privacy budget is exhausted after outputting $w_s$ and we must also have $t=s$.  
\end{itemize}
Thus $t=s$ and hence $\omega' = \omega$. 
The local alignments are clearly acyclic (e.g., use the identity permutation). 
Note that $\alio$ only depends on $\omega$ through $\io$ and $\jo$ (the sets of queries whose noisy values were larger than the noisy threshold). There are only countably many possibilities for $\io$ and $\jo$ and thus countably many distinct $\alio$.
\end{proof}

\paragraph*{Alignment cost and privacy.}~
Now we establish the alignment cost and the privacy property of Algorithm \ref{alg:adaptivesvt}.
\begin{theorem}\label{thm:adaptivesvt_cost}
The \adaptivesvt satisfies $\epsilon$-differential privacy.
\end{theorem}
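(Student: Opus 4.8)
The plan is to invoke Lemma~\ref{lem:alignmentbound} and check its five conditions for the \adaptivesvt mechanism $M$. Conditions~1--3 are essentially immediate: the loop runs over a (possibly infinite) query stream but the termination check on Line~\ref{line:adaptivesvt_loopinv} guarantees halting after finitely many above-threshold outputs (so $M$ terminates with probability~1); the output explicitly records which branch produced each $w_i$ via the $\otag$ tag, so the number of consumed random variables (one $\eta$ for the threshold, plus $\xi_i$ and $\eta_i$ for each processed query up to the last index) is recoverable from $\omega$; and all the noise variables are Laplace-distributed, which satisfies condition~\ref{conditioniii} with $c_i = 1/\beta_i$ for the appropriate scale $\beta_i$. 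Conditions~4 and~5 are exactly what Lemma~\ref{lem:adaptivesvt_align} established, \emph{except} for the quantitative bound $\cost(\alio)\le\epsilon$, which is the real content of this theorem.

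So the main work is the cost computation. Using the local alignment from Equation~\eqref{eq:adaptivesvt_align}, I would tabulate $\abs{\eta_i - \eta_i'}$ term by term. The threshold noise $\eta$ changes by $1$ and has scale $1/\epsilon_0$, contributing $\epsilon_0$. For each $i\in\io$ (top branch), only $\xi_i$ changes, by $\abs{1 + q_i - q_i'}\le 2$ (using $\abs{q_i - q_i'}\le 1$ from sensitivity~$1$), and $\xi_i$ has scale $2/\epsilon_2$, so its coefficient is $\epsilon_2/2$ and the contribution is at most $\epsilon_2$. Similarly, for each $i\in\jo$ (middle branch), only $\eta_i$ changes by at most $2$, with scale $2/\epsilon_1$, contributing at most $\epsilon_1$. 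All other coordinates are unchanged. Hence
\begin{equation*}
\cost(\alio) \le \epsilon_0 + \abs{\io}\,\epsilon_2 + \abs{\jo}\,\epsilon_1.
\end{equation*}
The crux is then to argue that the loop-termination/break condition forces $\abs{\io}\,\epsilon_2 + \abs{\jo}\,\epsilon_1 \le \epsilon - \epsilon_0$ on every execution. The variable $\mathtt{cost}$ starts at $\epsilon_0$ and is incremented by exactly $\epsilon_2$ on each top-branch output and $\epsilon_1$ on each middle-branch output, so at all times $\mathtt{cost} = \epsilon_0 + (\text{top count so far})\epsilon_2 + (\text{middle count so far})\epsilon_1$; when the final $\mathtt{cost}$ is taken over the whole run this equals $\epsilon_0 + \abs{\io}\epsilon_2 + \abs{\jo}\epsilon_1$. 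The break on Line~\ref{line:adaptivesvt_loopinv} fires as soon as $\mathtt{cost} > \epsilon - \epsilon_1$, i.e.\ it guarantees that \emph{before} each new query is processed we still have $\mathtt{cost}\le \epsilon - \epsilon_1$; since a single query can add at most $\epsilon_1$ (because $\epsilon_2 < \epsilon_1$), after processing it $\mathtt{cost}\le\epsilon$. Therefore the final $\mathtt{cost}\le\epsilon$, which is precisely $\epsilon_0 + \abs{\io}\epsilon_2 + \abs{\jo}\epsilon_1 \le \epsilon$, giving $\cost(\alio)\le\epsilon$.

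I anticipate the main obstacle is the bookkeeping around the loop invariant: one must be careful that the break condition is checked \emph{after} each iteration, so the bound one actually gets is ``$\mathtt{cost}$ at the start of an iteration is $\le\epsilon-\epsilon_1$,'' and then use $\epsilon_2\le\epsilon_1$ to absorb the last increment. A secondary subtlety is that $\abs{1 + q_i - q_i'}$ could in principle be as large as $2$ only when $q_i - q_i' = 1$; if the queries are monotone in the other direction the change is smaller, but the bound $\le 2$ always holds, which is all we need. Once the cost bound is in hand, Lemma~\ref{lem:alignmentbound} applies directly (its conditions~4 and~5 were furnished by Lemma~\ref{lem:adaptivesvt_align}) and yields $\epsilon$-differential privacy, completing the proof.
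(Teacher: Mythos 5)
Your proposal is correct and follows essentially the same route as the paper: the same local alignment from Equation~\eqref{eq:adaptivesvt_align}, the same term-by-term cost bound $\cost(\alio)\le \epsilon_0 + \epsilon_2\abs{\io} + \epsilon_1\abs{\jo}$ via $\abs{1+q_i-q_i'}\le 2$, and the same use of the loop invariant on Line~\ref{line:adaptivesvt_loopinv} (the paper phrases it as $\mathtt{cost}\le \epsilon-\epsilon_1+\max(\epsilon_1,\epsilon_2)=\epsilon$, which is exactly your ``last increment absorbed by $\epsilon_2\le\epsilon_1$'' argument), followed by an appeal to Lemmas~\ref{lem:adaptivesvt_align} and~\ref{lem:alignmentbound}. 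No gaps.
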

\begin{proof}
First we bound the cost of the alignment function defined by Equation~\eqref{eq:adaptivesvt_align}. We use the $\epsilon_0, \epsilon_1, \epsilon_2$ and $\epsilon$ defined in Algorithm \ref{alg:adaptivesvt}. 
From \eqref{eq:adaptivesvt_align} we have 
\begin{align*}
\lefteqn{\cost(\alio)}\\
&=\epsilon_0\abs{\eta'-\eta} + \sum_{i=1}^\infty \left(\frac{\epsilon_2}{2}\abs{\xi_i'-\xi_i} + \frac{\epsilon_1}{2}\abs{\eta_i'-\eta_i}\right)\\
&= \epsilon_0 + \sum_{i\in \cI_\omega} \frac{\epsilon_2}{2}\abs{1\! +\! q_i\! -\! q_i'} + \sum_{i\in \cJ_\omega} \frac{\epsilon_1}{2}\abs{1\! +\! q_i\! -\! q_i'} \\
&\leq  \epsilon_0 + \epsilon_2\abs{\cI_\omega} + \epsilon_1\abs{\cJ_\omega} \leq \epsilon.
\end{align*}
The first inequality is from the assumption on sensitivity: $\abs{1 + q_i - q_i'} \leq 1 + \abs{q_i - q_i'} \leq 2$. The second inequality is from loop invariant on Line~\ref{line:adaptivesvt_loopinv}: $\epsilon_0 + \epsilon_2\abs{\cI_\omega} + \epsilon_1\abs{\cJ_\omega} =\mathtt{cost} \leq \epsilon - \epsilon_1 + \max(\epsilon_1, \epsilon_2) = \epsilon$.

Conditions 1 through 3 of Lemma \ref{lem:alignmentbound} are trivial to check, 4 and 5 follow from Lemma \ref{lem:adaptivesvt_align} and the above bound on cost. Thus Theorem \ref{thm:adaptivesvt_cost} follows from Lemma \ref{lem:alignmentbound}.
\end{proof}

Algorithm \ref{alg:adaptivesvt} can be easily extended with multiple additional ``if'' branches. For simplicity we do not include such variations. In our setting, $\epsilon_2=\epsilon_1/2$  so, theoretically, if queries are very far from the threshold, our adaptive version of Sparse Vector will be able to find twice as many of them as the non-adaptive version. Lastly, if all queries are monotonic queries, then Algorithm \ref{alg:adaptivesvt} can be further improved: we can use $\lap(1/\epsilon_2)$ in Line \ref{line:adaptivesvt_top_noise} and $\lap(1/\epsilon_1)$ noises in Line \ref{line:adaptivesvt_middle_noise} instead.\footnote{In the case of monotonic queries, if $\forall i: q_i \geq q^\prime_i$, then the alignment changes slightly: we set $\eta^\prime=\eta$ (the random variable added to the threshold) and set the adjustment to noise in the winning ``if'' branches to $q_i-q^\prime_i$ instead of $1+q_i-q^\prime_i$  (hence cost terms become $|q_i-q^\prime_i|$ instead of $|1+q_i-q^\prime_i|$). If $\forall i: q_i \leq q^\prime_i$ then we keep the original alignment but in the cost calculation we note that $|1+q_i-q^\prime_i|\leq 1$ (due to the monotonicity and sensitivity).}

\paragraph*{Choice of $\theta$.}~
We can optimize the budget allocation between threshold noise and query noises by following the methodology of \cite{lyu2017understanding}, which is equivalent to minimizing the variance of the gap between a noisy query and the threshold. If the majority of gaps are expected to be returned from the top branch, then we optimize $\var(\tilde{q}_i-
\widetilde{T}) = \frac{2}{\epsilon_0^2} + \frac{8}{\epsilon_2^2} = \frac{2}{\epsilon^2}(\frac{1}{\theta^2} + \frac{16k^2}{(1-\theta)^2})$. This variance attains its minimum value of $2(1+\sqrt[3]{16k^2})^3/\epsilon^2$ when $\theta = 1/(1+\sqrt[3]{16k^2})$. 
If on the other hand the majority of gaps are expected to be returned from the middle branch, then we optimize $\var(\hat{q}_i-
\widetilde{T}) = \frac{2}{\epsilon_0^2} + \frac{8}{\epsilon_1^2} = \frac{2}{\epsilon^2}(\frac{1}{\theta^2} + \frac{4k^2}{(1-\theta)^2})$. In this case, the minimum value is $2(1+\sqrt[3]{4k^2})^3/\epsilon^2$ when $\theta = 1/(1+\sqrt[3]{4k^2})$. 
If all queries are monotone, then the optimal variance further reduces to $2(1+\sqrt[3]{4k^2})^3/\epsilon^2$ in the top branch when $\theta = 1/(1+\sqrt[3]{4k^2})$, and $2(1+\sqrt[3]{k^2})^3/\epsilon^2$ in the middle branch when $\theta = 1/(1+\sqrt[3]{k^2})$.

These allocation strategies also extend to \gapsvt (originally proposed in \cite{shadowdp}). \gapsvt can be obtained by removing the first branch of Algorithm \ref{alg:adaptivesvt} (Line \ref{line:adaptivesvt_top_start} through  \ref{line:adaptivesvt_top_stop}) or setting $\sigma = \infty$. For reference, we show its pseudocode below as Algorithm~\ref{alg:gapsvt}. 
In \cite{shadowdp}, $\theta$ is set to 0.5, which is suboptimal. The optimal value is $\theta = 1/(1+\sqrt[3]{4k^2})$.

\begin{algorithm}[ht]
\SetKwProg{Fn}{function}{\string:}{}
\SetKwFunction{Test}{\funcgapsvt}
\SetKwInOut{Input}{input}
\SetKwInOut{Output}{output}
\DontPrintSemicolon
\Indm
\Input{
same as Algorithm~\ref{alg:adaptivesvt}
}
\Indp\Indpp
\Fn{\Test{$\vq$, $D$, $T$, $k$, $\epsilon$}}{
$\epsilon_0 \gets\theta \epsilon$;~~ $\epsilon_1 \gets (1-\theta)\epsilon /k$;\;
$\eta \gets \lap(1/\epsilon_0)$; ~~$\widetilde{T} \gets T + \eta$\; 
$\mathtt{cost} \gets \epsilon_0$\;
\ForEach{$\mathtt{i} \in \set{1, \cdots, \len(\vq)}$}{
$\eta_i \gets \lap(2/\epsilon_1)$;~~ $\tilde{q}_i \gets q_i(D)  + \eta_i $\;
\uIf{$\tilde{q}_i - \widetilde{T} \geq 0$ }
{
\textbf{output:} ($\top$, $\tilde{q}_i\! -\! \widetilde{T}$, $\mathtt{bud\_used}=\epsilon_1$)\;
$\mathtt{cost} \gets \mathtt{cost} + \epsilon_1$\;
}
\Else{
    \textbf{output:} ($\bot$, $\mathtt{bud\_used=0}$)\;
    }
    \lIf{$\mathtt{cost} > \epsilon - \epsilon_1$ }{\textbf{break}}
}
}
\caption{\gapsvt \cite{shadowdp}}
\label{alg:gapsvt}
\end{algorithm}

\subsection{Using Exponential or Geometric Noise.}
In this section, we show that \adaptivesvt also satisfies differential privacy if the Laplace noise is replaced by the exponential distribution or the geometric distribution (when query answers are guaranteed to be integers). Both of these are one-sided distributions that result in a gap estimate with lower variance (see Table \ref{tab:randomnoise} for information about those distributions). The same result carries over to \gapsvt \cite{shadowdp}. 

\paragraph*{Exponential noise.}~When using random noise from the exponential distribution, we need to subtract off the expected value of the noise from the queries and threshold. The details are shown in Algorithm~\ref{alg:adaptivesvt_expnoise}. Compared with Algorithm~\ref{alg:adaptivesvt}, Algorithm~\ref{alg:adaptivesvt_expnoise} makes the following changes: 
\begin{itemize}
    \item Line~\ref{line:adaptivesvt_bias_exp}: the algorithm stores the expected value of $\Exp(1/\epsilon_0)$, $\Exp(2/\epsilon_1)$, $\Exp(2/\epsilon_2)$ in $b_0, b_1, b_2$ respectively. It also changes the value of $\sigma$ from $2\sqrt{2}/\epsilon_2$, the standard deviation of $\lap(2/\epsilon_2)$, to $2/\epsilon_2$, the standard deviation of $\Exp(2/\epsilon_2)$.
    \item Lines~\ref{line:adaptivesvt_threshold_noise_exp}, \ref{line:adaptivesvt_top_noise_exp} and \ref{line:adaptivesvt_middle_noise_exp}: change Laplace noise to exponential noise of the same scale, and then subtracts the expected values of the noise.
\end{itemize}
If all queries are counting queries, we further reduce the noise to $\Exp(1/\epsilon_2)$ in Line~\ref{line:adaptivesvt_top_noise_exp} and  $\Exp(1/\epsilon_1)$ in Line~\ref{line:adaptivesvt_middle_noise_exp}, and set $b_1=1/\epsilon, b_2=1/\epsilon_2, \sigma=1/\epsilon_2$ in Line~\ref{line:adaptivesvt_threshold_noise_exp}.
\begin{algorithm}[ht]
\SetKwProg{Fn}{function}{\string:}{}
\SetKwFunction{Test}{\funcadaptivesvt}
\SetKwInOut{Input}{input}
\SetKwInOut{Output}{output}
\DontPrintSemicolon
\Indm
\Input{
same as Algorithm~\ref{alg:adaptivesvt}
}
\Indp\Indpp
\Fn{\Test{$\vq$, $D$, $T$, $k$, $\epsilon$}}{
$\epsilon_0 \gets\theta \epsilon$;~~ $\epsilon_1 \gets (1-\theta)\epsilon /k$;~~ $\epsilon_2 \gets \epsilon_1 / 2$\;
$b_0\gets 1/\epsilon_0$;~~$b_1\gets 2/\epsilon_1$;~~$b_2\gets 2/\epsilon_2$;~~$\sigma \gets 2/\epsilon_2$\;\label{line:adaptivesvt_bias_exp}
$\eta \gets \Exp(1/\epsilon_0)$;~~
$\widetilde{T} \gets T + \eta - b_0$\; \label{line:adaptivesvt_threshold_noise_exp}
$\mathtt{cost} \gets \epsilon_0$\;
\ForEach{$\mathtt{i} \in \set{1, \cdots, \len(\vq)}$}{
$\xi_i \gets \Exp(2/\epsilon_2)$;~~$\tilde{q}_i \gets q_i(D) + \xi_i -b_2$\;\label{line:adaptivesvt_top_noise_exp}
$\eta_i \gets \Exp(2/\epsilon_1)$;~~$\hat{q}_i \gets q_i(D) + \eta_i -b_1$\;\label{line:adaptivesvt_middle_noise_exp}
\uIf{$\tilde{q}_i - \widetilde{T} \geq 2\sigma$ }
{
\textbf{output:} ($\top$, $\tilde{q}_i - \widetilde{T}$, $\mathtt{bud\_used}=\epsilon_2$)\;
$\mathtt{cost} \gets \mathtt{cost} + \epsilon_2$ \; 
}
\uElseIf{$\hat{q}_i - \widetilde{T} \geq 0$ }{
\textbf{output:} ($\top$, $\hat{q}_i - \widetilde{T}$, $\mathtt{bud\_used}=\epsilon_1$)\;
$\mathtt{cost} \gets \mathtt{cost} + \epsilon_1$ \;
}
\Else{
    \textbf{output:} ($\bot$, $\mathtt{bud\_used=0}$)\;
    }
    \lIf{$\mathtt{cost} > \epsilon - \epsilon_1$}{\textbf{break}}
}
}
\caption{\adaptivesvt with exponential noise}
\label{alg:adaptivesvt_expnoise}
\end{algorithm}

\paragraph*{Geometric noise.}~
When all queries have integer values (e.g. counting queries), we could utilize geometric noise to make sure that the gap is also an integer.
To use geometric noise we make the following changes to Algorithm~\ref{alg:adaptivesvt_expnoise}:
\begin{itemize}
    \item Line~\ref{line:adaptivesvt_bias_exp}: set $b_0 = 1/(1-e^{-\epsilon_0})$, $b_1 = 1/(1-e^{-\epsilon_1/2})$ and $b_2 = 1/(1-e^{-\epsilon_2/2})$, which are the expected values of $\geo(1-e^{-\epsilon_0})$, $\geo(1-e^{-\epsilon_1/2})$ and $\geo(1-e^{-\epsilon_2/2})$ respectively. 
    Set $\sigma = e^{{\epsilon_2}/{4}}/(e^{{\epsilon_2}/{2}}-1)$, the standard deviation of $\geo(1-e^{-\epsilon_2/2})$.
    \item Line~\ref{line:adaptivesvt_threshold_noise_exp}, \ref{line:adaptivesvt_top_noise_exp} and \ref{line:adaptivesvt_middle_noise_exp}: changes $\Exp(1/\epsilon_0)$, $\Exp(2/\epsilon_2)$ and $\Exp(2/\epsilon_1)$ noise to $\geo(1-e^{-\epsilon_0})$, $\geo(1-e^{-\epsilon_2/2})$ and $\geo(1-e^{-\epsilon_1/2})$ noise respectively.
\end{itemize}
If all queries are counting queries, we further reduce the noise to $\geo(1-e^{-\epsilon_2})$ in Line~\ref{line:adaptivesvt_top_noise_exp} and  $\geo(1-e^{-\epsilon_1})$ in Line~\ref{line:adaptivesvt_middle_noise_exp}, and set $b_1=1/(1-e^{-\epsilon_1}), b_2=1/(1-e^{-\epsilon_2}), \sigma={e^{\epsilon_2/2}}/{(e^{\epsilon_2} - 1)}$ in Line~\ref{line:adaptivesvt_threshold_noise_exp}.

\paragraph*{Local alignment and privacy.}~
The alignment in Equation~\ref{eq:adaptivesvt_align} for the \adaptivesvt with Laplace noise also works for both exponential noise and geometric noise, because $\eta' - \eta = 1$ and $\xi'_i - \xi_i, \eta'_i-\eta_i \in \set{0, 1+q_i-q_i'}$. The value $1+q_i - q_i'$ is always $\geq 0$ and is an integer when $q_i, q_i'$ are integers.

Recall that if $f(x)$ is the probability density function of $\Exp(\beta)$, then $\ln\frac{f(x)}{f(y)}\leq \frac{1}{\beta}\abs{x-y}$. Similarly, if $g(x)$ is the probability mass function of $\geo(p)$, then $\ln\frac{g(x)}{g(y)} = \ln\frac{p(1-p)^x}{p(1-p)^y} \leq -\ln(1-p)\abs{x-y} $. Therefore, our choice of the parameters ensures that the alignment cost is the same as that of Laplace noise, which is bounded by $\epsilon$. Thus both variants are $\epsilon$-differentially private.


\paragraph*{Choice of $\theta$.}~
As before, we choose the $\theta$ that minimizes the variance of the gap to make the result most accurate. Note that exponential distribution has half the variance of the Laplace distribution of the same scale. Thus, when exponential noise is used, the minimum variance of the gap is $(1+\sqrt[3]{16k^2})^3/\epsilon^2$ in the top branch when $\theta = 1/(1+\sqrt[3]{16k^2})$, and $(1+\sqrt[3]{4k^2})^3/\epsilon^2$ in the middle branch when $\theta = 1/(1+\sqrt[3]{4k^2})$.
If all queries are monotone, then the optimal variance further reduces to $(1+\sqrt[3]{4k^2})^3/\epsilon^2$ in the top branch when $\theta = 1/(1+\sqrt[3]{4k^2})$, and $(1+\sqrt[3]{k^2})^3/\epsilon^2$ in the middle branch when $\theta = 1/(1+\sqrt[3]{k^2})$.

Since the geometric distribution is the discrete analogue of the exponential distribution, the above results apply to geometric noise as well. 
For example, when all queries are counting queries and geometric noise is used, then $\var(\hat{q_i} - \widetilde{T})  = \frac{e^{\epsilon_0}}{(e^{\epsilon_0}-1)^2} + \frac{e^{\epsilon_1}}{(e^{\epsilon_1}-1)^2} = \frac{e^{\theta\epsilon}}{(e^{\theta\epsilon}-1)^2} + \frac{e^{(1-\theta)\epsilon/k}}{(e^{(1-\theta)\epsilon/k}-1)^2}$ in the middle branch. The variance of the gap, albeit complicated, is a convex function of $\theta$ on $(0,1)$. We used the LBFGS algorithm \cite{numericaloptimization} from SciPy to find the $\theta$ where the variance is minimum, and found that those values are almost the same as those for exponential noise (See Fig.~\ref{fig:svt_geo_noise_allocation}). Therefore, we can use the budget allocation strategy for exponential noise as the strategy for geometric noise too.

\begin{figure}[!ht]
\centering
\includegraphics[width=0.45\textwidth]{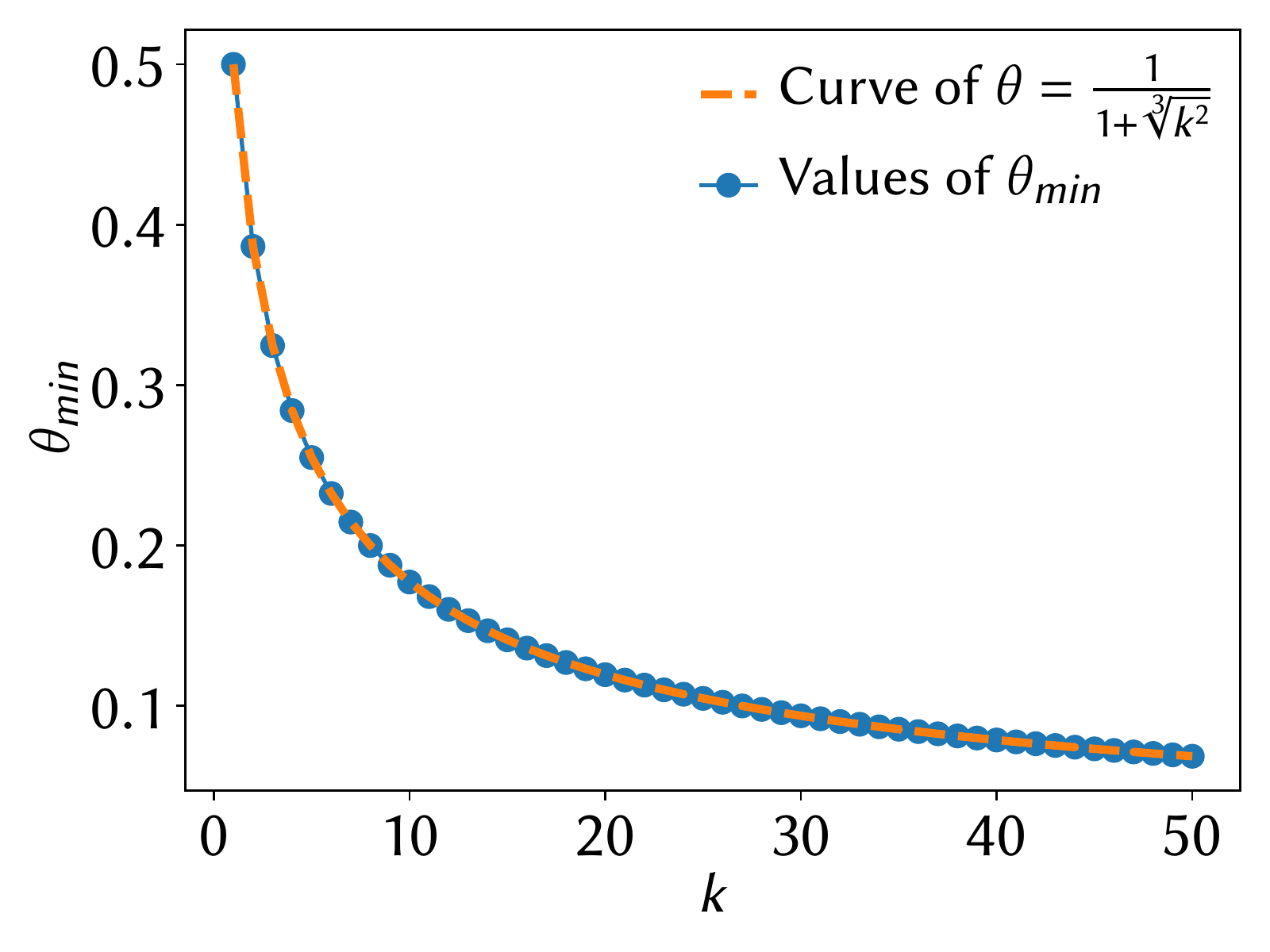}
\caption{The blue dots are values of $\theta_\text{min} =\argmin(\frac{e^{\theta\epsilon}}{(e^{\theta\epsilon}-1)^2} + \frac{e^{(1-\theta)\epsilon/k}}{(e^{(1-\theta)\epsilon/k}-1)^2})$ for $k$ from 1 to 50. The orange curve is the function $\theta={1}/{(1+\sqrt[3]{k^2})}$.
\label{fig:svt_geo_noise_allocation}}
\end{figure}

\subsection{Utilizing Gap Information}\label{sec:gapsvt_measures}
When \gapsvt or \adaptivesvt returns a gap $\gamma_i$ for a query $q_i$, we can add to it the public threshold $T$. This means $\gamma_i+T$ is an estimate of the value of $q_i(D)$. We can ask two questions: how can we improve the accuracy of this estimate and how can we be confident that the true answer $q_i(D)$ is really larger than the threshold $T$?

\paragraph*{Lower confidence interval.}~
Recall that the randomness in the gap in \adaptivesvt (Algorithm \ref{alg:adaptivesvt}) is of the form  $\eta_i - \eta$ where $\eta$ and $\eta_i$ are independent zero mean Laplace variables with scale $1/\epsilon_0$ and $1/\epsilon_*$, where $\epsilon_*$ is either $\epsilon_1$ or $\epsilon_2$, depending on the branch. The random variable $\eta_i-\eta$ has the following lower tail bound:
\begin{lemma}\label{lem:confint}
For any $t\geq 0$ we have 
\[ \bbP(\eta_i - \eta \geq -t ) = \begin{cases}
1 - \frac{\epsilon_0^2 e^{-\epsilon_\ast t} - \epsilon_\ast^2e^{-\epsilon_0t}}{2(\epsilon_0^2 - \epsilon_\ast^2)} & \epsilon_0\neq \epsilon_\ast \\
1 - (\frac{2+\epsilon_0t}{4})e^{-\epsilon_0t} & \epsilon_0 = \epsilon_\ast
\end{cases}
\]
\end{lemma}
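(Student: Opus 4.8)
The plan is to reduce the statement to an upper-tail computation and then evaluate a short convolution integral. Write $Z = \eta_i - \eta$, the difference of two independent mean-zero Laplace variables of scales $1/\epsilon_\ast$ and $1/\epsilon_0$ (cf.\ Table~\ref{tab:randomnoise}). Since $-\eta$ has the same distribution as $\eta$, the variable $Z$ is symmetric about $0$, and it has a density, so for every $t\geq 0$
\[
\bbP(\eta_i - \eta \geq -t) \;=\; 1 - \bbP(Z < -t) \;=\; 1 - \bbP(Z > t).
\]
Hence it suffices to compute the upper tail $\bbP(Z > t)$ and subtract it from $1$.

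To evaluate $\bbP(Z > t)$ I would condition on $\eta$. Let $f$ be the $\lap(1/\epsilon_0)$ density and let $\overline{G}(u) = \bbP(\eta_i > u)$ be the Laplace survival function of scale $1/\epsilon_\ast$, i.e. $\overline{G}(u) = \frac12 e^{-\epsilon_\ast u}$ for $u\geq 0$ and $\overline{G}(u) = 1 - \frac12 e^{\epsilon_\ast u}$ for $u<0$. Then $\bbP(Z>t) = \int_{-\infty}^{\infty} f(x)\,\overline{G}(t+x)\,dx$. Because $t\geq 0$, the exponent of $f$ changes sign at $x=0$ and the exponent inside $\overline{G}$ changes sign at $x=-t$, so the integral splits into the three ranges $x\geq 0$, $-t\leq x<0$, and $x<-t$. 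On each range the integrand is a single exponential (a constant on the middle range when $\epsilon_0=\epsilon_\ast$), so each of the three pieces is elementary. Summing them and collecting the $e^{-\epsilon_\ast t}$ and $e^{-\epsilon_0 t}$ coefficients, the partial-fraction identity $\frac{1}{\epsilon_0+\epsilon_\ast} + \frac{1}{\epsilon_0-\epsilon_\ast} = \frac{2\epsilon_0}{\epsilon_0^2-\epsilon_\ast^2}$ collapses everything to $\bbP(Z>t) = \frac{\epsilon_0^2 e^{-\epsilon_\ast t} - \epsilon_\ast^2 e^{-\epsilon_0 t}}{2(\epsilon_0^2-\epsilon_\ast^2)}$, which is exactly $1$ minus the asserted expression when $\epsilon_0\neq\epsilon_\ast$.

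For the degenerate case $\epsilon_0 = \epsilon_\ast$ I would handle it one of two equivalent ways: either redo the middle-range integral directly, where $\int_{-t}^{0} e^{(\epsilon_0-\epsilon_\ast)x}\,dx$ becomes $\int_{-t}^0 dx = t$ and produces the extra linear factor, yielding $\bbP(Z>t) = \frac{2+\epsilon_0 t}{4}\,e^{-\epsilon_0 t}$; or take the limit $\epsilon_\ast \to \epsilon_0$ in the formula just derived via L'Hôpital in $\epsilon_\ast$, which gives the same answer. A quick sanity check is that at $t=0$ both formulas give $\bbP(Z>0)=\frac12$, consistent with symmetry. The only real work here is the bookkeeping of the three piecewise-exponential integrals and keeping their contributions straight; there is no conceptual obstacle, and the one fine point worth stating explicitly is that the symmetry reduction uses $\bbP(Z=-t)=0$, which is valid because $Z$ is absolutely continuous.
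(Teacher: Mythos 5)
Your proposal is correct and follows essentially the same route as the paper: an elementary piecewise-exponential convolution computation, split at the points where the Laplace exponents change sign, combined with the symmetry of $Z=\eta_i-\eta$ (and the checks you flag, absolute continuity at $-t$ and the value $\tfrac12$ at $t=0$, are consistent with it). The only difference is one of bookkeeping: you compute the tail $\bbP(Z>t)$ in a single layer by integrating the $\lap(1/\epsilon_0)$ density against the survival function of $\eta_i$, whereas the paper first derives the density of $\eta_i-\eta$ for $z\geq 0$ from the same three-range integral, extends it by symmetry, and then integrates it from $-t$ to $\infty$; both collapse to the same three elementary integrals and the same limiting/direct treatment of the case $\epsilon_0=\epsilon_\ast$.
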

For proof see the \appendixref.
For any confidence level, say 95\%, we can use this result to find a number $t_{.95}$ such that 
$\bbP((\eta_i-\eta) \geq -t_{.95}) = .95$. This is a lower confidence bound, so that the true value $q_i(D)$ is $\geq$ our estimated value  $\gamma_i + T$ minus  $t_{.95}$ with probability $0.95$.

\paragraph*{Improving accuracy.}~
To improve accuracy, one can split the privacy budget $\epsilon$ in half. The first half $\epsilon'\equiv \epsilon/2$ can be used to run \gapsvt (or \adaptivesvt) and the second half $\epsilon''\equiv \epsilon/2$ can be used to provide an independent noisy measurement of the selected queries (i.e. if we selected $k$ queries, we add $\lap(k/\epsilon'')$ noise to each one). Denote the $k$ selected queries by $q_1,\dots, q_k$, the noisy gaps by $\gamma_1,\dots, \gamma_k$ and the independent noisy measurements by $\alpha_1,\dots,\alpha_k$.
The noisy estimates can be combined together with the gaps to get improved estimates $\beta_i$ of $q_i(D)$ in the standard way (inverse-weighting by variance): 
$$\beta_i = \left(\frac{\alpha_i}{\var(\alpha_i)} + \frac{\gamma_i+T}{\var(\gamma_i)}\right) \bigg/ \left(\frac{1}{\var(\alpha_i)} + \frac{1}{\var(\gamma_i)}\right).$$
Note that $\frac{\var(\beta_i)}{\var(\alpha_i)} = \frac{\var(\gamma_i)}{\var(\alpha_i)+\var(\gamma_i)} < 1$.

As discussed in Section~\ref{sec:sparsegap}, the optimal budget allocation between threshold noise and query noises \emph{within} \gapsvt is the ratio $1:\sqrt[3]{4k^2}$. Under this setting, we have
$\var(\gamma_i) = 8(1+\sqrt[3]{4k^2})^3 /\epsilon^2$. Also, we know $\var(\alpha_i) = 8k^2/\epsilon^2$. Therefore,
$$
\frac{E( |\beta_i - q_i|^2)}{E( |\alpha_i - q_i|^2)} = 
\frac{ \var(\beta_i)}{ \var(\alpha_i)} = \frac{(1+ \sqrt[3]{4k^2})^3}{ (1+ \sqrt[3]{4k^2})^3 + k^2}.
$$
Since $\lim\limits_{k\to\infty} \frac{(1+ \sqrt[3]{4k^2})^3}{ (1+ \sqrt[3]{4k^2})^3 + k^2} = \frac{4}{5}$, the improvement in accuracy approaches $20\%$ as $k$ increases.
For monotonic queries, the optimal budget allocation \emph{within} \gapsvt is $1:\sqrt[3]{k^2}$. Then we have $\var(\gamma_i)=8(1+\sqrt[3]{k^2})^3/\epsilon^2$ and therefore $\frac{\var(\beta_i)}{\var(\alpha_i)} =\frac{(1+ \sqrt[3]{k^2})^3}{ (1+ \sqrt[3]{k^2})^3 + k^2}$ which is close to 50\% when $k$ is large. 
When the algorithm uses exponential noise, the variance of the gap further reduces to  $\var(\gamma_i)=4(1+\sqrt[3]{k^2})^3/\epsilon^2$ and therefore $\frac{\var(\beta_i)}{\var(\alpha_i)} =\frac{(1+ \sqrt[3]{k^2})^3}{ (1+ \sqrt[3]{k^2})^3 + 2k^2}$ which is close to a 66\% reduction of mean squared errors when $k$ is large.
Our experiments in Section~\ref{sec:experiments} confirm this improvement.

\section{Improving Report Noisy Max}\label{sec:noisymax}
In this section, we present novel variations of the \noisymax mechanism \cite{diffpbook}. Given a list of queries with sensitivity 1, the purpose of \noisymax is to estimate the identity (i.e., index) of the largest query. We show that, in addition to releasing this index, it is possible to release a numerical estimate of the gap between the values of the largest and second largest queries. This extra information comes at no additional cost to privacy, meaning that the original \noisymax mechanism threw away useful information. This result can be generalized to the setting in which one wants to estimate the identities of the top $k$ queries -  we can release (for free) all of the gaps between each top $k$ query and the next best query (i.e., the gap between the best and second best queries, the gap between the second and third best queries, etc). When a user  subsequently asks for a noisy answer to each of the returned queries, we show how the gap information can be used to reduce squared error by up to 66\% (for counting queries).

\subsection{\gaptopk}\label{topklaplace}
Our proposed \gaptopk mechanism is shown in Algorithm \ref{alg:gaptopk} (the function $\arg\max_c$ returns the top c items). We can obtain the classical Noisy Max algorithm \cite{diffpbook} from it by setting $k=1$ and throwing away the gap information (the boxed items on Lines \ref{line:gapdef} and \ref{line:gapreturn}). The \gaptopk algorithm takes as input a sequence of $n$ queries $q_1,\dots, q_n$, each having sensitivity 1. It adds Laplace noise to each query. It returns the indexes $j_1,\dots, j_k$ of the $k$ queries with the largest noisy values in descending order. Furthermore, for each of these top $k$ queries $q_{j_i}$, it releases the noisy gap between the value of $q_{j_i}$ and the value of the next best query. Our key contribution in this section is the observation that these gaps can be released for free. That is, the classical Top-K algorithm, which does not release the gaps, satisfies $\epsilon$-differential privacy. But, our improved version has exactly the same privacy cost yet is strictly better because of the extra information it can release.
\DecMargin{0.5em}
\begin{algorithm}[ht]
\SetKwProg{Fn}{function}{\string:}{}
\SetKwFunction{Test}{\funcgaptopk}
\SetKwInOut{Input}{input}
\DontPrintSemicolon
\Input{$\vq$: a list of $n$ queries of global sensitivity 1\\
$D$: database, $k$: \# of indexes, $\epsilon$: privacy budget}
\Fn{\Test{$\vq$, $D$, $k$, $\epsilon$}}{
\ForEach{$i \in \set{1, \cdots, n}$}{
$\eta_i \gets \lap(2k/\epsilon)$;~~
$\widetilde{q}_i \gets q_i(D)+\eta_i$\;\label{line:gaplaplacenoise}
}
$(j_1,\ldots, j_{k+1}) \gets \arg\max_{k+1}(\widetilde{q}_1,\ldots, \widetilde{q}_n)$\;
\ForEach{$i \in \set{1, \cdots, k }$}{
\fbox{$g_{i} \gets \widetilde{q}_{j_i} - \widetilde{q}_{j_{i+1}}$}~~\tcp{$i^\textrm{th}$ gap}\label{line:gapdef}
}
\Return $((j_1\fbox{, $g_1$}),\ldots, (j_k \fbox{, $g_k$}))$\label{line:gapreturn}
}
\caption{\gaptopk}
\label{alg:gaptopk}
\end{algorithm}
\IncMargin{0.5em}
We emphasize that keeping the noisy gaps hidden does not decrease the privacy cost. Furthermore, this algorithm gives estimates of the pairwise gaps between any pair of the $k$ queries it selects. For example, suppose we are interested in estimating the gap between the $a^{\text{th}}$ largest and $b^\text{th}$ largest queries (where $a < b\leq k$). This is equal to $\sum_{i=a}^{b-1} g_i$ because:
$
    \sum_{i=a}^{b-1} g_i = \sum_{i=a}^{b-1} (\widetilde{q}_{j_i} - \widetilde{q}_{j_{i+1}}) = \widetilde{q}_{j_a} - \widetilde{q}_{j_b}
$
and hence its variance is $\var(\widetilde{q}_{j_a} - \widetilde{q}_{j_b}) = 16k^2/\epsilon^2$.

The original Noisy Top-K mechanism satisfies $\epsilon$-differential privacy. In the special case that all the $q_i$ are counting queries then it satisfies $\epsilon/2$-differential privacy \cite{diffpbook}. We will show the same properties for \gaptopk. 
We prove the privacy property in this section and then in Section \ref{sec:blue} we show how to use this gap information. 

\begin{textAtEnd}[category=implement,end]
However, first it is important to discuss the difference between the theoretical analysis of Noisy Top-K \cite{diffpbook} and its practical implementation on finite-precision computers.

\paragraph*{Implementation issues.}~The analysis of the original \noisymax mechanism assumed the use of \emph{true} Laplace noise (a continuous distribution) so that ties are impossible between the largest and second largest noisy queries \cite{diffpbook}. On finite precision computers, ties are possible (breaking the privacy proof \cite{diffpbook}) with some probability $\delta$. Thus, one would settle for a slightly weaker guarantee called  $(\epsilon,\delta)$-differential  privacy \cite{dworkKMM06:ourdata}. It roughly states that the privacy conditions of pure $\epsilon$-differential privacy fail with probability at most $\delta$. 
In practice, one would discretize the Staircase \cite{staircase} or Laplace distribution so that it outputs a multiple of some base $\gamma$.

In the \appendixref, we show that if there are $n$ queries with sensitivity 1 and discretized $\lap(1/\epsilon)$ noise with base $\gamma$ is added to each of them, the probability of a tie is upper bounded by $\delta=\epsilon \gamma n^2$ (similar calculations can be performed with the Staircase distribution). Thus this is an upper bound on the probability that the differential privacy guarantees will fail. Typically, one would expect $\gamma$ to be close to machine epsilon (e.g., $\approx 2^{-52}$) so the probability of a tie is negligible. In this section we will also analyze our algorithms under the assumption of continuous noise. Thus the privacy guarantees can fail with this negligible probability $\delta$ (hence satisfying $(\epsilon,\delta)$-differential privacy \cite{dworkKMM06:ourdata}).
\end{textAtEnd}

\paragraph*{Local alignment.}~
To prove the privacy of Algorithm \ref{alg:gaptopk}, we need to create a local alignment function for each possible pair $D\sim D^\prime$ and output $\omega$. Note that our mechanism uses precisely $n$ random variables. Let $H=(\eta_1,\eta_2,\dots)$ where $\eta_i$ is the noise that should be added to the $i^\text{th}$ query. We view the output $\omega=((j_1,g_1),\ldots, (j_k, g_k))$ as $k$ pairs where in the $i^{\text{th}}$ pair $(j_i,g_i)$, the first component $j_i$ is the index of $i^\text{th}$ largest noisy query and the second component $g_i$ is the gap in noisy value between the $i^\text{th}$ and $(i+1)^\text{th}$ largest noisy queries. As in prior work \cite{diffpbook}, we will base our analysis on continuous noise so that the probability  of ties among the top $k+1$ noisy queries is $0$. Thus each gap is positive: $g_i>0$.

Let $\io = \set{j_1,\ldots, j_k}$ and $\ioc=\set{1, \ldots, n}\setminus\io$.  I.e., $\io$ is the index set of the $k$ largest noisy queries selected by the algorithm and $\ioc$ is the index set of all unselected queries. 
For $H\in\sdo$ define $\alio(H) = H'=(\eta'_1, \eta'_2, \ldots)$ as
\begin{equation}\label{eq:gaptopk_align}
\eta_i' =\begin{cases}
\eta_i &i\in\ioc\\
\eta_i\! +\! q_i\!-\!q'_i\!+\!\max\limits_{l\in\ioc}(q_l'\!+\!\eta_l)\!-\! \max\limits_{l\in\ioc}(q_l\!+\!\eta_l)  &i\in\io
\end{cases}    
\end{equation}
The idea behind this local alignment is simple: we want to keep the noise of the losing queries the same (when the input is $D$ or its neighbor $D^\prime)$. But, for each of the $k$ selected queries, we want to align its noise to make sure it wins by the same amount when the input is $D$ or its neighbor $D'$.

\begin{lemma}\label{lem:topk_align}
Let $M$ be the \gaptopk algorithm.
For all $D\sim D^\prime$ and $\omega$, the functions $\alio$ defined above are acyclic local alignments for $M$. 
Furthermore, for every pair $D\sim D^\prime$, there are countably many distinct $\alio$.
\end{lemma}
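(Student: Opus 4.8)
The plan is to verify the two claims of Lemma \ref{lem:topk_align} in turn: first that each $\alio$ is a genuine local alignment (i.e., $M(D,H) = M(D',\alio(H))$ for $H \in \sdo$), then that it is acyclic, and finally that there are only countably many distinct such functions as $\omega$ ranges over $\Omega$.

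\medskip\noindent\textbf{Correctness of the alignment.} Fix $D\sim D'$ and $\omega = ((j_1,g_1),\dots,(j_k,g_k))$, and take $H \in \sdo$, so $M(D,H) = \omega$. Write $H' = \alio(H)$ and let $c = \max_{l\in\ioc}(q'_l+\eta_l) - \max_{l\in\ioc}(q_l+\eta_l)$, which is the common additive shift applied to the noise of every selected query (on top of the $q_i - q'_i$ correction). The key computation is to track the noisy query values $\widetilde q'_i := q'_i + \eta'_i$ under $D'$. For $i\in\ioc$ we get $\widetilde q'_i = q'_i + \eta_i$; for $i\in\io$ we get $\widetilde q'_i = q'_i + \eta_i + q_i - q'_i + c = (q_i+\eta_i) + c = \widetilde q_i + c$. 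So under $D'$, every selected query's noisy value is shifted by exactly $c$ relative to its value under $D$, while the maximum of the unselected noisy values is also shifted by exactly $c$ (by the definition of $c$: $\max_{l\in\ioc}\widetilde q'_l = \max_{l\in\ioc}(q'_l+\eta_l) = \max_{l\in\ioc}(q_l+\eta_l) + c = \max_{l\in\ioc}\widetilde q_l + c$). I would then argue: (i) since $j_1,\dots,j_k$ were the top-$k$ under $D$ and beat every unselected index, adding the same constant $c$ to all of them and to the unselected maximum preserves the fact that each $j_i$ beats every unselected index (the pairwise comparisons among $j_1,\dots,j_k$ are shifted by $c-c=0$, hence unchanged; comparisons of a $j_i$ against an unselected index are shifted by $c-c = 0$ as well since the unselected one is compared to $\max_{l\in\ioc}\widetilde q_l$... — more precisely, each $\widetilde q_{j_i} > \widetilde q_l$ for all $l\in\ioc$ becomes $\widetilde q_{j_i}+c > \widetilde q_l + c$, still true since both sides shift equally); so the same set $\{j_1,\dots,j_k\}$ is selected under $D'$ and in the same order; (ii) the gaps $g_i = \widetilde q_{j_i} - \widetilde q_{j_{i+1}}$ are preserved for $i < k$ because both terms shift by $c$, and the $k$-th gap $g_k = \widetilde q_{j_k} - \widetilde q_{j_{k+1}}$ is preserved because $j_{k+1}$ is the largest unselected index, so $\widetilde q'_{j_{k+1}} = \widetilde q_{j_{k+1}} + c$ as well. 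Hence $M(D',H') = \omega$. I would also note $H'$ is a valid noise vector (Laplace noise has full support $\bbR$, so condition 1 of the alignment definition is automatic) and that $M$ uses exactly $n$ variables regardless of output, so condition 2 of Lemma \ref{lem:alignmentbound} holds trivially.

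\medskip\noindent\textbf{Acyclicity.} Here I would exhibit a permutation witnessing Definition \ref{def:acyclic}. Order the variables so that all indices in $\ioc$ come first, then those in $\io$. For $i\in\ioc$, $\eta'_i = \eta_i$ (shift is the zero function). For the first index in $\io$, the shift $q_i - q'_i + \max_{l\in\ioc}(q'_l+\eta_l) - \max_{l\in\ioc}(q_l+\eta_l)$ depends only on $\{\eta_l : l\in\ioc\}$, i.e., on variables earlier in the ordering (and on $D,D',\omega$); the same formula works for all subsequent $\io$ indices. Each such shift is piecewise differentiable (a max of affine functions, composed with subtraction), so the acyclicity conditions are met — in fact the shift for every $\io$ index is the same function of the $\ioc$-variables, which is even stronger than required. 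I would point out there is a subtlety if $\ioc$ is empty ($k = n$), but then the mechanism would need $k+1$ largest among only $n=k$ queries, which the algorithm's $\arg\max_{k+1}$ precludes, so $\ioc\neq\emptyset$ always.

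\medskip\noindent\textbf{Countability.} Exactly as in the proof of Lemma \ref{lem:adaptivesvt_align}: the function $\alio$ depends on $\omega$ only through the index set $\io = \{j_1,\dots,j_k\}\subseteq\{1,\dots,n\}$ (the gaps $g_i$ and even the ordering of the $j_i$ do not enter formula \eqref{eq:gaptopk_align}). There are only finitely many (at most $\binom{n}{k}$, or countably many if $n=\infty$) such subsets, hence countably many distinct $\alio$.

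\medskip\noindent\textbf{Main obstacle.} The routine parts are the arithmetic verifications above; the one place that needs care is step (i) of the correctness argument — confirming that applying the \emph{same} additive constant $c$ to all selected noisy values and to the unselected maximum really does preserve the entire output (both the ordered index tuple and all $k$ gaps), including the boundary comparison involving $j_{k+1}$. The cleanest way to handle this is the observation that the shift is uniform: $\widetilde q'_i = \widetilde q_i + c$ for every $i \in \io \cup \{j_{k+1}\}$ and $\max_{l\in\ioc}\widetilde q'_l = \max_{l\in\ioc}\widetilde q_l + c$, so every comparison that determined $\omega$ under $D$ is shifted by $c$ on both sides and hence its truth value is unchanged. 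Once that uniformity is made explicit, the rest is immediate.
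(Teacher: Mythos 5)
Your proof is correct and follows essentially the same route as the paper's: the noise on unselected indices is left unchanged, the selected noisy values are shown to shift uniformly by $c=\max_{l\in\ioc}(q'_l+\eta_l)-\max_{l\in\ioc}(q_l+\eta_l)$ so that all margins (hence the ordered indices and the gaps) are preserved, and acyclicity and countability are handled exactly as in the paper via an ordering that puts $\ioc$ before $\io$ and the observation that $\alio$ depends on $\omega$ only through $\io$. One small slip: the claim $\widetilde q'_{j_{k+1}}=\widetilde q_{j_{k+1}}+c$ need not hold for the specific index $j_{k+1}$ (the argmax over $\ioc$ may move to a different index under $D'$); what is true, and is all the output needs since $j_{k+1}$'s identity is not released, is that $\max_{l\in\ioc}\widetilde q'_l=\max_{l\in\ioc}\widetilde q_l+c$, which is how the paper's proof states the comparison for the $k$-th gap.
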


\begin{proof}
Given $D\sim D^\prime$ and $\omega=((j_1,g_1),\ldots, (j_k,g_k))$,
for any $H=(\eta_1,\eta_2,\dots)$ such that $M(D,H)=\omega$, let $H^\prime=(\eta_1^\prime, \eta_2^\prime, \dots) = \alio(H)$.
We show that $M(D', H') = \omega$. Since $\alio$ is identity on components $i\in \ioc$, we have  $\max\limits_{l\in\ioc}(q'_l+\eta'_l) = \max\limits_{l\in\ioc}(q'_l+\eta_l)$. From \eqref{eq:gaptopk_align} we have that when $i\in\io$, 
\begin{align*}
  \lefteqn{  \eta'_i =  \eta_i + q_i-q'_i+\max\limits_{l\in\ioc}(q_l'+\eta_l)-\max\limits_{l\in\ioc}(q_l+\eta_l) }\\
  &\implies q'_i + \eta'_i - \max\limits_{l\in\ioc}(q_l'+\eta_l) =  q_i + \eta_i- \max\limits_{l\in\ioc}(q_l+\eta_l) \\
   &\implies q'_i + \eta'_i - \max_{l\in\ioc}(q_l'+\eta'_l) =  q_i + \eta_i- \max_{l\in\ioc}(q_l+\eta_l)
\end{align*}
So, for the $k^\text{th}$ selected query,
\begin{align*}
\lefteqn{(q'_{j_k} +\eta'_{j_k}) - \max\limits_{l\in\ioc}(q'_l+\eta_l')}\\
&= (q_{j_k} + \eta_{j_k}) - \max\limits_{l\in\ioc}(q_l+\eta_l) = g_{k} > 0
\end{align*}
This means on $D'$ the noisy query with index $j_k$ is larger than the best of the unselected noisy queries by the same margin as it is on $D$. Furthermore, for all $1\leq i < k$, we have
\begin{align*}
\lefteqn{(q'_{j_i} +\eta'_{j_i}) - (q'_{j_{i+1}}+\eta'_{j_{i+1}})}\\ 
&=(q_{j_i} +\eta_{j_i} + \max\limits_{l\in\ioc}(q_l'+\eta_l)- \max\limits_{l\in\ioc}(q_l+\eta_l)) \\
&\hspace{5mm}- (q_{j_{i+1}}+\eta_{j_{i+1}} + \max\limits_{l\in\ioc}(q_l'+\eta_l)- \max\limits_{l\in\ioc}(q_l+\eta_l))\\
&=(q_{j_i} +\eta_{j_i}) - (q_{j_{i+1}}+\eta_{j_{i+1}}) = g_{i}>0.
\end{align*}
In other words, the query with index $j_i$ is still the $i^\text{th}$ largest query on $D'$ by the same margin.
Therefore, $M(D', H') = \omega$.

The local alignments are clearly acyclic (any permutation that puts $\ioc$ before $\io$ does the trick). Also, note that $\alio$ only depends on $\omega$ through $\io $ (the indexes of the $k$ largest queries). There are $n$ queries and therefore $\binom{n}{k} = \frac{n!}{(n-k)!k!}$ distinct $\alio$.
\end{proof}

 
\paragraph*{Alignment cost and privacy.}~
To establish the alignment cost, we need the following lemma.

\begin{lemma}\label{lem:maxsen}
Let $(x_1,\ldots, x_m), (x'_1,\ldots, x'_m)\in \bbR^m$ be such that $\forall  i, \abs{x_i-x'_i}\leq 1$. Then $\abs{\max_i(x_i) - \max_i (x'_i)}\leq 1$.
\end{lemma}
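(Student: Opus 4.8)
The plan is to use the standard ``one-sided'' argument for the max of perturbed sequences, applied symmetrically. First I would let $i^\ast = \arg\max_i x_i$, so that $\max_i x_i = x_{i^\ast}$. The hypothesis $|x_{i^\ast} - x'_{i^\ast}| \leq 1$ gives $x'_{i^\ast} \geq x_{i^\ast} - 1$, and since $\max_i x'_i \geq x'_{i^\ast}$, we conclude $\max_i x'_i \geq \max_i x_i - 1$, i.e.\ $\max_i x_i - \max_i x'_i \leq 1$.

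Next I would repeat the argument with the roles of $(x_i)$ and $(x'_i)$ swapped: letting $i^{\ast\ast} = \arg\max_i x'_i$, the bound $|x_{i^{\ast\ast}} - x'_{i^{\ast\ast}}| \leq 1$ yields $x_{i^{\ast\ast}} \geq x'_{i^{\ast\ast}} - 1$, hence $\max_i x_i \geq \max_i x'_i - 1$, i.e.\ $\max_i x'_i - \max_i x_i \leq 1$. Combining the two inequalities gives $|\max_i x_i - \max_i x'_i| \leq 1$, as required.

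There is essentially no obstacle here; the only point worth stating carefully is that the maxima are attained (which holds since $m$ is finite), so that $i^\ast$ and $i^{\ast\ast}$ are well defined. One could also phrase the whole thing in one line via the elementary inequality $|\max_i x_i - \max_i x'_i| \leq \max_i |x_i - x'_i|$, but I would prefer to spell out the two symmetric steps for clarity, since this lemma is invoked to bound $|\max_{l\in\ioc}(q_l'+\eta_l) - \max_{l\in\ioc}(q_l+\eta_l)|$ in the alignment-cost computation for \gaptopk, where $x_l = q_l + \eta_l$, $x'_l = q'_l + \eta_l$, and the sensitivity-$1$ assumption supplies the hypothesis $|x_l - x'_l| = |q_l - q'_l| \leq 1$.
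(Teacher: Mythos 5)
Your proof is correct and is essentially the paper's argument: the paper picks maximizing indices $s$ and $t$ for the two sequences, assumes WLOG $x_s \geq x'_t$, and chains $x_s \geq x'_t \geq x'_s \geq x_s - 1$, which is exactly one of your two symmetric directions (the other following by swapping roles, as you do explicitly instead of invoking WLOG). No gap; the application to the alignment-cost bound you mention is also how the paper uses the lemma.
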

\begin{proof}
Let $s$ be an index that maximizes $x_i$ and let $t$ be an index that maximizes $x'_i$. Without loss of generality, assume $x_{s} \geq x'_{t}$. Then
$x_s \geq x'_t \geq x'_s \geq x_s-1$. Hence
$ \abs{x_s - x'_t} = x_s - x'_t \leq x_s - (x_s-1) = 1.$
\end{proof}

\begin{theorem}\label{thm:gaptopk_cost}
The \gaptopk mechanism satisfies $\epsilon$-differential privacy. If all of the queries are counting queries, then it satisfies $\epsilon/2$-differential privacy.
\end{theorem}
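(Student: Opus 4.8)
The plan is to invoke Lemma~\ref{lem:alignmentbound} using the local alignments $\alio$ constructed in Equation~\eqref{eq:gaptopk_align}, whose acyclicity and countability were already established in Lemma~\ref{lem:topk_align}. Conditions 1 and 2 of Lemma~\ref{lem:alignmentbound} are immediate: the algorithm always terminates and always uses exactly $n$ random variables. Condition 3 holds because each $\eta_i\sim\lap(2k/\epsilon)$, so $\ln(f_i(x)/f_i(y))\le c_i|x-y|$ with $c_i = \epsilon/(2k)$. Condition 5 was shown in Lemma~\ref{lem:topk_align}. So the only thing left is Condition 4, the cost bound $\cost(\alio)\le\epsilon$.

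The main work is therefore bounding $\cost(\alio) = \sum_i c_i|\eta_i-\eta_i'|$. Since the alignment is the identity on the losing indices $i\in\ioc$, only the $k$ selected indices $i\in\io$ contribute. For each such $i$, from \eqref{eq:gaptopk_align} we have
\[
|\eta_i - \eta_i'| = \Bigl| q_i - q_i' + \max_{l\in\ioc}(q_l'+\eta_l) - \max_{l\in\ioc}(q_l+\eta_l)\Bigr| \le |q_i-q_i'| + \Bigl|\max_{l\in\ioc}(q_l'+\eta_l) - \max_{l\in\ioc}(q_l+\eta_l)\Bigr|.
\]
The first term is at most $1$ by the sensitivity assumption. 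The second term is at most $1$ as well: applying Lemma~\ref{lem:maxsen} to the vectors $(q_l+\eta_l)_{l\in\ioc}$ and $(q_l'+\eta_l)_{l\in\ioc}$, whose coordinates differ by $|q_l-q_l'|\le 1$. Hence $|\eta_i-\eta_i'|\le 2$ for each $i\in\io$, and
\[
\cost(\alio) = \sum_{i\in\io} \frac{\epsilon}{2k}\,|\eta_i-\eta_i'| \le |\io|\cdot\frac{\epsilon}{2k}\cdot 2 = k\cdot\frac{\epsilon}{k} = \epsilon,
\]
using $|\io| = k$. By Lemma~\ref{lem:alignmentbound}, \gaptopk satisfies $\epsilon$-differential privacy.

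For the counting-query case, the point is that monotonicity lets us drop one of the two contributions to $|\eta_i-\eta_i'|$. When all queries are counting queries, adjacency means either $q_l\le q_l'$ for all $l$ or $q_l\ge q_l'$ for all $l$. In the first case $q_i-q_i'\le 0$ and $\max_{l\in\ioc}(q_l'+\eta_l)-\max_{l\in\ioc}(q_l+\eta_l)\ge 0$, and both have absolute value at most $1$; in the second case the signs flip. Either way the two terms inside the absolute value in the displayed bound for $|\eta_i-\eta_i'|$ have opposite signs (or one is zero), so $|\eta_i-\eta_i'| = |q_i-q_i' + (\max'-\max)| \le \max(|q_i-q_i'|, |\max'-\max|) \le 1$ rather than $2$. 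This halves the cost to $\sum_{i\in\io}(\epsilon/2k)\cdot 1 = k\cdot\epsilon/(2k) = \epsilon/2$, giving $\epsilon/2$-differential privacy. One subtlety to check is that the same alignment \eqref{eq:gaptopk_align} is still a valid local alignment in this case (it is, since Lemma~\ref{lem:topk_align} did not use any sensitivity or monotonicity assumption in proving $M(D',H')=\omega$) — so the only change is in the cost computation, as just described.

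\textbf{Main obstacle.} The only nontrivial step is the bound on the max-difference term, but this is exactly what Lemma~\ref{lem:maxsen} is for, so it is routine. The slightly delicate point worth spelling out carefully is the counting-query improvement: one must verify that the two terms composing $\eta_i'-\eta_i$ genuinely have opposite signs under monotonicity so that their sum, rather than the sum of their magnitudes, is what gets bounded by $1$.
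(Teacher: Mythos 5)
Your proposal is correct and follows essentially the same route as the paper: it verifies the conditions of Lemma~\ref{lem:alignmentbound} with the alignment of Equation~\eqref{eq:gaptopk_align}, bounds each $|\eta_i-\eta_i'|$ by $2$ via the sensitivity assumption and Lemma~\ref{lem:maxsen}, and for monotonic (counting) queries observes that the two contributions have opposite signs so the bound improves to $1$, which is exactly the paper's argument.
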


\begin{proof} First we bound the cost of the alignment function defined in \eqref{eq:gaptopk_align}. Recall that
the $\eta_i$'s are independent $\lap(2k/\epsilon)$ random variables. By Definition \ref{def:alignmentcost}
\begin{align*}
\cost&(\alio) =\sum_{i=1}^\infty\abs{\eta_i'-\eta_i} \frac{\epsilon}{2k}\\
=&\frac{\epsilon}{2k}\sum_{i\in\cI_\omega}\abs{q_i-q_i'+ \max\limits_{l\in\ioc}(q_l'+\eta_l)-  \max\limits_{l\in\ioc}(q_l+\eta_l)}.
\end{align*}
By the global sensitivity assumption we have $\abs{q_i-q_i'}\leq 1$.
Apply Lemma \ref{lem:maxsen} to the vectors $(q_l+\eta_l)_{l\in\ioc}$ and $(q'_l+\eta_l)_{l\in\ioc}$, we have $\abs{ \max\limits_{l\in\ioc}(q_l'+\eta_l)-  \max\limits_{l\in\ioc}(q_l+\eta_l)}\leq 1$. Therefore,
\begin{align*}
  &\abs{q_i-q_i'+ \max\limits_{l\in\ioc}(q_l'+\eta_l)-  \max\limits_{l\in\ioc}(q_l+\eta_l)} \\
  \leq& \abs{q_i-q_i'} + \abs{ \max\limits_{l\in\ioc}(q_l'+\eta_l)-  \max\limits_{l\in\ioc}(q_l+\eta_l)} \leq 1+1 =2.  
\end{align*}
Furthermore, if $\vq$ is monotonic, then 
\begin{itemize}[leftmargin = 5mm]
    \item either $\forall i: q_i \leq q'_i$ in which case $q_{i}-q'_{i}\in [-1,0]$ and $\max\limits_{l\in\cI_\omega^c}(q_l'+\eta_l)-  \max\limits_{l\in\cI_\omega^c}(q_l+\eta_l)\in [0,1]$,
    \item or $\forall i: q_i \geq q'_i$ in which case $q_{i}-q'_{i}\in [0,1]$ and $\max\limits_{l\in\cI_\omega^c}(q_l'+\eta_l)-  \max\limits_{l\in\cI_\omega^c}(q_l+\eta_l)\in [-1,0]$.
\end{itemize}
In both cases we have $q_{i}-q_{i}'+ \max\limits_{l\in\cI_\omega^c}(q_l'+\eta_l)-  \max\limits_{l\in\cI_\omega^c}(q_l+\eta_l) \in [-1,1]$ so $|q_{i}-q_{i}'+ \max\limits_{l\in\cI_\omega^c}(q_l'+\eta_l)-  \max\limits_{l\in\cI_\omega^c}(q_l+\eta_l)|\leq 1$.
Therefore,
\begin{align*}
\lefteqn{\cost(\alio)}\\
&=\frac{\epsilon}{2k}\!\sum_{i\in\cI_\omega}\abs{q_i\!-\!q_i'+ \max\limits_{l\in\ioc}(q_l'\!+\!\eta_l)-  \max\limits_{l\in\ioc}(q_l\!+\!\eta_l)}\\
&\leq \frac{\epsilon}{2k}\sum_{i\in\cI_\omega} 2 \quad (\textrm{or } \frac{\epsilon}{2k}\sum_{i\in\cI_\omega} 1 \textrm{ if } \vq \textrm{ is monotonic})\\
&= \frac{\epsilon}{2k}\cdot 2 \abs{\io} \quad (\textrm{or } \frac{\epsilon}{2k}\cdot \abs{\io} \textrm{ if } \vq \textrm{ is monotonic}) \\
&= \epsilon \quad (\textrm{or } {\epsilon}/{2} \textrm{ if } \vq \textrm{ is monotonic}).
\end{align*}
Conditions 1 through 3 of Lemma \ref{lem:alignmentbound} are trivial to check, 4 and 5 follow from Lemma \ref{lem:topk_align} and the above bound on cost. Therefore, Theorem \ref{thm:gaptopk_cost} follows from Lemma \ref{lem:alignmentbound}.
\end{proof}

\subsection{Noisy Top-K with Exponential Noise}
The original noisy max algorithm also works with one-sided exponential noise \cite{diffpbook} with smaller variance than the Laplace noise.  In this subsection, we show that this result extends to the \gaptopk algorithm by simply changing Line~\ref{line:gaplaplacenoise} of Algorithm~\ref{alg:gaptopk} to 
$\eta_i \gets \Exp(2k/\epsilon)$
and privacy is maintained while the variance of the gap decreases. However, the proof relies on a different local alignment.

\paragraph*{Local alignment.}~
The alignment used in Section \ref{topklaplace} will not work here because it might set our noise random variables to negative numbers. Thus we need a new alignment.
As before, let $H=(\eta_1,\eta_2,\dots)$ where $\eta_i$ is the noise that should be added to the $i^\text{th}$ query. We view the output $\omega=((j_1,g_1),\ldots, (j_k, g_k))$ as $k$ pairs where in the $i^{\text{th}}$ pair $(j_i,g_i)$, the first component $j_i$ is the index of $i^\text{th}$ largest noisy query and the second component $g_i>0$ is the gap in noisy value between the $i^\text{th}$ and $(i+1)^\text{th}$ largest noisy queries. 

Let $\io = \set{j_1,\ldots, j_k}$ and $\ioc=\set{1, \ldots, n}\setminus\io$.  I.e., $\io$ is the index set of the $k$ largest noisy queries selected by the algorithm and $\ioc$ is the index set of all unselected queries. For $H\in\sdo$ we will use $\alio(H) = H'=(\eta'_1, \eta'_2, \ldots)$ to refer to the aligned noise. In order to define the alignment, we need the following quantities:

%
\begingroup
\allowdisplaybreaks
\begin{align*}
    s &= \argmax_{l\in\ioc} (q_l+\eta_l), \quad  t = \argmax_{l\in\ioc} (q'_l+\eta_l)\\
    i_{*} &= \argmin\limits_{i\in\io}\set{q_i-q'_i+\max\limits_{l\in\ioc}(q_l'+\eta_l)- \max\limits_{l\in\ioc}(q_l+\eta_l)}\\
    &=\argmin\limits_{i\in\io}\set{q_i-q'_i}\quad\text{ (the other terms have no $i$)}\\
    \delta_*&=\min\limits_{i\in\io}\set{q_i-q'_i+\max\limits_{l\in\ioc}(q_l'+\eta_l)- \max\limits_{l\in\ioc}(q_l+\eta_l)}\\
    &=q_{i_*} - q'_{i_*} + (q'_t+\eta_t) - (q_s+\eta_s)
\end{align*}
\endgroup
Note that $i_*\in\io$ and $s, t\in\ioc$. 
We define the alignment according to the value of $\delta_*$. When $\delta_* \geq 0$, we use the same alignment as in the Laplace version of the algorithm:
\begin{equation}\label{eq:gaptopk_align_expo}
\eta_i' =\begin{cases}
\eta_i &i\in\ioc\\
\eta_i + q_i-q'_i+(q_t'+\eta_t)-(q_s+\eta_s)  &i\in\io
\end{cases}    
\end{equation}
When $\delta_* < 0$ that alignment could result in a negative $\eta^\prime_i$ for some $i\in \io$. 
So instead, we take that alignment and further add the positive quantity $-\delta_*$ in several places so
that overall we are adding nonnegative numbers to each $\eta_i$ to get $\eta^\prime_i$ (this ensures that $\eta^\prime_i$ is nonnegative for each $i$).
Thus, when $\delta_* < 0$, define
\begin{align}\label{eq:gaptopk_align_exp2}
 \eta'_i  &=\begin{cases}
\eta_i &i\in\ioc\setminus\set{t}\\
\eta_i - \delta_* & i = t\\
 \eta_i\! +\! q_i\!-\!q'_i\!+\!(q_t'\!+\!\eta_t)\!-\! (q_s\!+\!\eta_s)\! -\! \delta_* 
&i\in\io
\end{cases}\nonumber \\
&=\begin{cases}
\eta_i &i\in\ioc\setminus\set{t}\\
\eta_i - \delta_* & i = t\\
 \eta_i + q_i-q'_i-q_{i_*} + q'_{i_*}
&i\in\io
\end{cases}    
\end{align}

\begin{lemma}\label{lem:topk_exp_align}
Let $M$ be the \gaptopk algorithm that uses exponential noise.
For all $D\sim D^\prime$ and $\omega$, the functions $\alio$ defined above are acyclic local alignments for $M$. 
Furthermore, for every pair $D\sim D^\prime$, there are countably many distinct $\alio$.
\end{lemma}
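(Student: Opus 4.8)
\textbf{Proof proposal for Lemma \ref{lem:topk_exp_align}.}
The plan is to mirror the structure of the proof of Lemma \ref{lem:topk_align}, but to handle the two cases $\delta_*\ge 0$ and $\delta_*<0$ separately, since the alignment is defined piecewise according to the sign of $\delta_*$. Fix $D\sim D'$ and an output $\omega=((j_1,g_1),\dots,(j_k,g_k))$, and take any $H$ with $M(D,H)=\omega$; set $H'=\alio(H)$. The goal in each case is to verify that (i) $H'$ is a valid noise vector, i.e.\ every $\eta'_i\ge 0$ (this is the new requirement that was vacuous in the Laplace case), and (ii) $M(D',H')=\omega$, i.e.\ the same $k$ indices win in the same order and with the same gaps.

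For the case $\delta_*\ge 0$, I would observe that the alignment \eqref{eq:gaptopk_align_expo} is \emph{formally identical} to the Laplace alignment \eqref{eq:gaptopk_align}, once one notes that $\max_{l\in\ioc}(q'_l+\eta_l)=q'_t+\eta_t$ and $\max_{l\in\ioc}(q_l+\eta_l)=q_s+\eta_s$ by definition of $s,t$. Hence the output-preservation argument is word-for-word that of Lemma \ref{lem:topk_align}: on the losing coordinates the noise is unchanged so $\max_{l\in\ioc}(q'_l+\eta'_l)=q'_t+\eta_t$, and on each winning coordinate $i\in\io$ the quantity $q'_i+\eta'_i-\max_{l\in\ioc}(q'_l+\eta'_l)$ equals $q_i+\eta_i-\max_{l\in\ioc}(q_l+\eta_l)$, so all the pairwise margins $g_1,\dots,g_k$ and the final margin over the losers are preserved, giving $M(D',H')=\omega$. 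Validity is where $\delta_*\ge 0$ is used: for $i\in\io$, $\eta'_i-\eta_i=q_i-q'_i+(q'_t+\eta_t)-(q_s+\eta_s)\ge \delta_*\ge 0$ by definition of $\delta_*$ as the minimum of exactly this expression over $i\in\io$; losing coordinates are untouched; hence $\eta'_i\ge\eta_i\ge 0$ for all $i$.

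For the case $\delta_*<0$, the second form of \eqref{eq:gaptopk_align_exp2} shows that the increments become: $0$ on $\ioc\setminus\{t\}$, $-\delta_*=|\delta_*|>0$ on coordinate $t$, and $q_i-q'_i-q_{i_*}+q'_{i_*}$ on $i\in\io$. For validity, the $t$-increment is $\ge 0$ by assumption, and for $i\in\io$ the increment is $\ge 0$ precisely because $i_*$ was chosen to minimize $q_i-q'_i$ over $\io$, so $q_i-q'_i\ge q_{i_*}-q'_{i_*}$. For output preservation I would note that adding the constant $-\delta_*$ to coordinate $t$ and to every coordinate in $\io$ shifts all of $q'_t+\eta'_t$ and $\{q'_i+\eta'_i\}_{i\in\io}$ by the same amount $-\delta_*$; since coordinate $t$ still attains $\max_{l\in\ioc}(q'_l+\eta'_l)$ — the other losers are unchanged and $t$ only increased — the quantity $q'_i+\eta'_i-\max_{l\in\ioc}(q'_l+\eta'_l)$ for $i\in\io$ is the same as under the $\delta_*\ge 0$ alignment, and likewise the differences $(q'_{j_i}+\eta'_{j_i})-(q'_{j_{i+1}}+\eta'_{j_{i+1}})$ are unchanged because the common shift cancels. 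Thus the winners, their order, and all gaps $g_i$ are identical to the first case, so $M(D',H')=\omega$. Finally, acyclicity holds in both cases by the same reasoning as in Lemma \ref{lem:topk_align}: order the coordinates with $\ioc$ first, then $\io$; each $\eta'_i$ is $\eta_i$ plus a function of $D,D',\omega$ and of the earlier (losing) noise variables only. Countability of the $\{\alio\}$ again follows because $\alio$ depends on $\omega$ only through $\io$ (and, via $s,t,i_*,\delta_*$, through quantities that are themselves determined by $\io$ and $H$), so there are at most $\binom{n}{k}$ distinct alignment functions.

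The main obstacle I anticipate is the bookkeeping in the $\delta_*<0$ case: one must be careful that after adding $-\delta_*$ to coordinate $t$, coordinate $t$ genuinely remains the argmax over $\ioc$ (it does, since $t$ was already the argmax of $q'_l+\eta_l$ and its value only went up while the others stayed fixed), and that no winning index $j_i$ is accidentally overtaken by a loser or reshuffled relative to another winner — both follow from the "common shift" observation, but it is the step that needs the cleanest statement. Everything else is a direct translation of the Laplace proof; the cost bound (needed for privacy) is deferred to the subsequent theorem and is not part of this lemma.
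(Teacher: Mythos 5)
Your argument tracks the paper's proof step for step: nonnegativity of the aligned noise from the minimality defining $\delta_*$ and $i_*$, the $\delta_*\ge 0$ case handled exactly as in the Laplace proof of Lemma \ref{lem:topk_align}, and the $\delta_*<0$ case via the observation that $t$ remains $\argmax_{l\in\ioc}(q'_l+\eta'_l)$ so that the added quantity $-\delta_*$ cancels in every margin, which is precisely the paper's chain of implications showing $q'_i+\eta'_i-\max_{l\in\ioc}(q'_l+\eta'_l)=q_i+\eta_i-\max_{l\in\ioc}(q_l+\eta_l)$. The acyclicity and countability (at most $\binom{n}{k}$ alignments) claims are treated at the same level of detail as in the paper, which simply declares them clear, so the proposal is correct in the same sense and by essentially the same route.
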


\begin{proof}
First we show that $\forall i, \eta_i' \geq \eta_i$.
Recall that $\delta_* = \min_{i\in\io}\set{q_i-q'_i+(q_t'+\eta_t)- (q_s+\eta_s)}$. When $\delta_*\geq 0$, we have $\eta'_i - \eta_i = q_i-q'_i+(q_t'+\eta_t)- (q_s+\eta_s) \geq \delta_* \geq 0 $ for all $i\in \io$. When $\delta_* < 0$, we have $\eta'_t - \eta_t = -\delta_* >0$ and $\eta'_i - \eta_i = (q_i - q_i') - (q_{i_*} - q'_{i_*}) \geq 0$ for $i\in \io$. Therefore, all $\eta'_i$ are non-negative.

The proof that \eqref{eq:gaptopk_align_expo} is an alignment when $\delta_*\geq 0$ is the same  as in the Laplace noise case.
To show that \eqref{eq:gaptopk_align_exp2} is an alignment when $\delta_* < 0$, first note that since  $t = \argmax_{l\in\ioc} (q'_l+\eta_l)$ and $-\delta_* > 0$, we have $t = \argmax_{l\in\ioc} (q'_l+\eta'_l)$. Then from \eqref{eq:gaptopk_align_exp2}, we have that when $i\in\io$, 
\begin{align*}
  \lefteqn{  \eta'_i =  \eta_i + q_i-q'_i+(q_t'+\eta_t)- (q_s+\eta_s) - \delta_* }\\
  &\implies q'_i + \eta'_i - (q_t'+(\eta_t - \delta_*)) =  q_i + \eta_i- (q_s+\eta_s) \\
   &\implies q'_i + \eta'_i - (q_t'+\eta'_t) =  q_i + \eta_i- (q_s+\eta_s) \\
   &\implies q'_i + \eta'_i - \max_{l\in\ioc}(q_l'+\eta'_l) =  q_i + \eta_i- \max_{l\in\ioc}(q_l+\eta_l)
\end{align*}
Thus by a similar argument in Lemma~\ref{lem:topk_align}, all relative orders among the $k$ largest noisy queries and their associated gaps are preserved. The facts that  $\alio$ is acyclic and there are finitely many $\alio$ are clear.
\end{proof}

\paragraph*{Alignment cost and privacy.}~
Recall from Table~\ref{tab:randomnoise} that if $f(x)$ is the density of $\Exp(\beta)$, then for $x,y\geq 0$,
$\ln\frac{f(x)}{f(y)} = \frac{y-x}{\beta} \leq \frac{\abs{y-x}}{\beta}.$
When $\delta_*\geq 0$, the alignment cost computation  is the same as with the Laplace version of the algorithm. When $\delta_* < 0$, we have 
\begin{align*}
\cost&(\alio) =\sum_{i=1}^\infty\abs{\eta_i'-\eta_i} \frac{\epsilon}{2k}\\
=&\frac{\epsilon}{2k}\abs{\delta_*} +\frac{\epsilon}{2k}\sum_{i\in\cI_\omega}\abs{ q_i-q'_i-q_{i_*} + q'_{i_*}}\\
=&\frac{\epsilon}{2k}\abs{\delta_*} +\frac{\epsilon}{2k}\sum_{i\in\cI_\omega\setminus\set{i_*}}\abs{ q_i-q'_i-q_{i_*} + q'_{i_*}}.
\end{align*}
and note that there are $k-1$ terms in the right-most summation.
It is clear that $\abs{ q_i-q'_i-q_{i_*} + q'_{i_*}} \leq 2$ (or 1 if $\vq$ is monotone). Also, it is shown in the proof of Theorem~\ref{thm:gaptopk_cost} that 
\begin{align*}
    \lefteqn{\abs{\delta_*} = \abs{q_{i_*}-q'_{i_*}+\max\limits_{l\in\ioc}(q_l'+\eta_l)- \max\limits_{l\in\ioc}(q_l+\eta_l)}} \\
    &\leq 2 \textrm{ (or $1$ if $\vq$ is monotone).}
\end{align*}
Therefore,
\begin{align*}
\lefteqn{\cost(\alio)}\\
&=\frac{\epsilon}{2k}\abs{\delta_*} +\frac{\epsilon}{2k}\sum_{i\in\cI_\omega\setminus\set{i_*}}\abs{ q_i-q'_i-q_{i_*} + q'_{i_*}}\\
&\text{ (note that there are $1 + (k-1)$ terms above)}\\
&\leq \frac{\epsilon}{2k}\cdot 2 \cdot k \quad (\textrm{or } \frac{\epsilon}{2k}\cdot 1\cdot k \textrm{ if } \vq \textrm{ is monotonic}) \\
&= \epsilon \quad (\textrm{or } {\epsilon}/{2} \textrm{ if } \vq \textrm{ is monotonic}).
\end{align*}
Thus, Algorithm~\ref{alg:gaptopk} with $\Exp(2k/\epsilon)$ noise on Line~\ref{line:gaplaplacenoise} instead of $\lap(2k/\epsilon)$ noise, satisfies $\epsilon$-differential privacy. If all of the queries are counting queries, then it satisfies $\epsilon/2$-differential privacy.


\subsection{Utilizing Gap Information}\label{sec:blue}
Let us consider one scenario that takes advantage of the gap information. Suppose a data analyst is interested in the identities and values of the top $k$ queries. A typical approach would be to split the privacy budget $\epsilon$ in half -- use $\epsilon/2$ of the budget to identify the top $k$ queries using \gaptopk. The remaining $\epsilon/2$ budget is evenly divided between the selected queries and is used to  obtain noisy measurements  (i.e. add $\lap(2k/\epsilon)$ noise to each query answer). These measurements will have variance $\sigma^2=8k^2/\epsilon^2$. In this section we show how to use the gap information from \gaptopk and postprocessing to improve the accuracy of these measurements.

\paragraph*{Problem statement.}~
Let $q_1$,$\dots$, $q_k$ be the true answers of the top $k$ queries that are selected by Algorithm \ref{alg:gaptopk}. Let $\alpha_1,\ldots, \alpha_k$ be their noisy measurements. Let $g_1,\ldots,g_{k-1}$ be the noisy gaps between $q_1,\ldots,q_k$ that are obtained from Algorithm \ref{alg:gaptopk} for free.
Then $\alpha_i=q_i + \xi_i$ where each $\xi_i$ is a $\lap(2k/\epsilon)$ random variable and $g_i =  q_i  + \eta_i - q_{i+1} - \eta_{i+1}$ where each $\eta_i$ is a $\lap(4k/\epsilon)$ random variable, or a $\lap(2k/\epsilon)$ random variable if the query list is monotonic (recall the mechanism was run with a privacy budget of $\epsilon/2$).  Our goal is then to find the \emph{best linear unbiased estimate} (BLUE) \cite{lehmann1998} $\beta_i$ of $q_i$ in terms of the measurements $\alpha_i$ and gap information $g_i$.

\begin{theorem}\label{thm:blue} With notations as above let $\vq=[q_1, \ldots, q_k]^T$, $\valpha=[\alpha_1, \ldots, \alpha_k]^T$ and $\vg=[g_1, \ldots, g_{k-1}]^T$. Suppose the ratio $\var(\xi_i) : \var(\eta_i)$ is equal to $1:\lambda$.
Then the BLUE of $\vq$ is $\vbeta = \frac{1}{(1+\lambda)k} (X\valpha + Y\vg)$ where
\[
X = \begin{bmatrix}
1+\lambda k & 1 & \cdots & 1 \\
1 & 1+\lambda k & \cdots & 1 \\
\vdots & \vdots & \ddots & \vdots \\
1 & 1 & \cdots & 1+\lambda k \\
\end{bmatrix}_{k\times k}
\]
\[
Y = \left(\begin{bmatrix}
k-1 & k-2 & \cdots &1 \\
k-1 & k-2 & \cdots &1 \\
k-1 & k-2 & \cdots &1 \\
\vdots & \vdots & \ddots &\vdots\\
k-1 & k-2 & \cdots & 1 \\
\end{bmatrix} - 
\begin{bmatrix}
0 & 0 & \cdots &0 \\
k & 0 & \cdots &0 \\
k & k & \cdots &0 \\
\vdots & \vdots & \ddots &0 \\
k & k & \cdots &k \\
\end{bmatrix}
\right)_{k\times (k-1)}
\]
\end{theorem}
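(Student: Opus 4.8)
The plan is to set up the problem as a standard generalized least squares (GLS) / best linear unbiased estimation problem and then verify the claimed closed form by checking the normal equations. First I would write the observation model in matrix form. Stacking the measurements $\valpha$ and gaps $\vg$, we have a linear model $\vz = A\vq + \vnoise$ where $\vz = [\valpha^T, \vg^T]^T \in \bbR^{2k-1}$, the design matrix $A\in\bbR^{(2k-1)\times k}$ is $\left[\begin{smallmatrix} I_k \\ B \end{smallmatrix}\right]$ with $B\in\bbR^{(k-1)\times k}$ the discrete-difference matrix ($B_{i,i}=1$, $B_{i,i+1}=-1$, else $0$), and the noise $\vnoise = [\xi_1,\dots,\xi_k,\eta_1-\eta_2,\dots,\eta_{k-1}-\eta_k]^T$ has mean zero. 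The covariance $\Sigma = \var(\vnoise)$ is block diagonal: the $\valpha$ block is $\var(\xi_i)\,I_k$, and the $\vg$ block is $\var(\eta_i)\,(B B^T)$, i.e. $\var(\eta_i)$ times the tridiagonal matrix with $2$ on the diagonal and $-1$ off-diagonal (the $\eta_i$ are independent, so the gaps are correlated in the familiar way). Normalizing so $\var(\xi_i)=1$ and $\var(\eta_i)=\lambda$, the Gauss–Markov theorem says the BLUE is $\vbeta = (A^T\Sigma^{-1}A)^{-1}A^T\Sigma^{-1}\vz$.

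Next I would carry out the GLS computation. We have $A^T\Sigma^{-1}A = I_k + \tfrac1\lambda B^T(BB^T)^{-1}B$ and $A^T\Sigma^{-1}\vz = \valpha + \tfrac1\lambda B^T(BB^T)^{-1}\vg$. The matrix $P := B^T(BB^T)^{-1}B$ is the orthogonal projection onto $\mathrm{row}(B) = \{v : \sum_i v_i = 0\}$ (the row space of the difference operator is the mean-zero subspace), so $P = I_k - \tfrac1k \mathbf{1}\mathbf{1}^T$. Hence $A^T\Sigma^{-1}A = I_k + \tfrac1\lambda\big(I_k - \tfrac1k\mathbf{1}\mathbf{1}^T\big) = \tfrac{1+\lambda}{\lambda}I_k - \tfrac1{\lambda k}\mathbf{1}\mathbf{1}^T$. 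This is a rank-one perturbation of a multiple of the identity, so its inverse is computed by Sherman–Morrison: $(A^T\Sigma^{-1}A)^{-1} = \tfrac{\lambda}{1+\lambda}\big(I_k + \tfrac1{\lambda k}\mathbf{1}\mathbf{1}^T\big)$, which, after multiplying through, one recognizes as $\tfrac{1}{(1+\lambda)k}X$ with $X$ exactly the claimed matrix (diagonal $1+\lambda k$, off-diagonal $1$). Then $\vbeta = \tfrac{1}{(1+\lambda)k}X\big(\valpha + \tfrac1\lambda B^T(BB^T)^{-1}\vg\big)$, so it remains to show $\tfrac1\lambda X B^T(BB^T)^{-1} = Y$, equivalently $X B^T(BB^T)^{-1} = \lambda Y$. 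Here $(BB^T)^{-1}$ is the standard inverse of the second-difference matrix, whose $(i,j)$ entry is $\min(i,j) - \tfrac{ij}{k}$ (this is the discrete Green's function / covariance of a pinned random walk), and $B^T(BB^T)^{-1}$ is $k\times(k-1)$. A direct but mechanical multiplication, using that $X = \lambda k\, I_k + \mathbf{1}\mathbf{1}^T$ and that $\mathbf{1}^T B^T = \mathbf{0}^T$ (columns of $B^T$ sum to zero), collapses $X B^T(BB^T)^{-1} = \lambda k\, B^T(BB^T)^{-1}$, and then one checks entrywise that $k\,B^T(BB^T)^{-1}$ equals the displayed $Y$ (the two-term expression in the statement is precisely $k\cdot(\min(i,j)-ij/k)$ restructured: the first matrix contributes $k-j$ per column $j$ and the second contributes the staircase $k\cdot\mathbf{1}[i>j]$, whose difference is $k\min(i,j)-ij$).

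An alternative, perhaps cleaner, route avoids the pseudo-inverse of $BB^T$ entirely: reparametrize by noting that the gap observations only constrain differences $q_i - q_{i+1}$, recursively solve $g_i \approx q_i - q_{i+1}$ to express each $q_j$ in terms of $q_1$ and partial sums of the $g_i$, substitute into the $\valpha$ model, and then do an ordinary (scalar-weighted) least squares in the single remaining free parameter. This makes the $1$-dimensional nature of the "hard part" transparent but produces the same normal equations; I would likely present the projection-matrix argument as the main proof and relegate the entrywise verification of $Y$ to a short computation, since that bookkeeping is the only genuinely tedious step.

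The main obstacle is purely computational rather than conceptual: identifying $B^T(BB^T)^{-1}B = I - \tfrac1k\mathbf{1}\mathbf{1}^T$ and, especially, verifying that the asymmetric two-matrix expression given for $Y$ really equals $k\,B^T(BB^T)^{-1}$ entrywise. I would handle this by writing $(BB^T)^{-1}_{ij} = \min(i,j) - ij/k$, confirming it via $BB^T \cdot (\text{that matrix}) = I$ in one line, and then reading off the columns of $k\,B^T(BB^T)^{-1}$. Everything else — the Gauss–Markov setup, Sherman–Morrison, and the recognition of $X$ — is routine, and the variance-ratio hypothesis $\var(\xi_i):\var(\eta_i) = 1:\lambda$ enters only through the scalar $\lambda$ in $\Sigma^{-1}$, so no further case analysis (Laplace vs. exponential, monotonic vs. not) is needed at this stage.
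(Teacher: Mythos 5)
Your proposal is correct and arrives at exactly the paper's matrices, but by a genuinely different route. The paper never stacks the observations or invokes Aitken/GLS: it writes the estimator as $X\valpha+Y\vg$, uses unbiasedness to eliminate $X=I_k-YN$ (your $B$ is the paper's $N$), sets $\vtheta=N(\vxi-\veta)$, and minimizes $E(\norm{\vxi-Y\vtheta}^2)$ by matrix calculus, giving the normal equation $Y=E(\vxi\vtheta^T)\,E(\vtheta\vtheta^T)^{-1}=\frac{1}{1+\lambda}N^T(NN^T)^{-1}$; only afterwards is $X$ recovered from the constraint. So the paper gets $Y$ first and $X$ for free, whereas you get $X$ first---via the projection identity $B^T(BB^T)^{-1}B=I_k-\frac{1}{k}J$ (with $J$ the all-ones matrix) and Sherman--Morrison, neither of which appears in the paper---and then reduce the $\vg$-coefficient to $kB^T(BB^T)^{-1}$ using $\mathbf{1}^TB^T=0$. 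Both arguments bottom out in the same explicit inverse of the second-difference matrix, $(NN^T)^{-1}_{ij}=\min(i,j)-ij/k$ (the paper writes the equivalent $(k-i)j/k$ on the lower triangle and asserts it ``can be directly computed''). Your appeal to Gauss--Markov is legitimate for the paper's criterion, since minimizing the covariance in the Loewner order also minimizes $E(\norm{\vbeta-\vq}^2)$; your route gives a cleaner derivation of $X$, while the paper's is self-contained.

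One parenthetical slip in your final bookkeeping: the displayed $Y$ is $k\times(k-1)$ and not symmetric, so it is not $k\min(i,j)-ij$ entrywise (for $k=4$, the $(2,1)$ entry of $Y$ is $(k-1)-k=-1$, whereas $k\min(2,1)-2\cdot 1=2$). What your reduction actually requires, and what is true, is $Y=kB^T(BB^T)^{-1}$, whose $(i,j)$ entry is the first difference in $i$ of the Green's function: $k\bigl[\min(i,j)-\tfrac{ij}{k}\bigr]-k\bigl[\min(i-1,j)-\tfrac{(i-1)j}{k}\bigr]=k\cdot\mathbf{1}_{[i\le j]}-j=(k-j)-k\cdot\mathbf{1}_{[i>j]}$, which is exactly the two-matrix display. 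So you conflated $(BB^T)^{-1}$ with $B^T(BB^T)^{-1}$; the mechanical entrywise check you propose would catch and fix this, and the rest of the argument stands.
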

For proof, see the \appendixref.
Even though this is a matrix multiplication, it is easy to see that it translates into the following algorithm that is linear in $k$:
\begin{enumerate}[leftmargin=0.5cm,itemsep=0cm,topsep=0.5em,parsep=0.5em]
    \item Compute $\alpha = \sum_{i=1}^k \alpha_i$ and $p = \sum_{i=1}^{k-1} (k-i)g_i$.
    \item Set $p_0=0$. For $i=1, \ldots, k-1$ compute the prefix sum $p_i= \sum_{j=1}^i g_j = p_{i-1} + g_i$.
    \item For $i=1, \ldots, k$, set $\beta_i = 
    (\alpha + \lambda k\alpha_i + p - kp_{i-1})/(1+\lambda)k$.
\end{enumerate}

Now, each $\beta_i$ is an estimate of the value of $q_i$. How does it compare to the direct measurement $\alpha_i$ (which has variance $\sigma^2 = 8k^2/\epsilon^2$)? The following result compares the expected error of $\beta_i$ (which used the direct measurements and the gap information) with the expected error of using only the direct measurements (i.e., $\alpha_i$ only).

\begin{corollary}\label{cor:blue_var}
For all $i=1,\ldots, k$, we have \[\frac{E(\abs{\beta_i-q_i}^2)}{E(\abs{\alpha_i-q_i}^2)} 
= \frac{1+\lambda k}{k+\lambda k } = \frac{\var(\xi_i) + k\var(\eta_i)}{k(\var(\xi_i)+\var(\eta_i))}.\]
\end{corollary}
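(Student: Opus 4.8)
The plan is to compute $\var(\beta_i)$ directly from the explicit formula $\beta_i = (\alpha + \lambda k \alpha_i + p - kp_{i-1})/((1+\lambda)k)$ given right after Theorem~\ref{thm:blue}, and to compare it with $\var(\alpha_i) = \var(\xi_i)$. Since $\beta_i$ is unbiased (it is the BLUE), $E(|\beta_i - q_i|^2) = \var(\beta_i)$ and likewise $E(|\alpha_i - q_i|^2) = \var(\alpha_i)$, so the corollary reduces to a pure variance computation. I would first substitute $\alpha = \sum_j \alpha_j$, $p = \sum_{j=1}^{k-1}(k-j)g_j$, and $p_{i-1} = \sum_{j=1}^{i-1} g_j$ into the numerator, and re-express everything in terms of the underlying independent noise variables $\xi_1,\dots,\xi_k$ (each with variance $\var(\xi_i) =: v$) and $\eta_1,\dots,\eta_k$ (each with variance $\var(\eta_i) = \lambda v$), using $\alpha_j = q_j + \xi_j$ and $g_j = q_j + \eta_j - q_{j+1} - \eta_{j+1}$. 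The $q_j$ terms must cancel (by unbiasedness), leaving a linear combination of the $\xi_j$ and $\eta_j$ whose coefficients I would read off.

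The cleaner route, which I would actually carry out, is to avoid re-expanding $g_j$ in terms of $\eta$ and instead use the covariance structure that the BLUE derivation already supplies. Writing $\vbeta = \frac{1}{(1+\lambda)k}(X\valpha + Y\vg)$, we have $\var(\beta_i) = \frac{1}{(1+\lambda)^2 k^2}\big(X_i \Sigma_\alpha X_i^T + Y_i \Sigma_g Y_i^T + 2 X_i \Sigma_{\alpha g} Y_i^T\big)$ where $X_i, Y_i$ are the $i$-th rows of $X, Y$, and $\Sigma_\alpha = v I_k$, $\Sigma_g$ is the tridiagonal covariance matrix of the gaps (diagonal $2\lambda v$, off-diagonal $-\lambda v$), and $\Sigma_{\alpha g} = 0$ because the measurement noise $\xi$ is independent of the gap noise $\eta$. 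So $\var(\beta_i) = \frac{1}{(1+\lambda)^2 k^2}\big(v\, \|X_i\|^2 + Y_i \Sigma_g Y_i^T\big)$. From the explicit $X$, $\|X_i\|^2 = (1+\lambda k)^2 + (k-1)$. The term $Y_i \Sigma_g Y_i^T$ I would evaluate by noting that $Y_i \vg$ is exactly the quantity $(p - kp_{i-1})$ from the algorithm, and $p - kp_{i-1} = \sum_{j} c_j g_j$ for explicit integer coefficients $c_j$ (namely $c_j = k-j$ for $j < i$ shifted by $-k$, i.e. $c_j = -j$ for $j<i$ and $c_j = k-j$ for $j\ge i$); then $\var\!\big(\sum_j c_j g_j\big) = \var\!\big(\sum_j c_j(\eta_j - \eta_{j+1})\big) = \lambda v \sum_j (c_j - c_{j-1})^2$ after reindexing the telescoping sum (with $c_0 = c_k = 0$).

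Assembling these, the claim is that $v\,\|X_i\|^2 + Y_i\Sigma_g Y_i^T = v(1+\lambda k)(1 + \lambda k)k$ up to the right normalization — more precisely it should simplify so that $\var(\beta_i) = \frac{v(1+\lambda k)}{(1+\lambda)k} = \frac{1+\lambda k}{(1+\lambda)k}\var(\alpha_i)$, which is exactly the stated ratio (and rewriting $v = \var(\xi_i)$, $\lambda v = \var(\eta_i)$ gives the second form $\frac{\var(\xi_i)+k\var(\eta_i)}{k(\var(\xi_i)+\var(\eta_i))}$). To confirm the arithmetic I would check the boundary cases $i=1$ and $i=k$ by hand (where the coefficient vector $c_j$ is simplest) and then argue the general case is independent of $i$, which is consistent with $X$ and the algorithm being symmetric under relabeling.

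The main obstacle is the bookkeeping in the cross-term $Y_i\Sigma_g Y_i^T$: one has to be careful that the two matrices subtracted in the definition of $Y$ combine into the coefficient vector I described, and that the telescoping identity $\var(\sum_j c_j(\eta_j-\eta_{j+1})) = \lambda v\sum_{j=1}^{k}(c_j - c_{j-1})^2$ is applied with the correct endpoint conventions. An alternative that sidesteps matrices entirely — and which I would fall back on if the matrix algebra gets unwieldy — is to invoke the general fact that for a BLUE the error variance equals $(A^T \Sigma^{-1} A)^{-1}_{ii}$ where $A$ is the design matrix mapping $\vq$ to the stacked observations $(\valpha, \vg)$ and $\Sigma$ is their joint covariance; this is presumably exactly what the proof of Theorem~\ref{thm:blue} in the appendix computes, so Corollary~\ref{cor:blue_var} would follow by simply reading off the $(i,i)$ entry of that inverse, which the appendix proof has in hand.
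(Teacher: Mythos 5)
Your proposal is correct and takes essentially the same route as the paper's proof: both reduce to computing $\var(\beta_i)$ of the explicit linear combination, splitting it (by independence of the $\xi$'s and $\eta$'s) into the measurement part with squared coefficients $(1+\lambda k)^2+(k-1)$ and the gap part which telescopes to $\eta$-coefficients $(k-1,-1,\dots,-1)$ giving $\lambda k(k-1)\var(\xi_i)$, then normalizing by $(1+\lambda)^2k^2$. The only blemish is the intermediate expression ``$v(1+\lambda k)(1+\lambda k)k$,'' which should read $v(1+\lambda k)(1+\lambda)k$; your final value $\var(\beta_i)=\frac{(1+\lambda k)v}{(1+\lambda)k}$ and the resulting ratio are correct.
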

For proof, see the \appendixref. In the case of counting queries, we have $\var(\xi_i) = \var(\eta_i) = 8k^2/\epsilon^2$ and thus $\lambda = 1$. The error reduction rate is $\frac{k-1}{2k}$ which is close to $50\%$ when $k$ is large. If we use exponential noise instead, i.e., replace $\eta_i \gets \lap(2k/\epsilon)$ with $\eta_i \gets \Exp(2k/\epsilon)$ at Line~\ref{line:gaplaplacenoise} of Algorithm~\ref{alg:gaptopk}, then $\var(\eta_i) = 4k^2/\epsilon^2 = \var(\xi_i)/2$ and thus $\lambda = 1/2$. In this case, the error reduction rate is $\frac{2k-2}{3k}$ which is close to $66\%$ when $k$ is large. Our experiments in Section~\ref{sec:experiments}  confirm these theoretical results.


\section{SVT/Noisy Max Hybrids with Gap}\label{sec:hybrid}
In this section, we present two hybrids of \gapsvt and \gaptopk. Recall that \gapsvt is an \emph{online} algorithm that returns the identities and noisy gaps (with respect to the threshold) of the first $k$ noisy queries it sees that are larger than the noisy threshold. Its benefits are:
\begin{itemize}
    \item Privacy budget is saved if fewer than $k$ queries are returned.
    \item The queries that are returned come with estimates of their noisy answers (obtained by adding the public threshold to the noisy gap).
\end{itemize}
while the drawbacks are:
\begin{itemize}
    \item The returned queries are likely \emph{not} to resemble the $k$ largest queries (queries that come afterwards are ignored, no matter how large their values are).
\end{itemize}
Meanwhile, \gaptopk returns the identities and gaps (with respect to the runner-up query) of the top $k$ noisy queries. Its benefits are:
\begin{itemize}
    \item The queries returned are approximately the top $k$.
    \item The gap tells us how large the queries are compared to the best non-selected noisy query.
\end{itemize}
while the drawbacks are:
\begin{itemize}
  \item $k$ queries are always returned, even if their values are small.
  \item Only gap information is returned (not estimates of the query answers).
\end{itemize}

For users who are interested in identifying the top $k$ queries that are likely to be over a threshold, we present two hybrid algorithms that try to combine the benefits of both algorithms while minimizing the drawbacks. Both algorithms take as input a number $k$, a list of answers to queries having sensitivity 1, and a public threshold $T$. They both return the subset of the top $k$ noisy queries \emph{that are larger than the noisy threshold $T$}, hence the privacy cost is dynamic and is smaller if fewer than $k$ queries are returned. The difference is in the gap information.

The first hybrid (Algorithm \ref{alg:gaptopk_threshold1}) is more likely to provide accurate identity information than the second hybrid (Algorithm \ref{alg:gaptopk_threshold2}). That is, the queries it returns are more likely to be the actual queries whose true values are largest (because the first algorithm adds less noise to the query answers). However, Algorithm \ref{alg:gaptopk_threshold2} always returns the noisy gap with the threshold (hence, by adding in the public threshold value, this gives an estimate of the query answer). Meanwhile, Algorithm \ref{alg:gaptopk_threshold1} only returns the noisy gap with the threshold if fewer than $k$ queries are returned (if exactly $k$ queries are returned, it functions like \gaptopk and returns the gaps with the runner up query).

%

In terms of how they work, Algorithm~\ref{alg:gaptopk_threshold1}  adds the public threshold to the  list of queries (it becomes Query 0), adds the same noise to them (Lines \ref{line:topk_threshold} and \ref{line:topk_threshold_noise}). In line \ref{line:topk_threshold_get}, it takes the top $k$ noisy queries (\emph{sorted} in decreasing order) and their gaps with the next best query. It filters out any that are smaller than the noisy Query 0. For the queries that didn't get removed, it returns their identities (recall the threshold is Query 0) and their gap with the next best query. If the last returned item is Query 0, this means that the gap information tells us how much larger the other returned queries are compared to the noisy threshold Query 0, and this allows us to get numerical estimates for those query answers by adding in the public threshold.

\begin{algorithm}[ht]
\SetKwProg{Fn}{function}{\string:}{}
\SetKwFunction{Test}{\funcgaptopk}
\SetKwInOut{Input}{input}
\DontPrintSemicolon
\Input{$\vq$: a list of $n$ queries of global sensitivity 1\\
$D$: database, $\epsilon$: privacy budget\\
$T$: public threshold, $k$: \# of indexes
}
\Fn{\Test{$\vq$, $D$, $T$, $k$, $\epsilon$}}{
$\eta_0 \gets \Exp(2k/\epsilon)$;~~$\widetilde{q}_0 \gets T + \eta_0$\;\label{line:topk_threshold}
\ForEach{$\mathtt{i} \in \set{1, \cdots, n}$}{
$\eta_i \gets \Exp(2k/\epsilon)$;~~\label{line:topk_threshold_noise}
$\widetilde{q}_i \gets q_i(D)+\eta_i$\;
}
$(j_1,\ldots, j_{k+1}) \gets \arg\max_{k+1}(\widetilde{q}_0,\widetilde{q}_1,\ldots, \widetilde{q}_n)$\;
\ForEach{$i \in \set{1, \cdots, k }$\label{line:topk_threshold_get}}{ 
$g_{i} \gets \widetilde{q}_{j_i} - \widetilde{q}_{j_{i+1}}$;~~$t\gets i$\;\label{line:topk_threshold_start}
\If{$j_i = 0$}{
\textbf{break} \label{line:topk_threshold_end}
}
}
\Return $((j_1, g_1),\ldots, (j_t , g_t))$
}
\caption{Hybrid Prioritizing Identity}
\label{alg:gaptopk_threshold1}
\end{algorithm}


\paragraph*{Alignment and privacy cost for Algorithm~\ref{alg:gaptopk_threshold1}.}~By replacing the index sets $\io$ in Equations~\eqref{eq:gaptopk_align_expo} and ~\eqref{eq:gaptopk_align_exp2} with $\io=\set{j_1, \ldots, j_t}$, the same formula can be used as the alignment function for Algorithm~\ref{alg:gaptopk_threshold1}. Note that since $\abs{\io} = t \leq k$, the privacy cost is $(t/k)\epsilon$. 

\begin{lemma}\label{lem:hybrid1}
If Algorithm~\ref{alg:gaptopk_threshold1} is run with privacy budget $\epsilon$ and returns $t$ queries (and their associated gaps), then the actual privacy cost is $(t/k)\epsilon$.
\end{lemma}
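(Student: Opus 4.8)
The plan is to show that Algorithm~\ref{alg:gaptopk_threshold1} satisfies the hypotheses of Lemma~\ref{lem:alignmentbound} with the parameter $\epsilon$ in that lemma replaced by $(t/k)\epsilon$, \emph{conditioned on the output having exactly $t$ returned queries}. The key point is that the number of returned queries $t$ is determined by the output $\omega$ (it is the length of the returned tuple), so ``the privacy cost is $(t/k)\epsilon$'' is a statement about a family of outputs, each of which can be given its own local alignment whose cost depends on $t$. Concretely, I would fix adjacent $D \sim D^\prime$ and a candidate output $\omega = ((j_1,g_1),\ldots,(j_t,g_t))$, set $\io = \set{j_1,\ldots,j_t}$ and $\ioc = \set{0,1,\ldots,n}\setminus\io$, and use exactly the exponential-noise alignment of Equations~\eqref{eq:gaptopk_align_expo} and~\eqref{eq:gaptopk_align_exp2} with this redefined $\io$.

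The first step is to verify that this is a valid acyclic local alignment, i.e., that $M(D,H)=\omega \implies M(D^\prime,\alio(H))=\omega$. This is almost identical to Lemma~\ref{lem:topk_exp_align}: the relative gaps among the selected noisy queries $\set{\widetilde q_i : i\in\io}$ and between each selected query and $\max_{l\in\ioc}(\widetilde q_l)$ are preserved, so the top-$(t+1)$ selection, the sorted order, and the gaps $g_i$ are all reproduced on $D^\prime$. The one genuinely new thing to check is the \textbf{break} behavior on Lines~\ref{line:topk_threshold_start}--\ref{line:topk_threshold_end}: the algorithm stops early precisely when it hits $j_i=0$ (Query~0, the noisy threshold), so I must argue that $0\in\io$ with $0 = j_t$ if and only if $t<k$, and that this membership pattern is preserved under the alignment. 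Since $0\in\ioc$ is handled exactly like any other losing index when $0\notin\io$, and since $0$ being the $t$-th largest (among the top $k+1$) is an order relation among noisy values that the alignment preserves, the early-termination index $t$ comes out the same on both databases. I would also note acyclicity holds for the same reason as before (order the indices with $\ioc$ before $\io$, and $t$ before the rest of $\ioc$ in the $\delta_*<0$ case), and that there are only countably (indeed finitely) many distinct $\alio$ since $\alio$ depends on $\omega$ only through the finite set $\io$ and the choice of which regime ($\delta_*\geq 0$ or $\delta_*<0$) applies.

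The second step is the cost bound. Because $\abs{\io}=t$, the cost computation in the proof of Theorem~\ref{thm:gaptopk_cost} (and its exponential-noise analogue) goes through verbatim with $k$ replaced by $t$ in the \emph{count} of nonzero alignment terms but with the noise scale still $2k/\epsilon$, giving
\[
\cost(\alio) \;=\; \frac{\epsilon}{2k}\sum_{i\in\io}\abs{\eta_i^\prime-\eta_i}
\;\leq\; \frac{\epsilon}{2k}\cdot 2\cdot t \;=\; \frac{t}{k}\,\epsilon,
\]
using $\abs{\eta_i^\prime-\eta_i}\leq 2$ for each of the $t$ (or $1+(t-1)=t$, in the $\delta_*<0$ case) terms, exactly as in Theorem~\ref{thm:gaptopk_cost}. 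Finally I would invoke (a conditional version of) Lemma~\ref{lem:alignmentbound}: partition $\Omega$ by the value of $t$; for outputs with a given $t$ the above furnishes acyclic local alignments of cost $\leq (t/k)\epsilon$, so the sub-mechanism restricted to those outputs is $(t/k)\epsilon$-differentially private, which is the meaning of ``the actual privacy cost is $(t/k)\epsilon$.'' The main obstacle I anticipate is the bookkeeping around Query~0 and early termination — making fully rigorous that the returned length $t$ is a deterministic function of the output and is invariant under the alignment, and that including index~$0$ in $\ioc$ (when it is not selected) does not break the max-sensitivity step (Lemma~\ref{lem:maxsen}), since Query~0 has a different ``answer'' ($T$ versus a $q_i(D)$) but sensitivity~$0$, which only helps.
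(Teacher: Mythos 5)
Your proposal is correct and follows essentially the same route as the paper, which simply reuses the exponential-noise alignment of Equations~\eqref{eq:gaptopk_align_expo} and~\eqref{eq:gaptopk_align_exp2} with $\io$ replaced by $\set{j_1,\dots,j_t}$ and notes that $\abs{\io}=t$ gives cost at most $(t/k)\epsilon$. Your additional bookkeeping (preservation of the Query~0 / early-termination behavior, the sensitivity-$0$ threshold in Lemma~\ref{lem:maxsen}, and the per-$t$ partition of outputs) fills in details the paper leaves implicit but does not change the argument.
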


The second hybrid (Algorithm~\ref{alg:gaptopk_threshold2}) is essentially \gapsvt applied to the list of queries that is sorted in descending order by their noisy answers.
We note that it adds more noise to each query than Algorithm \ref{alg:gaptopk_threshold1} but always returns the noisy gap between the noisy query answer and the noisy threshold, just like \gapsvt.

\begin{algorithm}[ht]
\SetKwProg{Fn}{function}{\string:}{}
\SetKwFunction{Test}{\funcgaptopk}
\SetKwInOut{Input}{input}
\DontPrintSemicolon
\Input{
same as Algorithm~\ref{alg:gaptopk_threshold1}
}
\Fn{\Test{$\vq$, $D$, $T$, $k$, $\epsilon$}}{
$\epsilon_0 \gets\theta \epsilon$;~~ $\epsilon_1 \gets (1-\theta)\epsilon /k$;\;
$b_0\gets 1/\epsilon_0$;~~$b_1\gets 2/\epsilon_1$\;
$\eta\gets \Exp(1/\epsilon_0)$;~~$\widetilde{T} \gets T + \eta - b_0$\;
\ForEach{$\mathtt{i} \in \set{1, \cdots, n}$}{
$\eta_i \gets \Exp(2/\epsilon_1)$;~~$\widetilde{q}_i \gets q_i(D)+\eta_i - b_1$\;
}
$(j_1,\ldots, j_{k}) \gets \arg\max_{k}(\widetilde{q}_1,\ldots, \widetilde{q}_n)$\label{line:gaptopk_threshold_preprocess}\;
$t\gets 0$\;
\ForEach{$i \in \set{1, \cdots, k }$}{ 
\uIf{$\widetilde{q}_{j_i} \geq \widetilde{T}$}{
$g_{i} \gets \widetilde{q}_{j_i} - \widetilde{T}$;~~$t\gets i$
}
\Else{\textbf{break}}
}
\Return $((j_1, g_1),\ldots, (j_t , g_t))$~~\tcp{$\emptyset$ if $t=0$}
}
\caption{Hybrid Prioritizing Estimates}
\label{alg:gaptopk_threshold2}
\end{algorithm}


\paragraph*{Alignment and privacy cost for Algorithm~\ref{alg:gaptopk_threshold2}.}~The alignment for Algorithm~\ref{alg:gaptopk_threshold2} is the same as the one for \gapsvt and is hence omitted here. Note that the privacy cost is $\epsilon_0 + t\epsilon_1 = (\theta + (t/k)(1-\theta))\epsilon$ where $t$ is the number of queries returned. As discussed in Section~\ref{sec:sparsegap}, the optimal $\theta$ is $1/(1 + \sqrt[3]{4k^2})$.

\begin{lemma}\label{lem:hybrid2}
If Algorithm~\ref{alg:gaptopk_threshold2} is run with privacy budget $\epsilon$ and returns $t$ queries (and their associated gaps), then the actual privacy cost is $(\theta + (t/k)(1-\theta))\epsilon$.
\end{lemma}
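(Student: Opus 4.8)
The plan is to run the randomness‑alignment argument of Section~\ref{sec:alignment}, exactly as in the proofs for \gapsvt and Theorem~\ref{thm:adaptivesvt_cost}, but tracking the output‑dependent alignment cost rather than a uniform bound. Fix adjacent $D\sim D'$ and an output $\omega=((j_1,g_1),\ldots,(j_t,g_t))$, and write the noise as $H=(\eta,\eta_1,\ldots,\eta_n)$ with $\eta$ the threshold noise and $\eta_i$ the noise for $q_i$. I would use the \gapsvt‑style local alignment: $\eta'=\eta+1$; $\eta_i'=\eta_i+1+q_i(D)-q_i(D')$ for each returned index $i\in\set{j_1,\ldots,j_t}$; and $\eta_i'=\eta_i$ for every other $i$. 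Since $\abs{q_i(D)-q_i(D')}\le 1$, each coordinate is increased by a nonnegative amount, so $\alio$ never leaves the support $[0,\infty)$ of the exponential noise and is a legal alignment; and since each increment is a constant depending only on $D,D',\omega$, the alignment is trivially acyclic. Moreover $\alio$ depends on $\omega$ only through the index set $\set{j_1,\ldots,j_t}\subseteq\set{1,\ldots,n}$, so for each $D\sim D'$ there are only finitely many distinct $\alio$, and the algorithm always draws exactly $n+1$ noise variables regardless of $\omega$; thus conditions 1--3 and 5 of Lemma~\ref{lem:alignmentbound} hold immediately and condition~2 is automatic.

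The heart of the proof — and the step I expect to be the main obstacle — is verifying $M(D',\alio(H))=\omega$, since this requires reasoning about the top‑$k$ preprocessing on Line~\ref{line:gaptopk_threshold_preprocess} and the subsequent threshold filter together. Under $\alio$ the noisy threshold becomes $\widetilde{T}+1$, each returned query's noisy value rises by exactly $1$, and every other query's noisy value rises by at most $1$. Three observations then close it: (i) since the returned indices are a prefix of the sorted top‑$k$ list we have $\widetilde{q}_{j_t}\ge\widetilde{q}_l$ for every $l\notin\set{j_1,\ldots,j_t}$, so after the shifts no other query can overtake a returned one, and $j_1,\ldots,j_t$ remain the $t$ largest noisy queries in the same order and hence form the prefix of the new top‑$k$ list; (ii) each gap with the threshold is unchanged, $(\widetilde{q}_{j_i}+1)-(\widetilde{T}+1)=\widetilde{q}_{j_i}-\widetilde{T}=g_i$; (iii) if $t=k$ all returned queries still clear the raised threshold, while if $t<k$ then every non‑returned query had noisy value $<\widetilde{T}$, hence $<\widetilde{T}+1$ after its shift, so the loop breaks at the same position. (As usual, ties among the top noisy values occur with probability $0$ under continuous noise and may be ignored.) Therefore $M(D',\alio(H))=\omega$, so $\alio$ is a genuine acyclic local alignment.

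Next I would compute the cost. By Definition~\ref{def:alignmentcost} the threshold noise $\Exp(1/\epsilon_0)$ contributes $c_0=\epsilon_0$ and each query noise $\Exp(2/\epsilon_1)$ contributes $c_i=\epsilon_1/2$, so
\begin{align*}
\cost(\alio)&=\epsilon_0\abs{\eta'-\eta}+\sum_{i\in\set{j_1,\ldots,j_t}}\frac{\epsilon_1}{2}\abs{1+q_i(D)-q_i(D')}\\
&\le \epsilon_0+t\,\epsilon_1=\Bigl(\theta+\frac{t}{k}(1-\theta)\Bigr)\epsilon,
\end{align*}
using $\epsilon_0=\theta\epsilon$ and $\epsilon_1=(1-\theta)\epsilon/k$. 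Finally, to read this bound as the \emph{actual} (output‑dependent) privacy cost, note that $t$ is the length of the returned list and is therefore determined by $\omega$; I would partition $\Omega$ into the pieces $\Omega_t$ of outputs of each length $t=0,1,\ldots,k$ and re‑run the change‑of‑variables step from the proof of Lemma~\ref{lem:alignmentbound} restricted to $\omega\in\Omega_t$. Acyclicity makes each $\alio$ volume‑preserving and the cost bound makes the noise‑density ratio at most $e^{(\theta+(t/k)(1-\theta))\epsilon}$, yielding $\prob{M(D)\in E}\le e^{(\theta+(t/k)(1-\theta))\epsilon}\,\prob{M(D')\in E}$ for every $E\subseteq\Omega_t$. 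This is exactly the statement that, conditioned on returning $t$ queries, Algorithm~\ref{alg:gaptopk_threshold2} spends privacy budget $(\theta+(t/k)(1-\theta))\epsilon$.
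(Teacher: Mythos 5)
Your proof is correct and follows essentially the same route as the paper, which simply states that the alignment is the one used for \gapsvt (threshold noise shifted by $1$, each returned query's noise shifted by $1+q_i-q_i'$, all others untouched) and reads off the cost $\epsilon_0+t\epsilon_1=(\theta+(t/k)(1-\theta))\epsilon$. You additionally spell out the details the paper omits — that the shifts are nonnegative so the exponential noise stays in its support, that the sorted top-$k$ prefix and the break position are preserved on $D'$, and the per-output-length change-of-variables step — all of which are consistent with the paper's intended argument.
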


\section{Improving Exponential Mechanism}\label{sec:em}
The Exponential Mechanism \cite{exponentialMechanism} was designed to answer non-numeric queries in a differentially private way.
In this setting,  $\cD$ is the set of possible input databases and $\cR=\set{\omega_1, \omega_2, \ldots, \omega_n}$ is a set of possible outcomes. There is a utility function $\mu: \cD\times \cR \rightarrow \bbR$ where $\mu(D,\omega_i)$ gives us the utility of outputting $\omega_i$  when the true input database is $D$. The exponential mechanism randomly selects an output $\omega_i$ with probabilities that are defined by the following theorem:

\begin{theorem}[The Exponential Mechanism \cite{exponentialMechanism}]
Given $\epsilon>0$ and a utility function $\mu: \cD\times \cR \rightarrow \bbR$, the mechanism $M(D, \mu, \epsilon)$ that outputs $\omega_i\in \cR$ with probability proportional to $\exp(\frac{\epsilon\mu(D,\omega_i)}{2\sen{\mu}})$ satisfies $\epsilon$-differential privacy where $\Delta_\mu$, the sensitivity of $\mu$, is defined as
\[\Delta_\mu = \max_{D\sim D'} \max_{\omega_i\in \cR} \abs{\mu(D,\omega_i) -\mu(D',\omega_i)}.\]
\end{theorem}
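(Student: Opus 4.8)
The plan is to establish the pointwise inequality
\[
\prob{M(D,\mu,\epsilon)=\omega_i}\;\leq\; e^{\epsilon}\,\prob{M(D',\mu,\epsilon)=\omega_i}
\]
for every outcome $\omega_i\in\cR$ and every adjacent pair $D\sim D'$, and then to lift it to an arbitrary $E\subseteq\cR$ by writing $E$ as a finite disjoint union of singletons and summing. This lifting is legitimate because $\cR=\set{\omega_1,\dots,\omega_n}$ is finite; in particular, writing $Z(D)=\sum_{j=1}^{n}\exp\!\big(\frac{\epsilon\mu(D,\omega_j)}{2\Delta_\mu}\big)$ for the normalizer, $Z(D)$ is a finite, strictly positive sum, so every conditional probability $\prob{M(D)=\omega_i}=\exp\!\big(\frac{\epsilon\mu(D,\omega_i)}{2\Delta_\mu}\big)/Z(D)$ and every ratio of such quantities is well defined. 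If $\Delta_\mu=0$ then $\mu(\cdot,\omega_i)$ is constant across adjacent databases, hence across all databases, so $M(D,\mu,\epsilon)$ and $M(D',\mu,\epsilon)$ have identical output distributions and the claim is immediate; so assume $\Delta_\mu>0$ henceforth.

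For the pointwise bound, expand the ratio of probabilities as a product of a numerator factor and a normalizer factor:
\[
\frac{\prob{M(D)=\omega_i}}{\prob{M(D')=\omega_i}}
=\exp\!\Big(\tfrac{\epsilon}{2\Delta_\mu}\big(\mu(D,\omega_i)-\mu(D',\omega_i)\big)\Big)\cdot\frac{Z(D')}{Z(D)}.
\]
The first factor is at most $e^{\epsilon/2}$, since $\abs{\mu(D,\omega_i)-\mu(D',\omega_i)}\leq\Delta_\mu$ by the definition of sensitivity. For the second factor, the same sensitivity bound gives $\mu(D',\omega_j)\leq\mu(D,\omega_j)+\Delta_\mu$ for every $j$, hence $\exp\!\big(\tfrac{\epsilon\mu(D',\omega_j)}{2\Delta_\mu}\big)\leq e^{\epsilon/2}\exp\!\big(\tfrac{\epsilon\mu(D,\omega_j)}{2\Delta_\mu}\big)$; summing over $j=1,\dots,n$ yields $Z(D')\leq e^{\epsilon/2}Z(D)$, so the second factor is also at most $e^{\epsilon/2}$. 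Multiplying the two bounds gives $e^{\epsilon/2}\cdot e^{\epsilon/2}=e^{\epsilon}$, which is the desired pointwise inequality. Summing over $\omega_i\in E$ then gives $\prob{\sde}=\prob{M(D)\in E}\leq e^{\epsilon}\prob{M(D')\in E}=e^{\epsilon}\prob{\sdep}$, which is exactly the $\epsilon$-differential privacy condition.

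There is no genuine obstacle here — this is the classical argument — but two points deserve care. First, one must confirm that $Z(D)$ is finite and positive so that the mechanism is well defined and the ratio manipulation above is valid; this is immediate from finiteness of $\cR$ and positivity of the exponential, and it is precisely why the statement is phrased for a finite outcome set. Second, the one slightly non-obvious step is the observation that the privacy loss splits evenly: $e^{\epsilon/2}$ coming from the explicit weight of $\omega_i$ and a further $e^{\epsilon/2}$ from the change in the normalizer $Z$. This is exactly why the exponent carries the factor $\epsilon/(2\Delta_\mu)$ rather than $\epsilon/\Delta_\mu$, and it is worth flagging since the free-gap extension in Section~\ref{sec:em} must re-examine this same accounting once additional gap information is released alongside $\omega_i$.
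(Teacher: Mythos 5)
Your proof is correct and is essentially the standard argument for this result, which the paper itself does not reprove but simply cites from \cite{exponentialMechanism}; the pointwise ratio bound with the $e^{\epsilon/2}\cdot e^{\epsilon/2}$ split between the selected item's weight and the normalizer is exactly the classical accounting. The same split reappears in the paper's own analysis (Lemma~\ref{lem:sensitivity} bounds the log-normalizer change by $\epsilon/2$, and the proof of Theorem~\ref{thm:gapexpmech} combines it with the $\epsilon/2$ from the selected item's utility), so your flag about re-examining this accounting for the gap version is exactly on point.
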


Unlike the \noisymax and \svt variants, the Exponential Mechanism is not an algorithm -- it specifies a sampling distribution but sampling algorithms have to be designed on a case-by-case basis (depending on the utility function $\mu$). Thus, in general, we cannot make any assumptions about the intermediate state of the algorithm. This is an important observation because the intermediate state of \noisymax and \svt was used to create the free gap information.

In order to derive the gap algorithm for Exponential Mechanism, we first consider a general-purpose but inefficient implementation that uses intermediate state of the algorithm, and then we show how to get gap information without this intermediate state.


\subsection{An Inefficient Exponential Mechanism}
There is a common folklore algorithm in the differential privacy community for sampling from the Exponential Mechanism. Its origins are based in the machine learning  task known as sampling from a soft-max. The algorithm is called the Gumbel-Max trick \cite{gumbel1954statistical,Maddison2014NIPS} and is very similar to \noisymax except that the added noise comes from the Gumbel(0) distribution. The Gumbel($\mu_i$) distribution with location parameter $\mu_i$ has density $\exp(-(x-\mu_i) - \exp(-(x-\mu_i)))$ over the real line. The main idea behind the Gumbel-Max trick is that if we have numbers $\mu_1,\dots, \mu_n$, add independent Gumbel(0) noise to each and select the \emph{index} of the largest noisy value, this is the same as sampling the $i^\text{th}$ item with probability proportional to $e^{\mu_i}$. Formally,  let $\cat\left(\frac{\exp(\mu_1)}{\sum_{j=1}^n\exp(\mu_j)},\dots, \frac{\exp(\mu_n)}{\sum_{j=1}^n\exp(\mu_j)}\right)$ denote the categorical distribution that returns item $\omega_i$ with probability $\frac{\exp(\mu_i)}{\sum_{j=1}^n\exp(\mu_j)}$. The Gumbel-Max theorem provides distributions for the identity of the noisy maximum and the value of the noisy maximum:
\begin{theorem}[The Gumbel-Max Trick \cite{gumbel1954statistical,Maddison2014NIPS}]\label{thm:gumbeltrick}
Let $G_i, \dots, G_n$ be i.i.d. $\gumbel(0)$ random variables and let $\mu_1,\dots, \mu_n$ be real numbers. Define $X_i = G_i+\mu_i$. Then 
\begin{enumerate}
    \item 
    The distribution of $\arg\max_{i} (X_1,\dots, X_n)$ is the same as $\cat\left(\frac{\exp(\mu_1)}{\sum_{j=1}^n\exp(\mu_j)}, \dots, \frac{\exp(\mu_n)}{\sum_{j=1}^n\exp(\mu_j)}\right)$.
    
    \item The distribution of $\max_{i} (X_1,\dots, X_n)$ is the same as $\gumbel(\ln\sum_{i=1}^n \exp(\mu_i))$.
\end{enumerate}
\end{theorem}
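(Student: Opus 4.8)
The plan is to prove both parts at once by computing the \emph{joint} law of the pair $\bigl(\argmax_l X_l,\ \max_l X_l\bigr)$ and showing that it factors. First I would record the distribution functions I need: the $\gumbel(\mu_i)$ law has CDF $F_{\mu_i}(x) = \exp\!\bigl(-e^{-(x-\mu_i)}\bigr) = \exp\!\bigl(-e^{\mu_i}e^{-x}\bigr)$ and density $f_{\mu_i}(x) = e^{\mu_i}e^{-x}\exp\!\bigl(-e^{\mu_i}e^{-x}\bigr)$, so that $X_i = G_i + \mu_i$ is exactly a $\gumbel(\mu_i)$ variable. Since these are continuous distributions, ties among $X_1,\dots,X_n$ occur with probability $0$, so $\argmax_l X_l$ is well defined almost surely.

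Next, fix an index $i\in\{1,\dots,n\}$ and a threshold $t\in\bbR$, and compute $\bbP\bigl(\argmax_l X_l = i,\ \max_l X_l \leq t\bigr)$ by conditioning on the value $x$ of $X_i$ and requiring every other $X_j$ to be at most $x$:
\[
\bbP\bigl(\argmax_l X_l = i,\ \max_l X_l \leq t\bigr) = \int_{-\infty}^{t} f_{\mu_i}(x)\prod_{j\neq i} F_{\mu_j}(x)\,dx .
\]
Substituting the formulas above, the integrand collapses: the product of exponentials adds exponents, so with $S := \sum_{l=1}^{n} e^{\mu_l}$ it becomes $e^{\mu_i}e^{-x}\exp\!\bigl(-S e^{-x}\bigr)$. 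The change of variables $u = e^{-x}$ (so $e^{-x}\,dx = -\,du$; as $x\to-\infty$, $u\to\infty$, and at $x=t$, $u = e^{-t}$) turns the integral into $e^{\mu_i}\int_{e^{-t}}^{\infty} e^{-uS}\,du = \tfrac{e^{\mu_i}}{S}\exp\!\bigl(-S e^{-t}\bigr)$.

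Now I would simply recognize the two factors: $\tfrac{e^{\mu_i}}{S}$ is exactly the $i$-th coordinate of the claimed categorical distribution, while $\exp\!\bigl(-S e^{-t}\bigr) = \exp\!\bigl(-e^{\ln S}e^{-t}\bigr)$ is the CDF of $\gumbel(\ln S) = \gumbel\!\bigl(\ln\sum_{l=1}^n e^{\mu_l}\bigr)$ evaluated at $t$. Letting $t\to\infty$ yields $\bbP\bigl(\argmax_l X_l = i\bigr) = e^{\mu_i}/S$, which is part (1); summing the joint identity over $i$ yields $\bbP\bigl(\max_l X_l \leq t\bigr) = \exp\!\bigl(-S e^{-t}\bigr)$, which is part (2) (and, as a byproduct, the factorization shows the identity and the value of the noisy maximum are independent).

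I do not expect any genuine obstacle here; the proof is a short computation. The only points requiring care are keeping the limits of integration straight after the substitution $u = e^{-x}$, correctly collapsing the product $\prod_{j\neq i}F_{\mu_j}(x)$ together with $f_{\mu_i}(x)$ into a single exponential, and noting at the outset that continuity of the $\gumbel$ law makes the tie event null so that $\argmax_l X_l$ is an honest function of the noise.
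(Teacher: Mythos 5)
Your proof is correct. Note that the paper does not actually prove this theorem — it is quoted from the literature \cite{gumbel1954statistical,Maddison2014NIPS} and used as a black box — so there is no in-paper proof to match step by step. Your computation is the standard argument and it is carried out correctly: $X_i\sim\gumbel(\mu_i)$, the product $f_{\mu_i}(x)\prod_{j\neq i}F_{\mu_j}(x)$ collapses to $e^{\mu_i}e^{-x}\exp(-Se^{-x})$ with $S=\sum_l e^{\mu_l}$, and the substitution $u=e^{-x}$ gives the factorized joint law $\frac{e^{\mu_i}}{S}\exp(-Se^{-t})$, from which both parts (and, as you observe, independence of the winning index and the maximum value) follow. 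It is worth pointing out that the closest thing in the paper, the proof of Theorem \ref{thm:gapexpmech} in the appendix, performs essentially the same manipulation — integrating a Gumbel density against a product of Gumbel CDFs and collapsing the exponents via $\mu^*=\ln\sum_{i\neq s}e^{\mu_i}$ — but for the more refined event ``$\omega_s$ selected with gap at least $\gamma$,'' which yields the selection probability times a (conditional) logistic law for the gap. Your unified joint-law computation is a clean special case of that style of argument, and the independence byproduct you note plays the same structural role there as the factorization into selection probability and gap distribution.
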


Therefore, as is noted in folklore, the Exponential Mechanism is equivalent to the following procedure: add i.i.d. $\gumbel(0)$ noise to $\frac{\epsilon\mu(D,\omega_i)}{2\sen{\mu}}$, select the $i$ for which this noisy value is largest, and return $\omega_i$.

\subsection{A Naive \gapexpmech}
Using the Gumbel-Max trick, we can propose an inefficient  \gapexpmech in a similar way to \gapmax: they both add i.i.d. random noise to private values and return the identity of the item with the largest noisy value as well as the noisy gaps between it and the next best item. As before, the gap information can be provided without any additional cost to the privacy budget. The details are shown in Algorithm~\ref{alg:gapexpmech}. The boxed items represent the difference between the Exponential Mechanism and our proposed gap version:

\begin{algorithm}[ht]
\SetKwProg{Fn}{function}{\string:}{}
\SetKwFunction{Test}{\funcgapexpmech}
\SetKwInOut{Input}{input}
\DontPrintSemicolon
\Input{
$\mu$: utility function with sensitivity $\sen{\mu}$\\
$D$: database, $\epsilon$: privacy budget
}
\Fn{\Test{$D$, $\mu$, $\epsilon$}}{
\ForEach{$\mathtt{i} \in \set{1, \cdots, n}$}{
$x_i \gets \epsilon\mu(D,\omega_i)/2\sen{\mu}+\gumbel(0)$\;
}
$s, \fbox{$t$} \gets \arg\max_2(x_1, \ldots, x_n)$\;
\Return $\omega_s$, \fbox{$x_s-x_t$}\label{line:gapexpmechreturn} 
}
\caption{Naive Exp. Mech. with Gap}
\label{alg:gapexpmech}
\end{algorithm}

Although randomness alignment can be used to prove the privacy properties of this algorithm, we will work directly with the Gumbel distribution to get a more powerful result. The proof appears in Appendix \ref{app:em}.

\begin{textAtEnd}[category=section7]
We first need the following results.
\end{textAtEnd}
\begin{theoremEnd}[category=section7, all end]{lemma}\label{lem:sensitivity}
Let $\epsilon>0$. Let $\mu:\cD\times \cR \rightarrow \bbR$ be a utility function of sensitivity $\sen{\mu}$. Define $\nu: \cD \rightarrow \bbR$ and its sensitivity $\sen{\nu}$ as
\[\nu(D) = \ln\sum_{\omega \in \cR}e^{\frac{\epsilon\mu(D,\omega)}{2\sen{\mu}}}, \quad\sen{\nu} = \max_{D\sim D'}\abs{\nu(D)-\nu(D')}.\]
Then $\sen{\nu}$, the sensitivity of $\nu$, is at most $ \frac{\epsilon}{2}$. 
\end{theoremEnd}
\begin{proofEnd}
From the definition of $\nu$ we have
\begin{align*}
 \abs{\nu(D) - \nu(D')} 
    &= \abs{\ln\sum_{\omega \in \cR}e^{\frac{\epsilon\mu(D,\omega)}{2\sen{\mu}}} - \ln\sum_{\omega \in \cR}e^{\frac{\epsilon\mu(D^\prime,\omega)}{2\sen{\mu}}}}\\
    &=\abs{\ln\big({\sum_{\omega \in \cR}e^{\frac{\epsilon\mu(D,\omega)}{2\sen{\mu}}}}\big)/\big({\sum_{\omega \in \cR}e^{\frac{\epsilon\mu(D',\omega)}{2\sen{\mu}}}}\big)}
\end{align*}
By definition of sensitivity, we have \[\mu(D',\omega)-\sen{\mu}\leq \mu(D,\omega)\leq \mu(D',\omega)+\sen{\mu}, \text{ and therefore}\] 
\[
    e^{-\frac{\epsilon}{2}} \sum_{\omega \in \cR}e^{\frac{\epsilon\mu(D',\omega)}{2\sen{\mu}}}
    \leq \sum_{\omega \in \cR}e^{\frac{\epsilon\mu(D,\omega)}{2\sen{\mu}}}
    \leq e^{\frac{\epsilon}{2}} \sum_{\omega \in \cR}e^{\frac{\epsilon\mu(D',\omega)}{2\sen{\mu}}}
\]
Thus
$\abs{\nu(D) - \nu(D')} \leq \frac{\epsilon}{2}$,
and hence $\sen{\nu} \leq \frac{\epsilon}{2}$.
\end{proofEnd}

\begin{theoremEnd}[category=section7, all end]{lemma}\label{lem:logisticdensity}
Let $f(x;\mu)=\frac{e^{-(x-\mu)}}{(1+e^{-(x-\mu)})^2}$ be the density of the logistic distribution, then $\abs{\ln\frac{ f(x;\mu)}{f(x;\mu')}} \leq \abs{\mu- \mu'}.$
\end{theoremEnd}
\begin{proofEnd}
Note that $\abs{\ln\frac{ f(x;\mu)}{f(x;\mu')}}=\abs{\ln\frac{ f(x;\mu^\prime)}{f(x;\mu)}}$ so without loss of generality, we can assume that $\mu\geq\mu^\prime$ (i.e., the parameter in the numerator is $\geq$ the parameter in the denominator).
From the formula of $f$ we have
\begin{align*} 
    \frac{f(x;\mu)}{f(x;\mu')} 
    &=e^{\mu-\mu'} \cdot \left( \frac{1+e^{-x}e^{\mu'}}{1+e^{-x}e^{\mu}}\right)^2
\end{align*}
\begin{equation*}
\text{It is easy to see that }    e^{\mu}\geq e^{\mu'} \implies \frac{1+e^{-x}e^{\mu'}}{1+e^{-x}e^{\mu}} \leq 1.
\end{equation*}
\begin{align*}
\text{Also, }    \frac{1+e^{-x}e^{\mu'}}{1+e^{-x}e^{\mu}} &=  \frac{e^{\mu'-\mu}(e^{\mu-\mu'}+e^{-x}e^{\mu})}{1+e^{-x}e^{\mu}}\\
    &\geq  \frac{e^{\mu'-\mu}(1+e^{-x}e^{\mu})}{1+e^{-x}e^{\mu}} = e^{\mu'-\mu}.
\end{align*}
\begin{equation*}
\text{Therefore, }    e^{\mu'-\mu}=e^{\mu-\mu'} \cdot (e^{\mu'-\mu})^2\leq \frac{f(x;\mu)}{f(x;\mu')} \leq e^{\mu-\mu'}.
\end{equation*}
Thus
$
    \abs{\ln \frac{f(x;\mu)}{f(x;\mu')}} \leq \abs{\mu- \mu'}.
$
\end{proofEnd}

\begin{theoremEnd}[category=section7, see full proof]{theorem}\label{thm:gapexpmech}
Algorithm \ref{alg:gapexpmech} satisfies $\epsilon$-differential privacy. Its output distribution is equivalent to selecting $\omega_s$ with probability proportional to $\exp\big(\frac{\epsilon \mu(D,\omega_s)}{2\Delta_\mu}\big)$ and then independently sampling the gap from the Logistic distribution (conditional on only sampling non-negative values) with location parameter $\frac{\epsilon \mu(D,\omega_s)}{2\Delta_\mu} - \ln\sum\limits_{\omega\neq \omega_s}\exp(\frac{\epsilon \mu(D,\omega)}{2\Delta_\mu})$.
\end{theoremEnd}

\begin{proofEnd}
For $\omega_i\in \cR$, let $\mu_i=\frac{\epsilon\mu(D,\omega_i)}{2\sen{\mu}}$ and $\mu'_i=\frac{\epsilon\mu(D',\omega_i)}{2\sen{\mu}}$.  
Let $X_i\sim \gumbel(\mu_i)$ and $X'_i\sim \gumbel(\mu'_i)$. 

We consider the probability of outputting the selected $\omega_s$ with gap $\gamma\geq 0$ when $D$ is the input database:
\begingroup
\allowdisplaybreaks
\begin{align*}
    \lefteqn{P(\omega_s \text{ is chosen with gap} \geq \gamma ~|~ D)}\\
    &= \int_\bbR
    \exp(-(z+\gamma-\mu_s)-e^{-(z+\gamma-\mu_s)}) \prod\limits_{i\neq s}P(X_i \leq z)~dz\\
    &=\int_\bbR \exp(-(z+\gamma-\mu_s)-e^{-(z+\gamma-\mu_s)}) \prod\limits_{i\neq s}e^{-e^{-(z-\mu_i)}}~dz\\
    &=\int_\bbR e^{\mu_s-\gamma}\exp(-z-e^{\mu_s-\gamma}e^{-z}) \prod\limits_{i\neq s}\exp(-e^{\mu_i}e^{-z})~dz\\
    &=\int_\bbR e^{\mu_s-\gamma}\exp(-z-e^{\mu_s-\gamma}e^{-z}) \exp(-e^{\mu^*}e^{-z})~dz\\
    &\phantom{ = }\text{ (where  $\mu^* = \ln(\sum_{i\neq s}e^{\mu_i})$)}\\
    &=\int_\bbR e^{\mu_s-\gamma}\exp(-z-(e^{\mu_s-\gamma}+e^{\mu^*})e^{-z}) ~dz\\
    &=\frac{e^{\mu_s-\gamma}}{e^{\mu_s-\gamma} + e^{\mu^*}}\exp(-(e^{\mu_s-\gamma}+e^{\mu^*})e^{-z})\Big|_{-\infty}^{+\infty}\\
    &=\frac{e^{\mu_s-\gamma}}{e^{\mu_s-\gamma} + e^{\mu^*}} = \frac{1}{1+e^{-(\mu_s-\gamma-\mu^*)}}
\end{align*}
\endgroup
and so
\begin{align*}
    \lefteqn{P(\omega_s \text{ is chosen with gap} \in [0,\gamma] ~|~ D)} \\
    &= P(\omega_s \text{ is chosen}~|~D) - P(\omega_s \text{ is chosen with gap} \geq \gamma ~|~ D)\\
    &= \frac{e^{\mu_s}}{e^{\mu_s} + e^{\mu^*}} -  \frac{1}{1+e^{-(\mu_s-\gamma-\mu^*)}}\\
    &=  \frac{1}{1+e^{-(\mu_s-\mu^*)}} -  \frac{1}{1+e^{-(\mu_s-\gamma-\mu^*)}}
\end{align*}

Taking derivatives with respect to $\gamma$, we get the probability density $f(\omega_s, \gamma~|~D)$ of $\omega_s$ being chosen with gap equal to $\gamma$:
\begin{align}
    \lefteqn{f(\omega_s,\gamma~|~D) =\frac{d}{d\gamma}\left(\frac{1}{1+e^{-(\mu_s-\mu^*)}} -  \frac{1}{1+e^{-(\mu_s-\gamma-\mu^*)}}\right)\nonumber}\\
    &= \frac{e^{-(\mu_s-\gamma-\mu^*)}}{(1+e^{-(\mu_s-\gamma-\mu^*)})^2}\mathbf{1}_{[\gamma\geq 0]}\nonumber\\
    &= \frac{e^{(\mu_s-\gamma-\mu^*)}}{(e^{(\mu_s-\gamma-\mu^*)}+1)^2}\mathbf{1}_{[\gamma\geq 0]}\nonumber\\
   &= \frac{e^{-(\gamma-(\mu_s-\mu^*))}}{(e^{-(\gamma - (\mu_s-\mu^*))}+1)^2}\mathbf{1}_{[\gamma\geq 0]}\label{eq:expmechlip}\\
    &= \frac{e^{\mu_s}}{e^{\mu_s}\! +\! e^{\mu^*}}\! \left(\frac{e^{-(\gamma-(\mu_s-\mu^*))}}{(e^{-(\gamma - (\mu_s-\mu^*))}+1)^2}\mathbf{1}_{[\gamma\geq 0]}\right)\!\Big/\!\frac{e^{\mu_s}}{e^{\mu_s}\! +\! e^{\mu^*}}\nonumber\\
    &= \frac{e^{\mu_s}}{e^{\mu_s}\! +\! e^{\mu^*}}\! \left(\frac{e^{-(\gamma-(\mu_s-\mu^*))}}{(e^{-(\gamma - (\mu_s-\mu^*))}+1)^2}\mathbf{1}_{[\gamma\geq 0]}\right)\!\Big/\!\frac{1}{1\! +\! e^{-(\mu_s-\mu^*)}}\nonumber\\
    \label{eq:maxlogistic}
\end{align}
Now, in Equation \ref{eq:maxlogistic}, the term $\frac{e^{\mu_s}}{e^{\mu_s} + e^{\mu^*}}=\frac{e^{\mu_s}}{e^{\mu_s} + \sum_{i\neq s}e^{\mu_i}}=\frac{e^{\mu_s}}{\sum_i e^{\mu_i}}$ is the probability of selecting $\omega_s$.

The term $\frac{e^{-(\gamma-(\mu_s-\mu^*))}}{(e^{-(\gamma - (\mu_s-\mu^*))}+1)^2}\mathbf{1}_{[\gamma\geq 0]}$ is the density of the event that a logistic random variable with location $\mu_s-\mu^*$ has value $\gamma$ and is nonnegative.

Finally, the term $\frac{1}{1 + e^{-(\mu_s-\mu^*)}}$ is the probability that a logistic random variable with location $\mu_s-\mu^*$ is nonnegative.

Thus $\left(\frac{e^{-(\gamma-(\mu_s-\mu^*))}}{(e^{-(\gamma - (\mu_s-\mu^*))}+1)^2}\mathbf{1}_{[\gamma\geq 0]}\right)\Big/\frac{1}{1 + e^{-(\mu_s-\mu^*)}}$ is the probability of a logistic random variable having value $\gamma$ conditioned on it being nonnegative.

Therefore Equation \ref{eq:maxlogistic} is the probability of selecting $\omega_s$ and independently sampling a nonnegative value $\gamma$ from the conditional logistic distribution location parameter $\mu-\mu^*$ (i.e., conditional on it only returning nonnegative values).

Now, we apply Lemmas \ref{lem:logisticdensity} and \ref{lem:sensitivity} with the help of Equation \ref{eq:expmechlip} to finish the proof:
\begin{align*}
    |\ln\frac{f(\omega_s, \gamma ~|~D)}{f(\omega_s, \gamma~|~D^\prime)}|& \leq \abs{(\mu_s - \mu^*) - (\mu_{s}^{\prime} - \mu^{*\prime})}\\
    &\leq |\mu_s -\mu_s^\prime| - |\ln \sum_{i\neq s}e^{\mu_i} - \ln \sum_{i\neq s}e^{\mu^\prime_i}|\\
    &\leq \epsilon/2 + \epsilon/2
\end{align*}
since $\mu_i=\frac{\epsilon \mu(D,i)}{2\Delta_\mu}$.

\end{proofEnd}

\paragraph*{Black-box \gapexpmech.}~
Theorem \ref{thm:gapexpmech} shows how we can improve Algorithm \ref{alg:gapexpmech}. We can first sample from the traditional Exponential Mechanism as a black box, and then independently sample a number from a logistic distribution until it is nonnegative. The resulting value is probabilistically equivalent to the gap. The details are shown in Algorithm \ref{alg:gapexpmech1}.

\begin{algorithm}[ht]
\SetKwProg{Fn}{function}{\string:}{}
\SetKwFunction{Test}{\funcgapexpmech}
\SetKwInOut{Input}{input}
\DontPrintSemicolon
\Input{
same as Algorithm~\ref{alg:gapexpmech}
}
\Fn{\Test{$D$, $\mu$, $\epsilon$}}{
$\omega \gets \texttt{ExpMech}(D,\mu,\epsilon)$\;
\While{$\cod{true}$}{
$x \gets \logistic\big(\frac{\mu(D, \omega)}{2\sen{\mu}} - \ln\sum\limits_{\omega'\neq \omega}\exp({\frac{\mu(D, \omega')}{2\sen{\mu}}})\big)$\;
\If{$x>0$}{\textbf{break}\;}
}
\Return $\omega$, $x$\label{line:gapexpmechreturn1} 
}
\caption{Black-box Exp. Mech. with Gap}
\label{alg:gapexpmech1}
\end{algorithm}

\section{Experiments}\label{sec:experiments}

We now evaluate the algorithms proposed in this paper.


\subsection{Datasets}

We use the two real datasets from \cite{lyu2017understanding}: BMP-POS, Kosarak and 
 a synthetic dataset T40I10D100K created by 
 the generator from the IBM Almaden Quest research group. These datasets are collections of transactions (each transaction is a set of items).
 In our experiments, the queries correspond to the counts of each item (i.e. how many transactions contained item \#23?)
 The statistics of the datasets are listed below. 
\begin{table}[ht]
\caption{Statistics of Datasets\label{tab:datasets}}
\vspace{-10pt}
\begin{center}
\begin{tabular}{c c c} 
\Xhline{1.5\arrayrulewidth}
\textbf{Dataset} & \textbf{\# of Records} & \textbf{\# of Unique Items} \\
\hline
BMS-POS & 515,597 & 1,657 \\
Kosarak & 990,002 & 41,270\\
T40I10D100K & 100,000 & 942 \\
\Xhline{1.5\arrayrulewidth}
\end{tabular}
\end{center}
\end{table}






\subsection{Improving Query Estimates with Gap Information}\label{sec:postprocessing}
The first set of experiments is to measure how gap information can help improve estimates in selected queries. We use the setup of Sections \ref{sec:blue} and \ref{sec:gapsvt_measures}. That is, a data analyst splits the privacy budget $\epsilon$ in half. She uses the first half to select $k$ queries using \gaptopk or {\gapsvt}  (or \adaptivesvt) and then uses the second half of the privacy budget to obtain independent noisy measurements of each selected query. 

If one were unaware that gap information came for free, one would just use those noisy measurements as estimates for the query answers. The error of this approach is the gap-free baseline. However, since the gap information does come for free, we can use the postprocessing described in Sections \ref{sec:blue} and \ref{sec:gapsvt_measures} to improve accuracy (we call this latter approach \gapsvt with Measures and \gaptopk with Measures).

\begin{figure*}[!ht]
\centering
\captionsetup{width=.95\linewidth}
\begin{subfigure}[t]{0.4\textwidth}
\captionsetup{width=1.2\linewidth,format=hang}
\includegraphics[width=\linewidth]{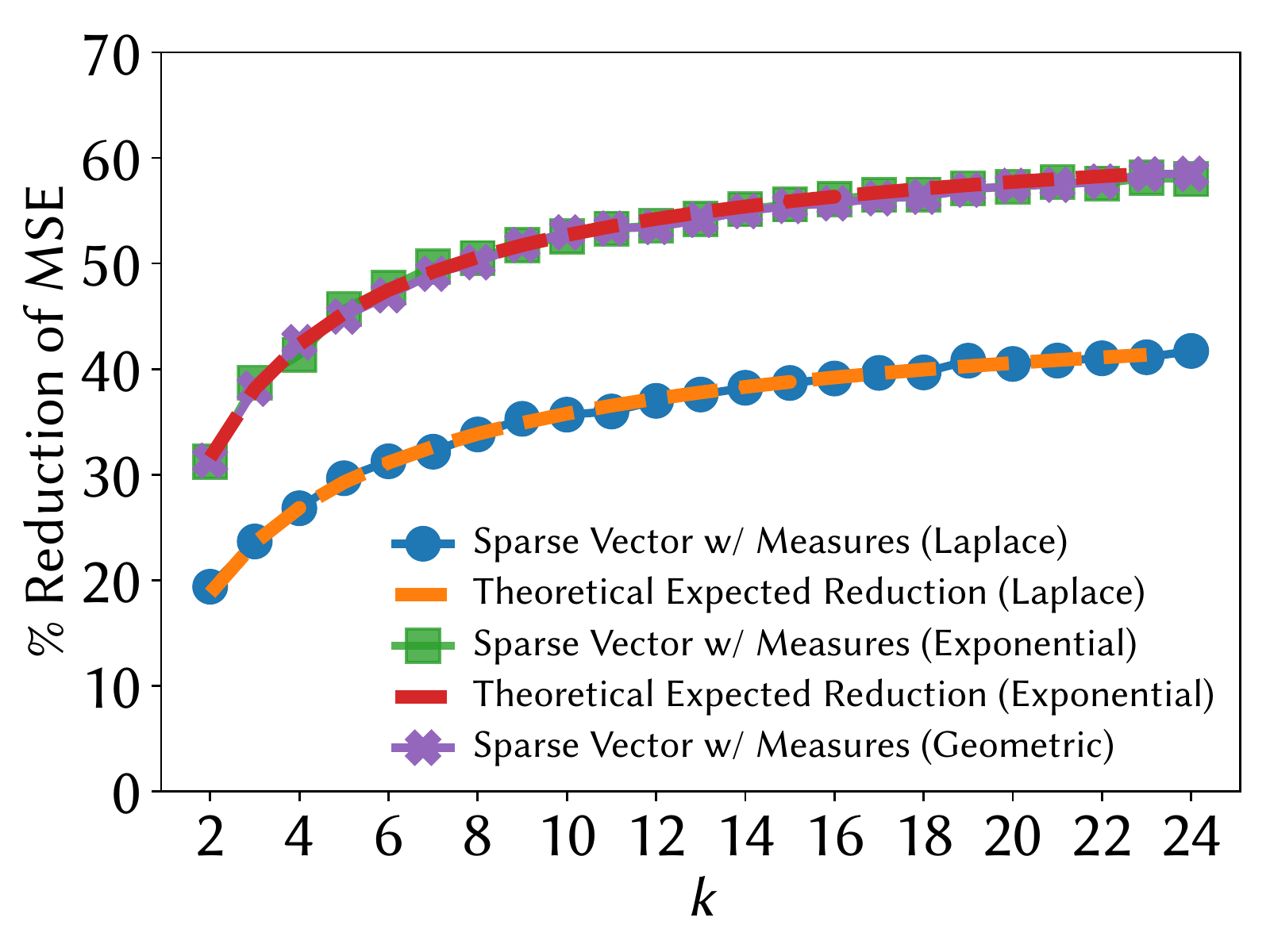} 
\caption{\svtmeasure, BMS-POS.\label{fig:svt_measure_bms_pos_counting}}
\end{subfigure}%
\hspace{0.1\textwidth}
\begin{subfigure}[t]{0.4\textwidth}
\captionsetup{width=1.2\linewidth,format=hang}
\includegraphics[width=\linewidth]{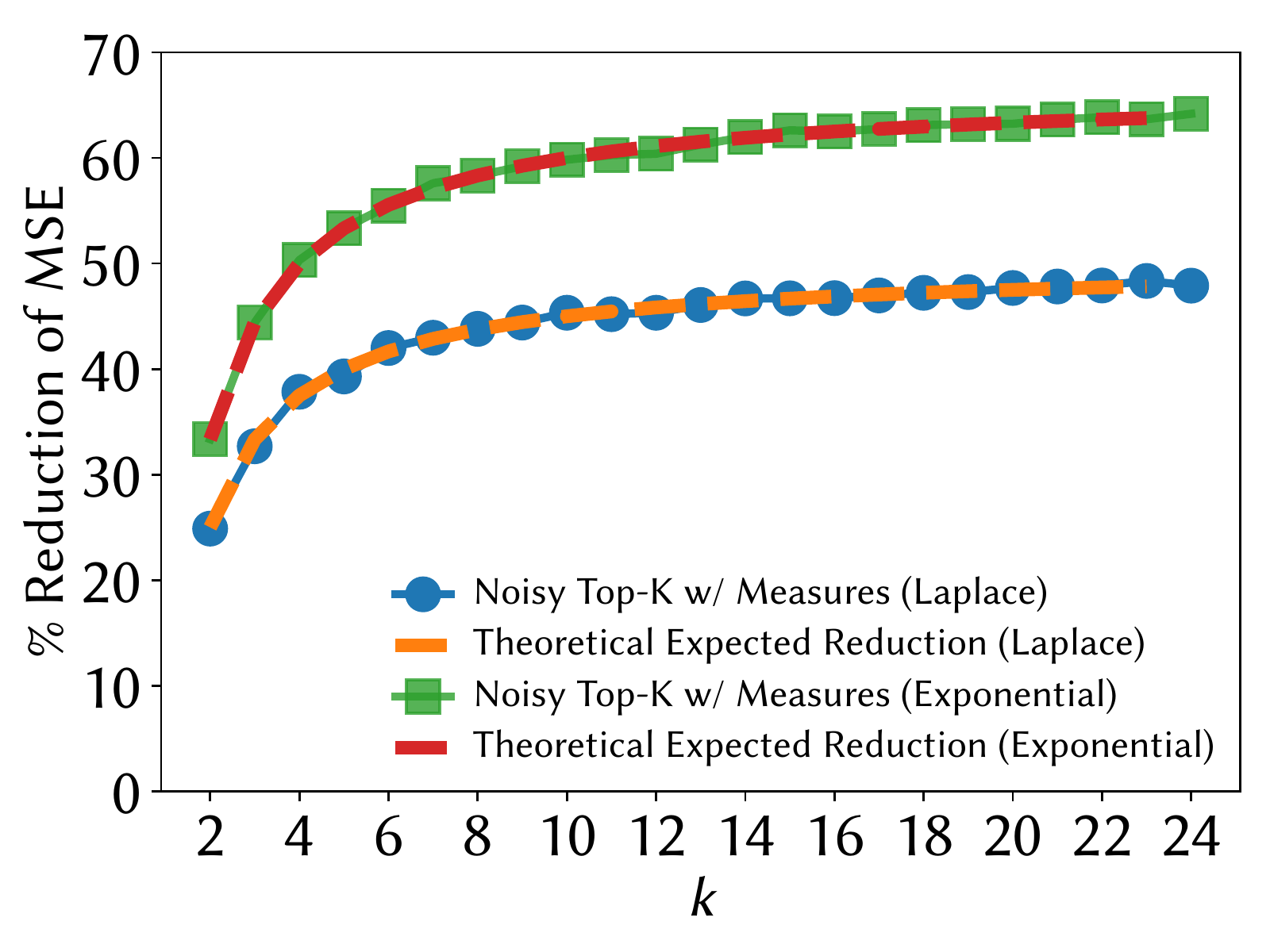} 
\caption{\topkmeasure, BMS-POS.\label{fig:noisy_topk_measure_bms_pos_counting}}
\end{subfigure}%
\caption{Percent reduction of Mean Squared Error on monotonic queries, for different $k$, for \gapsvt and \gaptopk when half the privacy budget is used for query selection and the other half is used for measurement of their answers. Privacy budget $\epsilon = 0.7$.\label{fig:noisy_topk_measure_counting} \label{fig:svt_measure_counting}}
\end{figure*}
\begin{figure*}[!ht]
\centering
\captionsetup{width=.95\linewidth}
\begin{subfigure}[t]{0.4\textwidth}
\captionsetup{width=1.2\linewidth,format=hang}
\includegraphics[width=\linewidth]{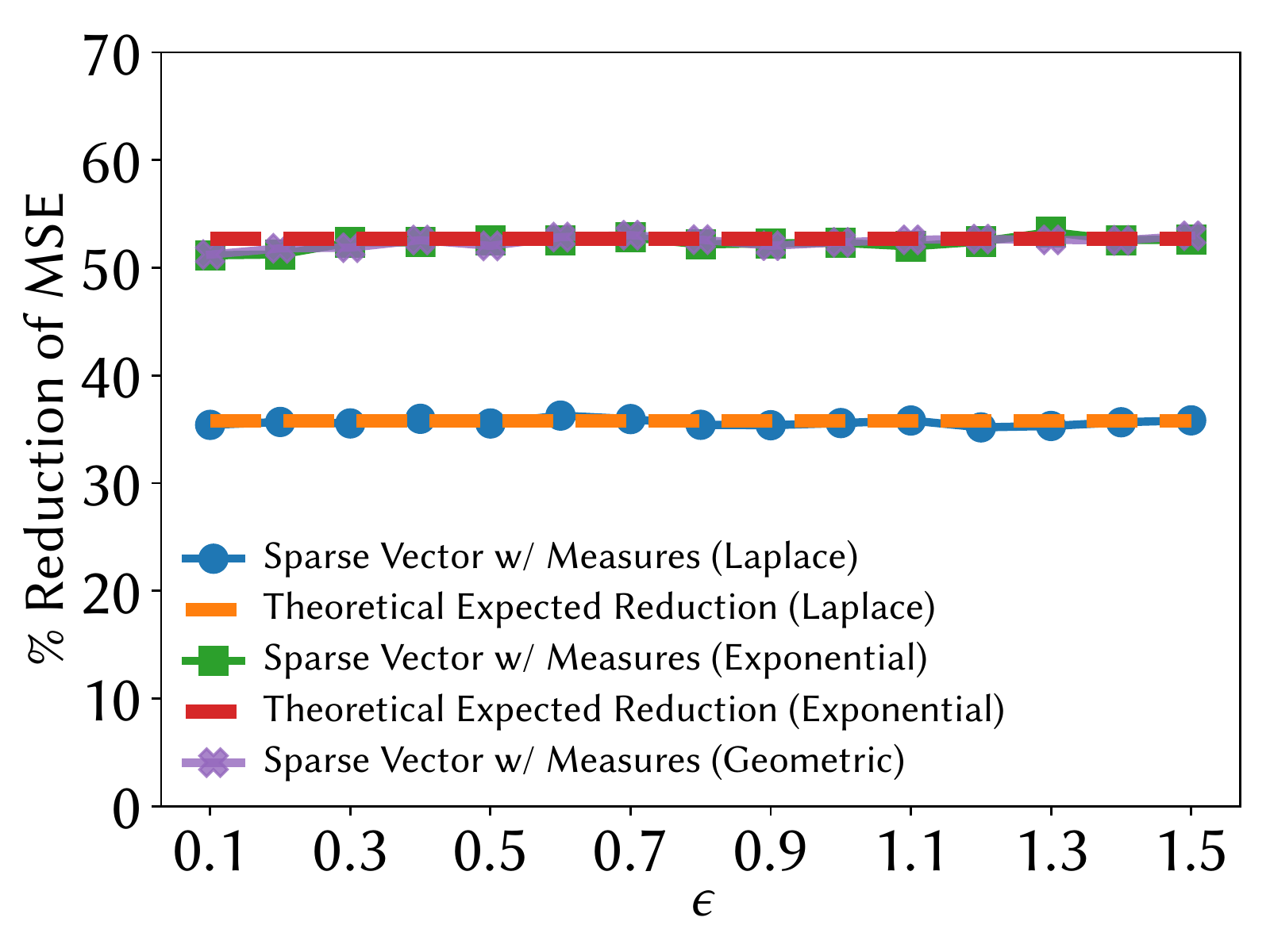} 
\caption{\svtmeasure, kosarak.\label{fig:svt_measure_kosarak_epsilons_counting}}
\end{subfigure}%
\hspace{0.1\textwidth}
\begin{subfigure}[t]{0.4\textwidth}
\captionsetup{width=1.2\linewidth,format=hang}
\includegraphics[width=\linewidth]{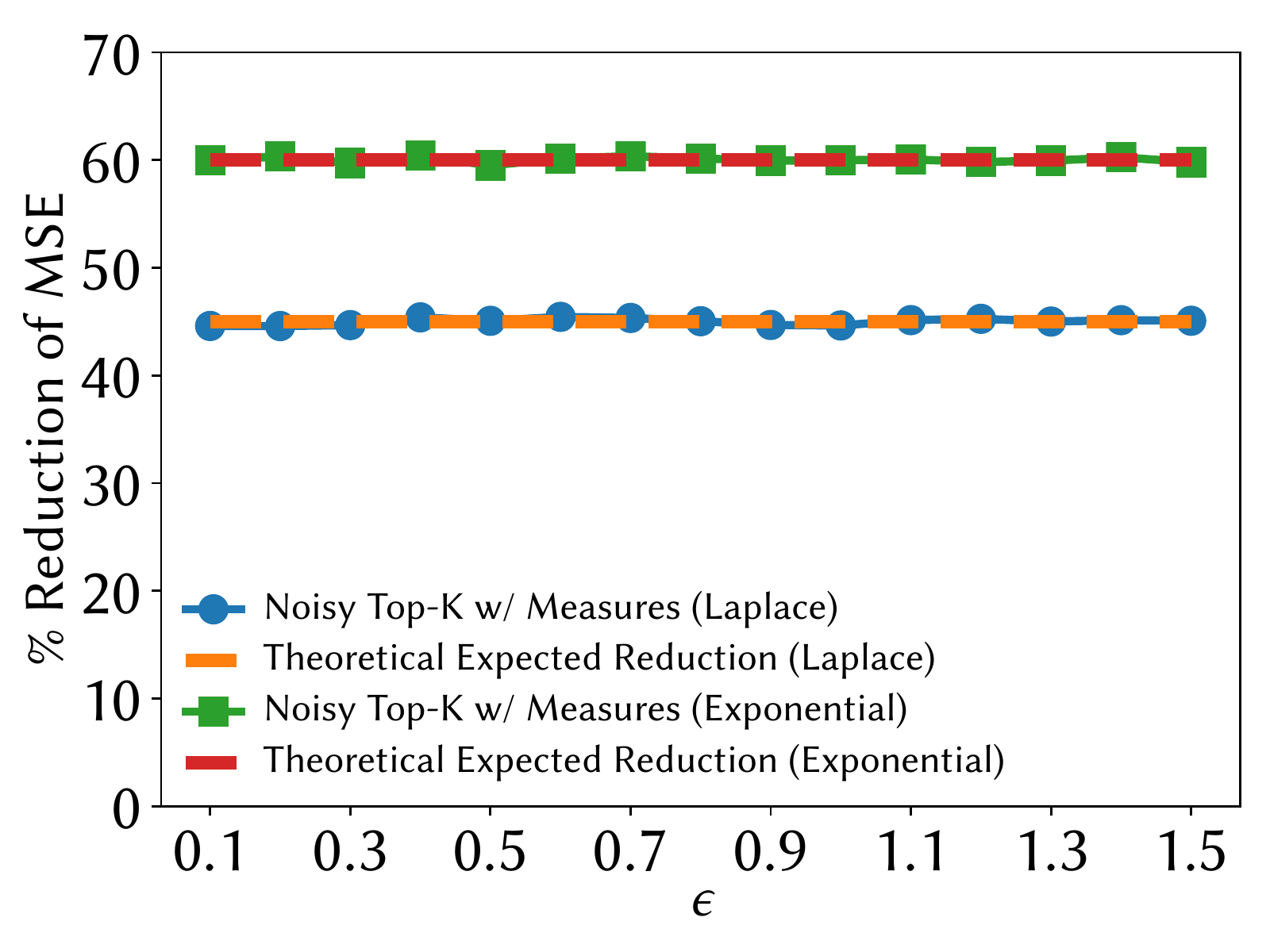} 
\caption{\topkmeasure, kosarak.\label{fig:noisy_topk_measure_kosarak_epsilons_counting}}
\end{subfigure}
\caption{Percent reduction of Mean Squared Error on monotonic queries, for different $\epsilon$, for \gapsvt and \gaptopk when half the privacy budget is used for query selection and the other half is used for measurement of their answers. The value of $k$ is set to 10.\label{fig:noisy_topk_measure_epsilons_counting} \label{fig:svt_measure_epsilons_counting}}
\end{figure*}

We first evaluate the percentage reduction of mean squared error (MSE) of the postprocessing approach compared to the gap-free baseline and compare this improvement to our theoretical analysis.
As discussed in Section~\ref{sec:gapsvt_measures}, we set the budget allocation ratio \emph{within} the \gapsvt algorithm (i.e., the budget allocation between the threshold and queries) to be $1:k^\frac{2}{3}$ for monotonic queries and $1:(2k)^\frac{2}{3}$ otherwise  -- such a ratio is recommended in \cite{lyu2017understanding} for the original \svt. The threshold used for \gapsvt is randomly picked from the top $2k$ to top $8k$ in each dataset for each run.\footnote{Selecting thresholds for SVT in experiments is difficult, but we feel this may be fairer than averaging the answer to the top $k^\text{th}$ and $k+1^\text{th}$ queries as was done in prior work  \cite{lyu2017understanding}.} All numbers plotted are averaged over $10,000$ runs. Due to space constraints, we only show experiments for counting queries (which are monotonic).

Our theoretical analysis in Sections \ref{sec:gapsvt_measures} and \ref{sec:blue} suggested that in the case of monotonic queries, the error reduction rate can reach up to 50\% when Laplace noise is used, and 66\% when exponential or geometric noise is used, as $k$ increases. This is confirmed in Figures 
\ref{fig:svt_measure_bms_pos_counting}, 
for \gapsvt and Figures 
\ref{fig:noisy_topk_measure_bms_pos_counting}, 
for our Top-K algorithm using the  BMS-POS dataset (results for the other datasets are nearly identical). These figures plot the theoretical and empirical percent reduction of MSE as a function of $k$ and show the power of the free gap information.

We also generated corresponding plots where $k$ is held fixed and the total privacy budget $\epsilon$ is varied. We only present the result for the kosarak dataset as results for the other datasets are nearly identical. For \gapsvt,  
Figures 
\ref{fig:svt_measure_kosarak_epsilons_counting}
confirms that this improvement is stable for different $\epsilon$ values. For our Top-K algorithm, 
Figures 
\ref{fig:noisy_topk_measure_kosarak_epsilons_counting}
confirms that this improvement is also stable for different values of $\epsilon$.

\begin{figure*}[!ht]
\centering
\captionsetup{width=.95\linewidth}
\begin{subfigure}[t]{0.325\textwidth}
\captionsetup{width=.9\linewidth,format=hang}
\includegraphics[width=\linewidth]{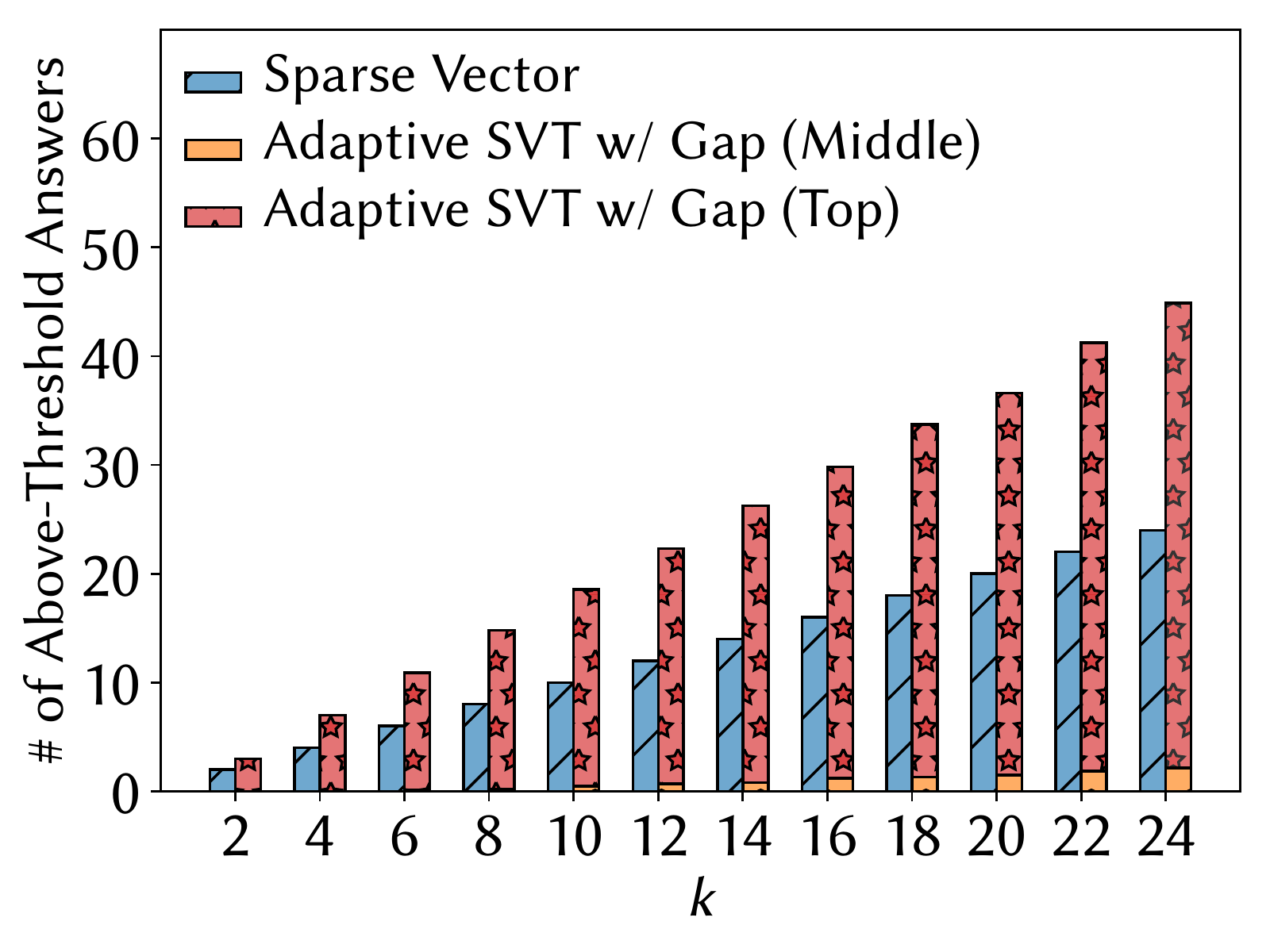}
\caption{BMS-POS.\label{fig:adaptive_ata_bms_pos_counting}}
\end{subfigure}%
\begin{subfigure}[t]{0.325\textwidth}
\captionsetup{width=.9\linewidth,format=hang}
\includegraphics[width=\linewidth]{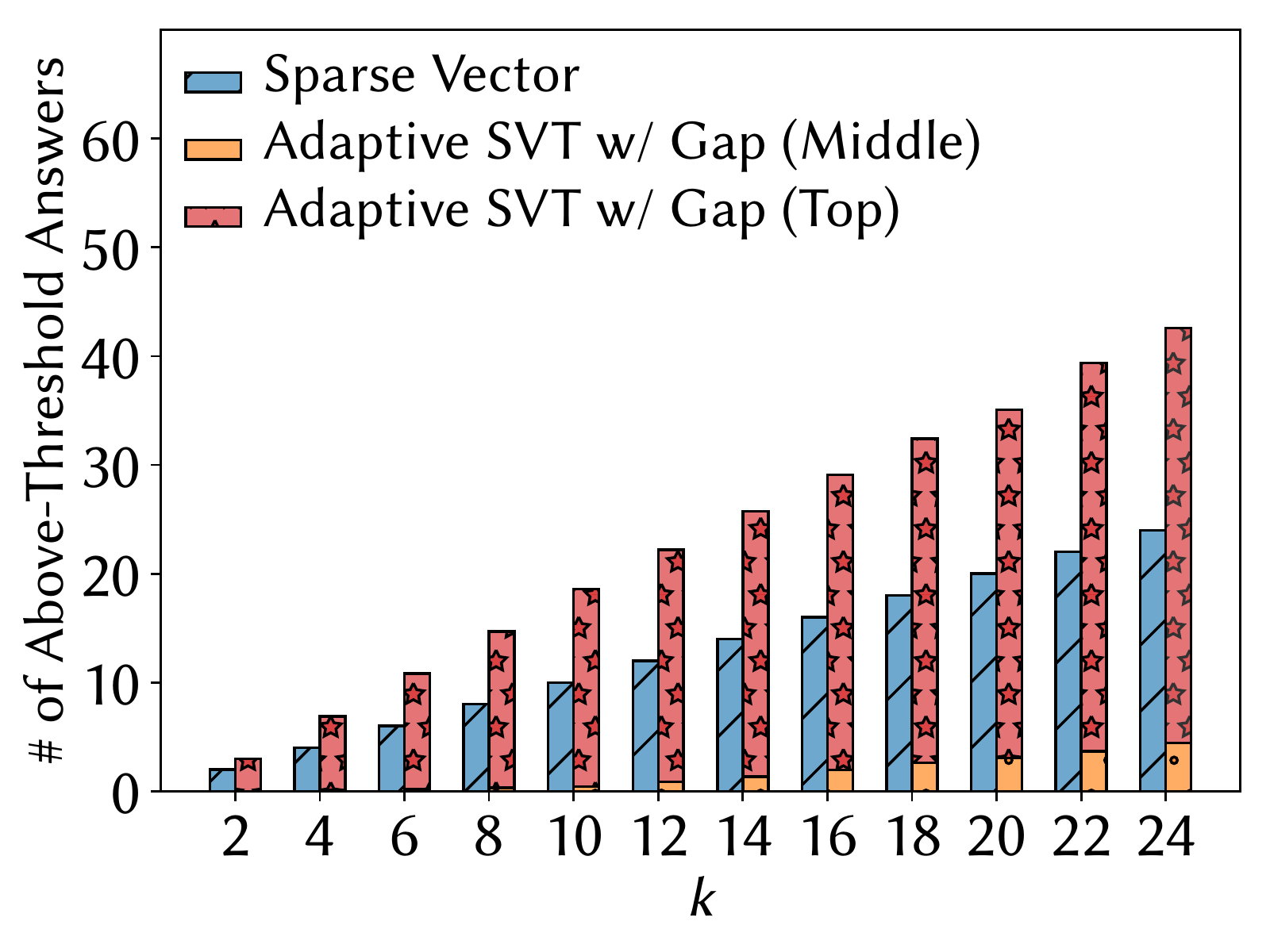}
\caption{kosarak.\label{fig:adaptive_ata_kosarak_counting}}
\end{subfigure}
\begin{subfigure}[t]{0.325\textwidth}
\captionsetup{width=.95\linewidth,format=hang}
\includegraphics[width=\linewidth]{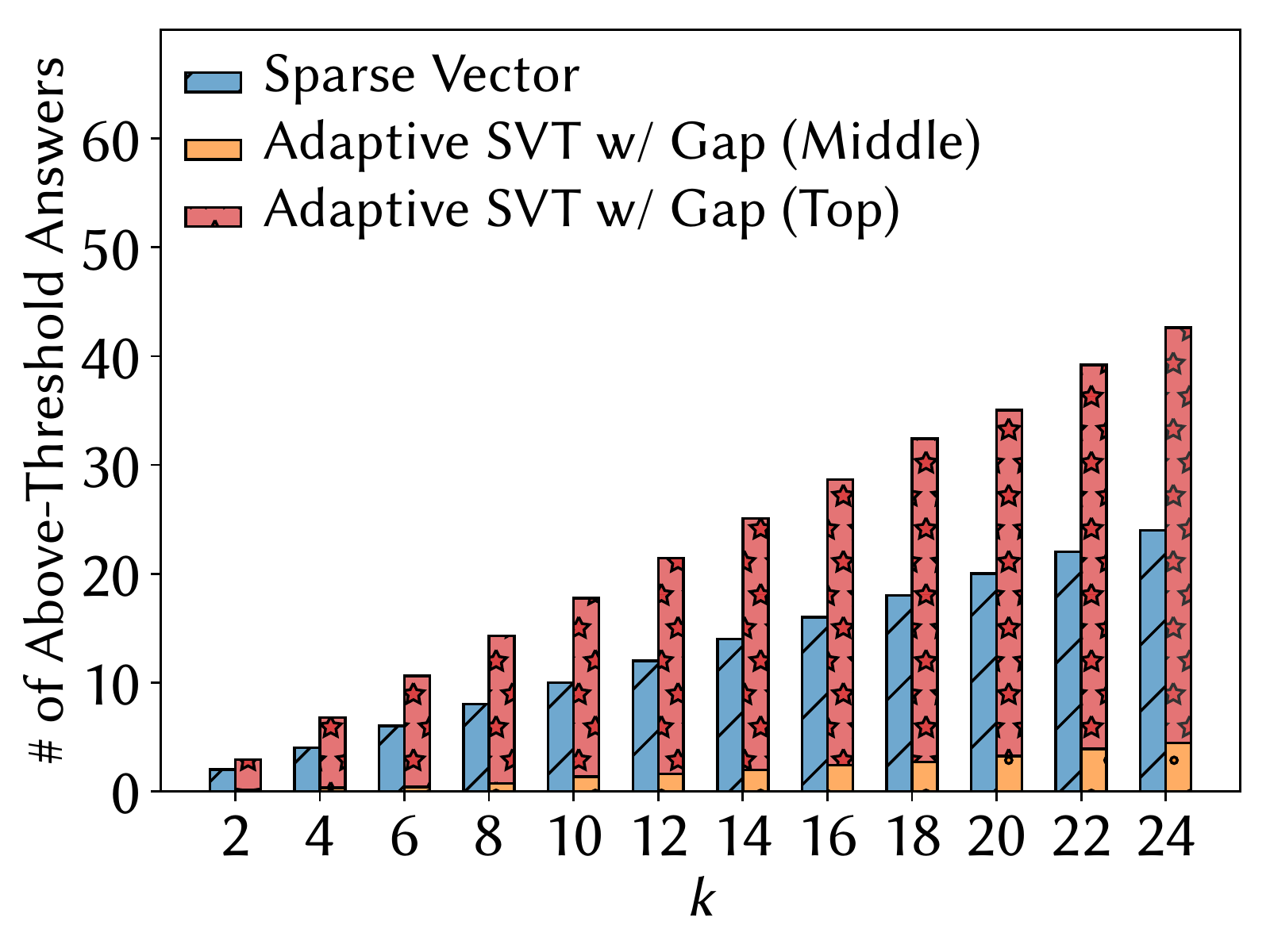}
\caption{T40I10D100K.\label{fig:adaptive_ata_t40_counting}}
\end{subfigure}%
\caption{\# of queries answered by SVT and \adaptivesvt under different $k$'s for monotonic queries. Privacy budget $\epsilon = 0.7$ and $x$-axis: $k$.\label{fig:adaptive_counting_answered}}
\end{figure*}
\begin{figure*}[!ht]
\centering
\captionsetup{width=.95\linewidth}
\begin{subfigure}[t]{0.33\textwidth}
\captionsetup{width=\linewidth,format=hang}
\includegraphics[width=\linewidth]{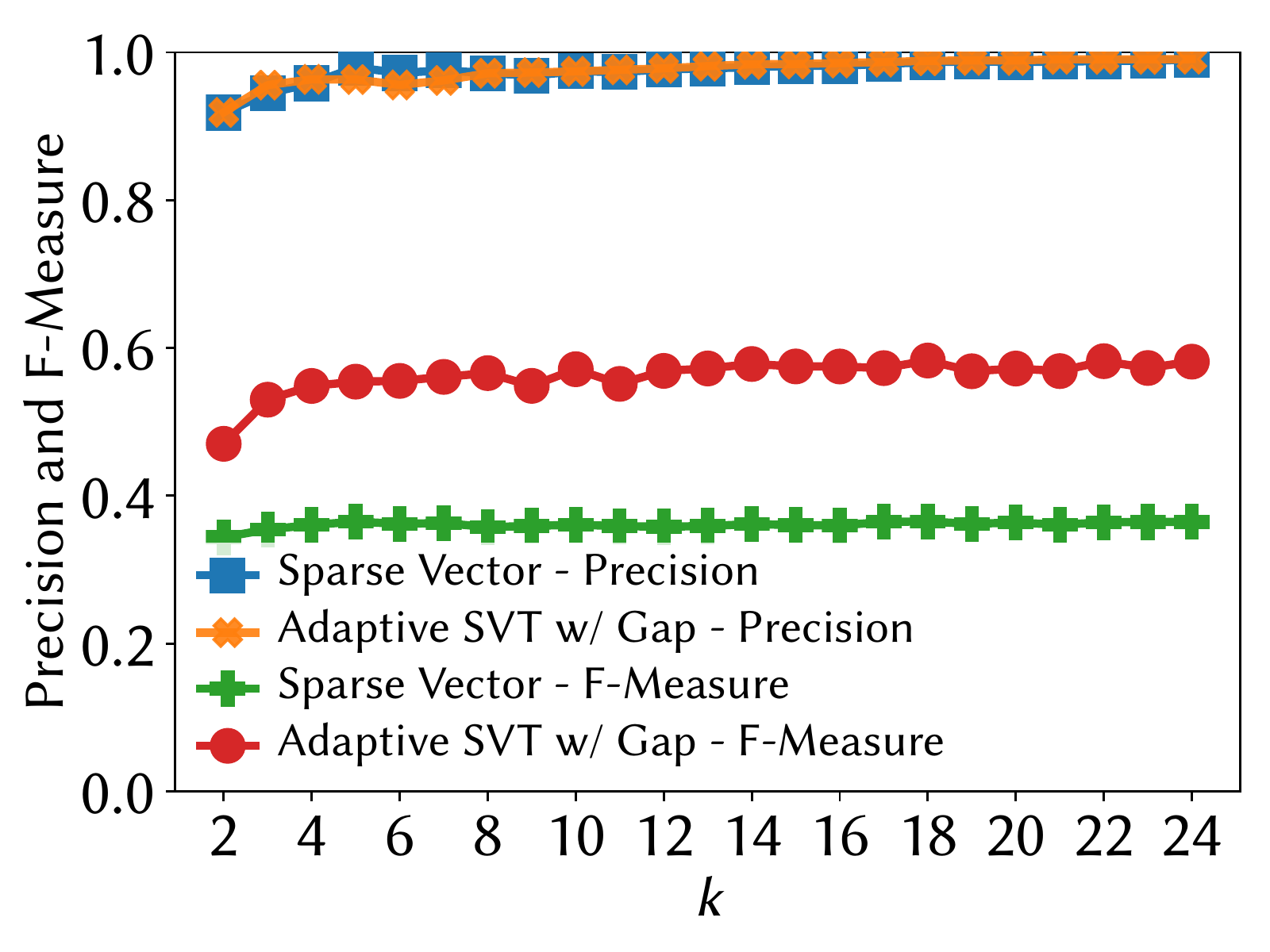}
\caption{BMS-POS.\label{fig:adaptive_precision_bms_pos_counting}}
\end{subfigure}%
\begin{subfigure}[t]{0.33\textwidth}
\captionsetup{width=\linewidth,format=hang}
\includegraphics[width=\linewidth]{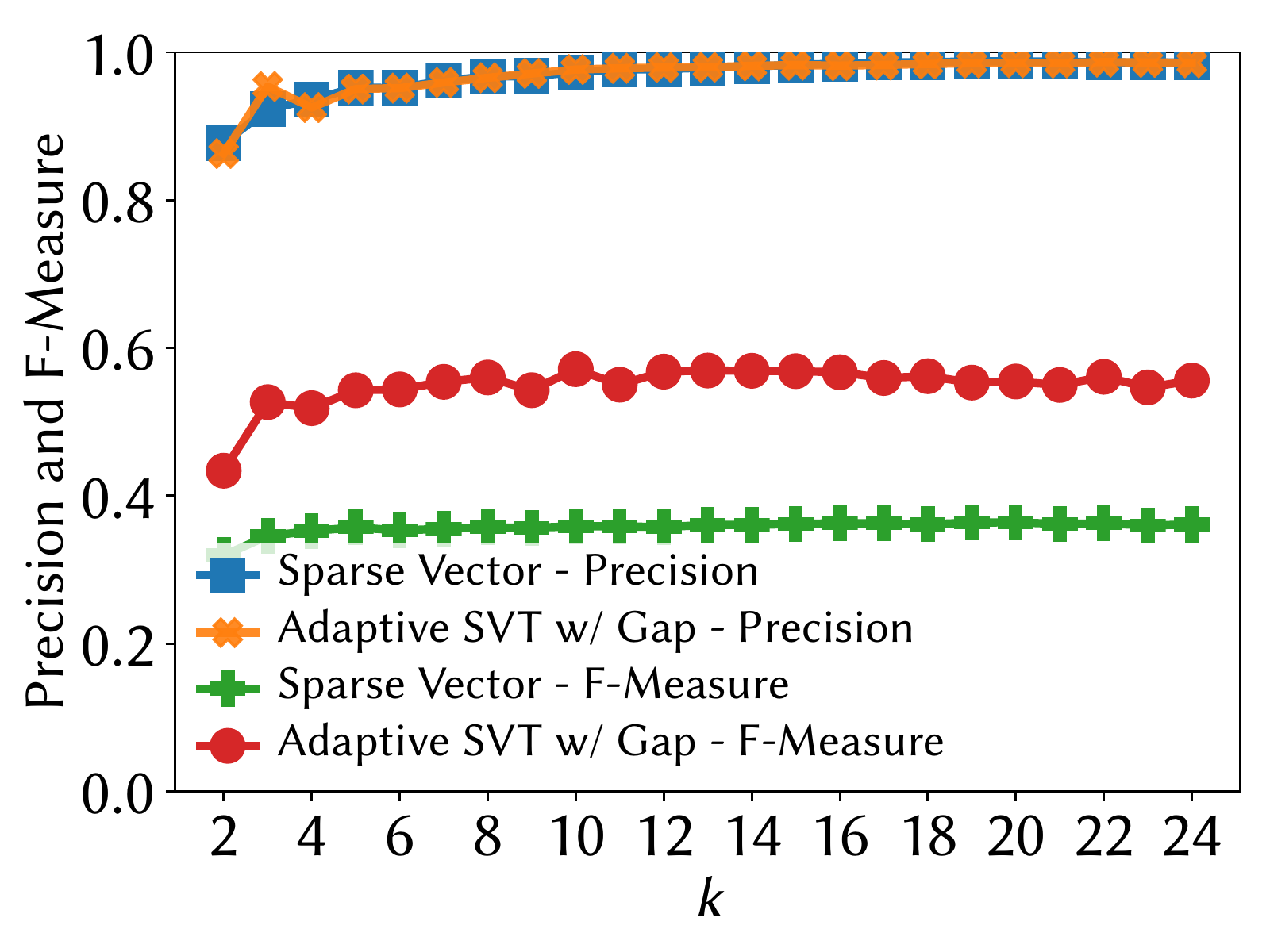}
\caption{kosarak.\label{fig:adaptive_precision_kosarak_counting}\\}
\end{subfigure}%
\begin{subfigure}[t]{0.33\textwidth}
\captionsetup{width=\linewidth,format=hang}
\includegraphics[width=\linewidth]{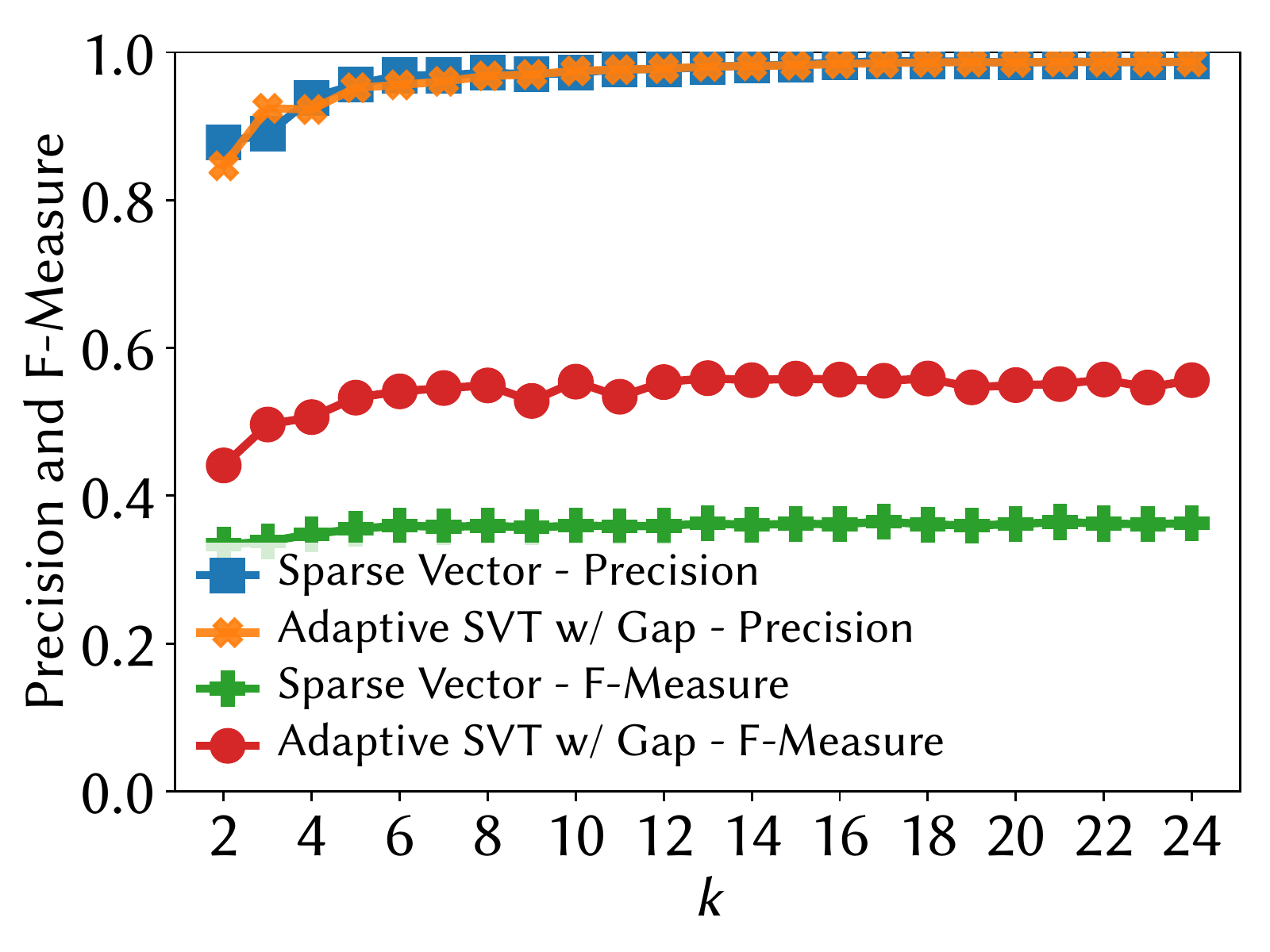}
\caption{T40I10D100K.\label{fig:adaptive_precision_t40_counting}}
\end{subfigure}
\caption{Precision and F-Measure of SVT and \adaptivesvt under different $k$'s for monotonic queries. Privacy budget $\epsilon = 0.7$ and $x$-axis: $k$. \label{fig:adaptive_counting_precision}}
\end{figure*}

\subsection{Benefits of Adaptivity}
In this section we present an evaluation of the budget-saving properties of our novel \adaptivesvt algorithm to show that it can answer more queries than \svt and \gapsvt at the same privacy cost (or, conversely, answer the same number of queries but with leftover budget that can be used for other purposes). First note that \svt and \gapsvt both answer exactly the same amount of queries, so we only need to compare \adaptivesvt to the original \svt \cite{diffpbook,lyu2017understanding}. In both algorithms, the budget allocation between the threshold noise and query noise is set according to the ratio $1:k^\frac{2}{3}$ (i.e., the hyperparameter $\theta$ in \adaptivesvt is set to $1/(1+k^\frac{2}{3})$), following the discussion in Section~\ref{sec:sparsegap}.
The threshold is randomly picked from the top $2k$ to top $8k$ in each dataset and all reported numbers are averaged over $10,000$ runs.

\paragraph*{Number of queries answered.}~We first compare the number of queries answered by each algorithm as the parameter $k$ is varied from 2 to 24 with a privacy budget of $\epsilon=0.7$ (results for other settings of the total privacy budget are similar). The results are shown in Figure \ref{fig:adaptive_ata_bms_pos_counting}, \ref{fig:adaptive_ata_kosarak_counting}, and \ref{fig:adaptive_ata_t40_counting}. In each of these bar graphs, the left (blue) bar is the number of answers returned by \svt and the right bar is the number of answers returned by \adaptivesvt. This right bar is broken down into two components: the number of queries returned from the top ``if'' branch (corresponding to queries that were significantly larger than the threshold even after a lot of noise was added) and the number of queries returned from the middle ``if'' branch. Queries returned from the top branch of \adaptivesvt have less privacy cost than those returned by \svt. Queries returned from the middle branch of \adaptivesvt have the same privacy cost as in \svt.
%
We see that most queries are answered in the top branch of \adaptivesvt, meaning that the above-threshold queries were generally large (much larger than the threshold). Since \adaptivesvt uses more noise in the top branch, it uses less privacy budget to answer those queries and uses the remaining budget to provide additional answers (up to an average of 20 more answers when $k$ was set to 24). 



\paragraph*{Precision and F-Measure.}~
Although the adaptive algorithm can answer more above-threshold queries than the original, one can still ask the question of whether the returned queries really are above the threshold. Thus we can look at the precision of the returned results (the fraction of returned queries that are actually above the threshold) and the widely used F-Measure (the harmonic mean of precision and recall). One would expect that the precision of \adaptivesvt should be less than that of \svt, because the adaptive version can use more noise when processing queries. In Figures \ref{fig:adaptive_precision_bms_pos_counting}, \ref{fig:adaptive_precision_kosarak_counting}, and \ref{fig:adaptive_precision_t40_counting} we compare the precision and F-Measure of the two algorithms. Generally we see very little difference in precision. On the other hand, since \adaptivesvt answers more queries while maintaining high precision, the recall of \adaptivesvt would be much larger than \svt, thus leading to the F-Measure being roughly 1.5 times that of \svt. 

\begin{figure}[!ht]
\centering
\includegraphics[scale=0.45]{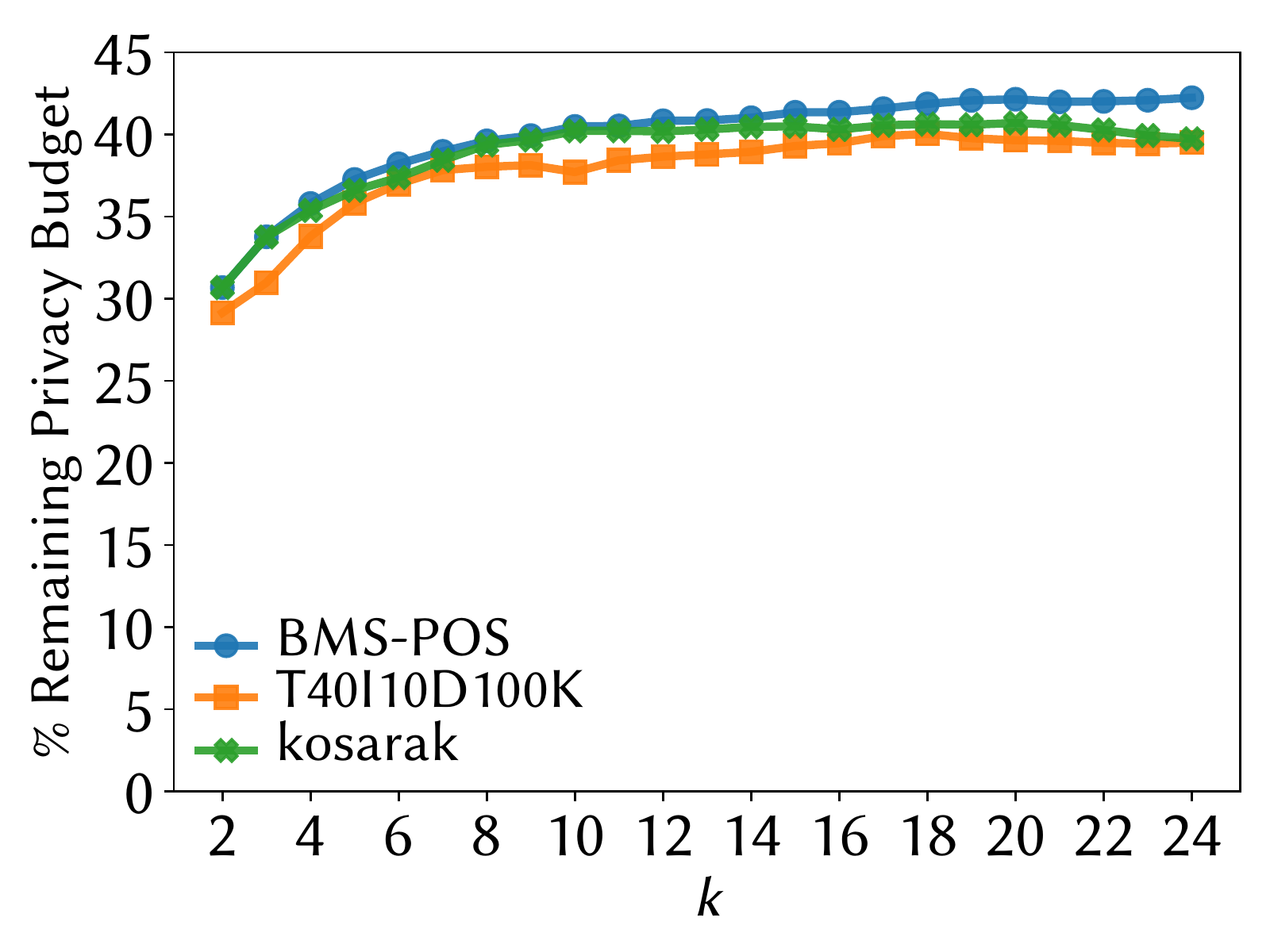}
\caption{Remaining privacy budget when \adaptivesvt is stopped after answering $k$ queries using different datasets. Privacy budget $\epsilon = 0.7$.\label{fig:adaptive_budget_kosarak_counting}}
\end{figure}
\paragraph*{Remaining Privacy Budget.}~If a query is large, \adaptivesvt may only need to use a small part of the privacy budget to determine that the query is likely above the noisy threshold. That is, it may produce an output in its top branch, where a lot of noise (hence less privacy budget) is used. If we stop \adaptivesvt after $k$ returned queries, it may still have some privacy budget left over (in contrast to standard versions of Sparse Vector, which use up all of their privacy budget). This remaining privacy budget can then be used for other data analysis tasks. 
For all three datasets, Figure \ref{fig:adaptive_budget_kosarak_counting} 
 shows the percentage of privacy budget that is left over when \adaptivesvt is run with parameter $k$ and stopped after $k$ queries are returned. We see that roughly 40\% of the privacy budget is left over, confirming that \adaptivesvt is able to save a significant amount of privacy budget.

\section{General Randomness Alignment and Proof of Lemma \lowercase{\ref{lem:alignmentbound}}}\label{subsec:incproofs}
In this section, we prove Lemma \ref{lem:alignmentbound}, which was used to establish the privacy properties of the algorithms we proposed. The proof of the lemma requires a more general theorem for working with randomness alignment functions. We explicitly list all of the conditions needed for the sake of reference (many prior works had incorrect proofs because they did not have such a list to follow).
In the general setting, the method of randomness alignment requires the following steps.
\begin{enumerate}[leftmargin=0.5cm,itemsep=0cm,topsep=0.5em,parsep=0.5em]
    \item For each pair of adjacent databases $D\sim D^\prime$ and $\omega\in\Omega$, define a randomness alignment $\ali$ or local alignment functions $\alio: \sdo\rightarrow\sdop$ (see notation in Table \ref{tab:notation}). In the case of local alignments this involves proving that if $M(D,H)=\omega$ then $M(D^\prime, \alio(H))=\omega$.

    \item Show that $\ali$ (or all the $\alio$) is  one-to-one (it does not need to be onto). That is, if we know $D,D^\prime,\omega$ and we are given the value $\ali(H)$ (or $\alio(H)$), we can obtain the value $H$. 
    
    \item For each pair of adjacent databases $D\sim D^\prime$, bound the \emph{alignment cost} of $\ali$ ($\ali$ is either given or constructed by piecing together the local alignments). Bounding the alignment cost means the following: If $f$ is the density (or probability mass) function of $H$, find a constant $a$ such that $\frac{f(H)}{f(\ali(H))}\leq a$ for all $H$ (except a set of measure 0). In the case of local alignments, one can instead  show the following. For all $\omega$, and adjacent $D\sim D^\prime$ the ratio $\frac{f(H)}{f(\alio(H))}\leq a$ for all $H$ (except on a set of measure 0). 
    
    \item Bound the \emph{change-of-variables cost} of $\ali$ (only necessary when $H$ is not discrete). One must show that the Jacobian  of $\ali$, defined as $J_{\ali} =\frac{\partial \ali}{\partial H}$, exists (i.e. $\ali$ is differentiable) and is continuous except on a set of measure 0. Furthermore,  for all pairs $D\sim D^\prime$, show the quantity $\abs{\det J_{\ali}}$ is lower bounded by some constant $b > 0$. If $\ali$ is constructed by piecing together local alignments $\alio$ then this is equivalent to showing the following (i) $\abs{\det J_{\alio}}$ is lower bounded by some constant $b > 0$ for every $D\sim D^\prime$ and $\omega$; and (ii) for each $D\sim D^\prime$, the set $\Omega$ can be partitioned into countably many disjoint measurable sets $\Omega=\bigcup_i \Omega_i$ such that whenever $\omega$ and $\omega^*$ are in the same partition, then $\alio$ and $\alios$ are the same function. Note that this last condition (ii) is equivalent to requiring that the local alignments must be defined without using the axiom of choice (since non-measurable sets are not constructible otherwise) and
    for each $D\sim D^\prime$, the number of distinct local alignments is countable. That is, the set $\{\alio\mid \omega\in\Omega\}$ is countable (i.e., for many choices of $\omega$ we get the same exact alignment function).
\end{enumerate}

\begin{theorem}\label{thm:main}
Let $M$ be a randomized algorithm that terminates with probability 1 and suppose the  number  of  random  variables  used  by $M$ can  be  determined from its output. If, for all pairs of adjacent databases $D\sim D^\prime$, there exist randomness alignment functions $\ali$ (or local alignment functions $\alio$ for all $\omega\in\Omega$ and $D\sim D^\prime$) that satisfy conditions 1 though 4 above,  then  $M$ satisfies $\ln(a/b)$-differential privacy.
\end{theorem}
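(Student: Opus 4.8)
The goal is to establish the inequality $\prob{\sde}\le (a/b)\,\prob{\sdep}$ for every measurable $E\subseteq\Omega$ and every adjacent pair $D\sim D^\prime$; since adjacency is symmetric and the hypotheses are assumed for \emph{all} adjacent pairs, this is exactly $\ln(a/b)$-differential privacy. The first step is to reduce to the case of a single randomness alignment. If we are only given local alignments $\alio$, I would piece them together by setting $\ali(H)=\alios(H)$ where $\omega^*=M(D,H)$, defined on the probability-one set where $M(D,\cdot)$ terminates. The local version of condition~1 then gives $M(D^\prime,\ali(H))=M(D,H)$. Injectivity of $\ali$ follows because $\{\sdo\}_\omega$ partitions the terminating set, the sets $\{\sdop\}_\omega$ are pairwise disjoint, $\ali$ sends $\sdo$ into $\sdop$, and each $\alio$ is injective on $\sdo$ by condition~2; so if $\ali(H_1)=\ali(H_2)$ the two images lie in the same $\sdop$, forcing a common $\omega$ and then $H_1=H_2$. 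Measurability of $\ali$ uses condition~4(ii): there are only countably many distinct local alignments, so $\ali$ is a countable patchwork of measurable maps $\phi^{(i)}$ on the measurable pieces $\{M(D,\cdot)\in\Omega_i\}$. From here on I treat $\ali$ as given with the properties: injective, $M(D,H)=M(D^\prime,\ali(H))$, $f(H)/f(\ali(H))\le a$ a.e., and $\ali$ piecewise-$C^1$ with $\abs{\det J_{\ali}}\ge b$ a.e.

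Next I would decompose the sample space by how much randomness $M$ actually reads. Let $\Omega^{(N)}$ be the set of outputs for which $M$ reads exactly the first $N$ coordinates of $H$; by hypothesis $\Omega=\bigcup_N\Omega^{(N)}$ is a measurable partition. Refining it by the countable partition $\{\Omega_i\}$ of condition~4(ii) gives a countable measurable partition $\{\Omega_\ell\}$ such that on each $\Omega_\ell$ the alignment equals a fixed map $\phi^{(\ell)}$ and $M$ reads exactly the first $N_\ell$ coordinates. Set $S_\ell=\set{H\mid M(D,H)\in E\cap\Omega_\ell}$ and $S'_\ell=\set{H'\mid M(D^\prime,H')\in E\cap\Omega_\ell}$. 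Then $\sde=\bigcup_\ell S_\ell$ and $\sdep=\bigcup_\ell S'_\ell$ are disjoint unions, and on $S_\ell$ the output depends only on $(\eta_1,\dots,\eta_{N_\ell})$, so $S_\ell$ has product form $A_\ell\times\bbR^{\infty}$ for a measurable $A_\ell\subseteq\bbR^{N_\ell}$; correspondingly $\phi^{(\ell)}$ acts as the identity on coordinates past $N_\ell$, so its Jacobian on $A_\ell$ reduces to the $N_\ell\times N_\ell$ block whose determinant condition~4 bounds below by $b$. This is what makes the ``infinite Jacobian'' and the infinite product of densities meaningful: they collapse to a genuine finite-dimensional change of variables on each piece.

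On each piece I then apply the change-of-variables formula for the injective piecewise-$C^1$ map $\phi^{(\ell)}$. By conditions~1, $\ali(S_\ell)\subseteq S'_\ell$. Writing $f$ for the density of the (finitely many) coordinates that matter on $S_\ell$, conditions~3 and~4 give the pointwise bound $f(H)\le (a/b)\,f(\ali(H))\,\abs{\det J_{\ali}(H)}$, hence
\begin{align*}
\prob{S_\ell}&=\int_{S_\ell} f(H)\,dH\\
&\le \frac{a}{b}\int_{S_\ell} f(\ali(H))\,\abs{\det J_{\ali}(H)}\,dH\\
&=\frac{a}{b}\int_{\ali(S_\ell)} f(H')\,dH'\ \le\ \frac{a}{b}\,\prob{S'_\ell}.
\end{align*}
Summing over the countably many $\ell$ and using countable additivity on both sides yields $\prob{\sde}\le (a/b)\,\prob{\sdep}$, which completes the proof.

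The routine parts are the per-piece estimate and the summation; the genuinely delicate points, which I would spell out carefully, are (i) using ``the number of random variables is determined by the output'' to justify that each piece is truly finite-dimensional and that the alignment is the identity off finitely many coordinates there, so that the product density and Jacobian are well defined and match the finite block in condition~4; (ii) using the countability in condition~4(ii) both to guarantee measurability of the pieced-together alignment (so that no non-measurable sets sneak in via the axiom of choice) and to legitimately decompose into countably many pieces that can be summed; and (iii) invoking the correct change-of-variables/area formula for injective Lipschitz (piecewise-$C^1$) maps rather than the smooth-diffeomorphism version. I expect (i) to be the main obstacle, since it is where the infinite-dimensional bookkeeping must be reconciled with the finitary conditions actually assumed in the lemma.
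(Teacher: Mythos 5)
Your proposal is correct and takes essentially the same route as the paper's proof: partition the event $E$ into countably many pieces on which the alignment coincides with a single fixed function (the paper groups outputs by which distinct local alignment $\phi_i$ they induce, you additionally refine by how many coordinates $M$ reads), apply the injective change-of-variables formula with the bounds $f(H)/f(\phi(H))\le a$ and $\abs{\det J_{\phi}}\ge b$ on each piece, and sum by countable additivity. Your extra bookkeeping about reducing to a finite-dimensional block and the measurability of the pieced-together map merely spells out steps the paper's proof treats implicitly via the hypothesis that the number of random variables used is determined by the output.
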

\begin{proof} 
We need to show that for all $D\sim D'$ and $E\subseteq \Omega$, $\prob{\sde} \leq (a/b) \prob{\sdep}$.

First we note that if we have a randomness alignment $\ali$, we can define corresponding local alignment functions as follows $\alio(H)=\ali(H)$ (in other words, they are all the same). The conditions on local alignments are a superset of the conditions on randomness alignments, so for the rest of the proof we work with the $\alio$.

Let $\phi_1, \phi_2, \ldots $ be the distinct local alignment functions (there are countably many of them by Condition 4).
Let $E_i = \set{\omega\in E \mid \alio = \phi_i}$. 
By Conditions 1 and 2 we have that for each $\omega\in E_i$, $\phi_i$ is one-to-one on $\sdo$ and $\phi_i(\sdo)\subseteq \sdop $.
Note that $\sdei = \cup_{\omega\in E_i}\sdo$ and $\sdeip = \cup_{\omega\in E_i}\sdop$. Furthermore, the sets $\sdo$ are pairwise disjoint for different $\omega$ and the sets $\sdop$ are pairwise disjoint for different $\omega$. It follows that $\phi_i$ is one-to-one on $\sdei$ and $\phi_i(\sdei)\subseteq \sdeip$.
Thus for any $H^\prime\in \phi_i(\sdei)$ there exists $H\in \sdei$ such that $H=\phi_i^{-1}(H^\prime)$.
By Conditions 3 and 4, we have $\frac{f(H)}{f(\phi_i(H))} = \frac{f(\phi_i^{-1}(H'))}{f(H')} \leq a$ for all $H\in \sdei$, and $\abs{\det J_{\phi_i} } \geq b$ (except on a set of measure 0).
Then the following is true:
\begin{align*}
    \lefteqn{\prob{\sdei} = \int_{\sdei}f(H)dH} \\
    &= \int_{\phi_i(\sdei)} f(\phi_i^{-1}(H^\prime))~\frac{1}{\abs{\det J_{\phi_i}}}dH'\\
    &\leq \int_{\phi_i(\sdei)} a f(H')\frac{1}{b}dH'
    = \frac{a}{b} \int_{\phi_i(\sdei)} f(H')dH' \\
    &\leq \frac{a}{b} \int_{\sdeip} f(H^\prime)dH^\prime = \frac{a}{b} \prob{\sdeip}.
\end{align*}
The second equation is the change of variables formula in calculus. The last inequality follows from the containment  $\phi_i(\sdei) \subseteq \sdeip$ and the fact that the density $f$ is nonnegative. In the case that $H$ is discrete, simply replace the density $f$ with a probability mass function, change the integral into a summation, ignore the Jacobian term and set $b=1$.
Finally, since $E=\cup_iE_i$ and $E_i\cap E_j =\emptyset$ for $i\ne j$, we conclude that
\[
\prob{\sde}\!=\!  \sum_i\prob{ \sdei} \!
\leq \!\frac{a}{b}\! \sum_i \prob{ \sdeip}\! =\! \frac{a}{b} \prob{\sdep}.
\]
\end{proof}

We now present the proof of \textbf{Lemma \ref{lem:alignmentbound}.}

\begin{proof} Let $\alio(H)= H'=(\eta'_1, \eta'_2, \ldots)$. By acyclicity there is some permutation $\pi$ under which $\eta_{\pi(1)} = \eta'_{\pi(1)} - c$ where $c$ is some constant depending on $D\sim D'$ and $\omega$. Thus $\eta_{\pi(1)}$ is uniquely determined by $H'$. Now (as an induction hypothesis) assume $\eta_{\pi(1)}, \ldots, \eta_{\pi(j-1)}$ are uniquely determined by $H'$ for some $j > 1$, then $\eta_{\pi(j)} = \eta_{\pi(j)}'-\alpsi{j}(\eta_{\pi(1)}, \ldots, \eta_{\pi(j-1)})$, so $\eta_{\pi(j)}$ is also uniquely determined by $H'$. Thus by strong induction $H$ is uniquely determined by $H'$, i.e., $\alio$ is one-to-one. 
It is easy to see that with this ordering, $J_{\alio}$ is an upper triangular matrix with $1$'s on the diagonal. Since permuting variables doesn't change  $\abs{\det J_{\alio}}$, we have $\abs{\det J_{\alio}} = 1$ since that is the determinant of upper triangular matrices.
Furthermore, (recalling the definition of the cost of $\alio$), clearly 
\[
\ln \frac{f(H)}{f(\phi_\omega(H))} = \sum_i \ln \frac{f_i(\eta_i)}{f_i(\eta_i^\prime)} \leq \sum_i c_i\abs{ \eta_i-\eta^\prime_i} \leq \epsilon
\]
The first inequality follows from Condition \ref{conditioniii} of Lemma \ref{lem:alignmentbound} and the second from Condition \ref{conditioniv}.
\end{proof}

\section{Conclusions and Future Work}\label{sec:conc}
In this paper we introduced variations of \svt, \noisymax, and Exponential Mechanism that provide additional noisy gap information for free (without affecting the privacy cost). We also presented applications of how to use the gap information. Future work includes applying this gap information in larger differentially private algorithms to increase the accuracy of privacy-preserving data analysis.


\begin{acknowledgements}
This work was supported by NSF Awards CNS-1702760 and CNS-1931686.

\end{acknowledgements}

%
%

\bibliographystyle{spmpsci}      
\bibliography{diffpriv.bib}   

\iffullversion
\begin{appendix}
\section{Proofs}





\subsection{Proof of Theorem \ref{thm:blue} (BLUE)}
\begin{proof}
Let $q_1, \ldots, q_k$ be the \emph{true} answers to the $k$ queries selected by Noisy-Top-K-with-Gap algorithm. Let $\alpha_i$ be the estimate of $q_i$ using Laplace mechanism, and $g_i$ be the estimate of the gap between $q_i$ and $q_{i+1}$ from Noisy-Top-K-with-Gap.

Recall that $\alpha_i = q_i + \xi_i$ and $g_i = q_i + \eta_i - q_{i+1} - \eta_{i+1}$ where $\xi_i$ and $\eta_i$  are independent Laplacian random variables. Assume without loss of generality that $\var(\xi_i)=\sigma^2$ and $\var(\eta_i) = \lambda\sigma^2$. 
Write in vector notation 
\[
\vq = \begin{bmatrix}q_1 \\ \vdots \\ q_k \end{bmatrix}, 
\vxi = \begin{bmatrix}\xi_1 \\ \vdots \\ \xi_k \end{bmatrix}, 
\veta = \begin{bmatrix}\eta_1 \\ \vdots \\ \eta_k \end{bmatrix}, 
\valpha = \begin{bmatrix}\alpha_1 \\ \vdots \\ \alpha_k \end{bmatrix}, 
\vg = \begin{bmatrix}g_1 \\ \vdots \\ g_{k-1} \end{bmatrix},
\]
then $\valpha = \vq + \vxi$ and $\vg = N(\vq+\veta)$ where
\[N = \begin{bmatrix}
\makebox[1em]{$1$} & \makebox[1em]{$-1$} & \makebox[1em]{} & \makebox[1em]{}\\
\makebox[1em]{} & \makebox[1em]{$\ddots$} & \makebox[1em]{$\ddots$}& \makebox[1em]{}\\
\makebox[1em]{} & \makebox[1em]{} &\makebox[1em]{$1$} & \makebox[1em]{$-1$} 
\end{bmatrix}_{(k-1)\times k}.\]


Our goal is then to find the \emph{best linear unbiased estimate} (BLUE) $\vbeta$ of $\vq$ in terms of $\valpha$ and $\vg$. In other words, we need to find a $k\times k$ matrix $X$ and a $k\times (k-1)$ matrix $Y$ such that 
\begin{equation}\label{eq:blue1}
  \vbeta =X\valpha + Y\vg 
\end{equation}
 with $E(\norm{\vbeta - \vq}^2) $ as small as possible. Unbiasedness implies that $\forall \vq, E(\vbeta) =  X\vq + YN\vq = \vq$. Therefore $X+YN = I_k$ and thus
\begin{equation}\label{eq:blue3}
  X = I_k - YN.  
\end{equation} 
 Plugging this into \eqref{eq:blue1}, we have $\vbeta = (I_k - YN)\valpha + Y\vg = \valpha -Y(N\valpha - \vg)$. Recall that $\valpha = \vq + \vxi$ and $\vg = N(\vq+\veta)$, we have $N\valpha -\vg = N(\vq + \vxi - \vq - \veta) = N(\vxi - \veta)$. Thus
 \begin{equation}\label{eq:blue2}
     \vbeta = \valpha - YN(\vxi - \veta).
 \end{equation}
Write $\vtheta = N(\vxi - \veta)$,  
then we have $ \vbeta - \vq = \valpha - \vq  - Y\vtheta = \vxi - Y\vtheta$. Therefore, finding the BLUE is equivalent to solving the  optimization problem $Y = \arg\min \Phi$ where
\begin{align*}
    \Phi &=  E(\norm{\vxi - Y\vtheta}^2) =   E((\vxi - Y\vtheta)^T(\vxi - Y\vtheta))\\
    &=  E(\vxi^T\vxi - \vxi^TY\vtheta -\vtheta^TY^T\vxi + \vtheta^TY^TY\vtheta)
\end{align*}
Taking the partial derivatives of $\Phi$ w.r.t $Y$, we have 
\begin{align*}
    \frac{\partial \Phi}{\partial Y} &= E(\boldsymbol{0} - \vxi\vtheta^T -\vxi\vtheta^T + Y(\vtheta\vtheta^T + \vtheta\vtheta^T))
\end{align*}
By setting $\frac{\partial \Phi}{\partial Y} = 0$ we have $YE(\vtheta\vtheta^T) = E(\vxi\vtheta^T)$ thus
\begin{equation}\label{eq:blue4}
Y = E(\vxi\vtheta^T) E(\vtheta\vtheta^T)^{-1}.
\end{equation}
Recall that $(\vxi\vtheta^T)_{ij} = \xi_i(\xi_j -\xi_{j+1} -\eta_j + \eta_{j+1} )$, we have 
\[E(\vxi\vtheta^T)_{ij} = \begin{cases}
E(\xi_i^2) = \var(\xi_i) = \sigma^2  & i = j \\
-E(\xi_i^2) = -\var(\xi_i) = -\sigma^2  & i = j+1 \\
0 &\text{otherwise}
\end{cases}
\]
Hence
\[ E(\vxi\vtheta^T) = \sigma^2\begin{bmatrix}
1 & & \\
-1 & \ddots & \\
& \ddots &1 \\
& & -1
\end{bmatrix}_{k\times(k-1)} =  \sigma^2N^T.\]
Similarly, we have 
\begin{align*}
    (\vtheta\vtheta^T)_{ij} &= (\xi_i-\xi_{i+1} - \eta_i + \eta_{i+1})(\xi_j - \xi_{j+1} - \eta_j + \eta_{j+1}) \\
    &= \xi_i\xi_j + \xi_{i+1}\xi_{j+1} - \xi_{i}\xi_{j+1} -\xi_{i+1}\xi_{j} \\
    &\hspace{1em}+\eta_i\eta_j + \eta_{i+1}\eta_{j+1} - \eta_{i}\eta_{j+1} -\eta_{i+1}\eta_{j} \\
    &\hspace{1em}-(\xi_i -\xi_{i+1})(\eta_j-\eta_{j+1}) -(\eta_i -\eta_{i+1})(\xi_j-\xi_{j+1})
\end{align*}
Thus
\[E(\vtheta\vtheta^T)_{ij}\! =\! \begin{cases}
E(\xi_i^2\! +\! \xi_{i+1}^2\! +\! \eta_i^2\! +\! \eta_{i+1}^2) = 2(1\!+\!\lambda)\sigma^2  & i = j \\
E(-\xi_i^2 - \eta_i^2) = -(1\!+\!\lambda)\sigma^2  & i = j\!+\!1 \\
E(-\xi_j^2 -\eta_j^2) = -(1\!+\!\lambda)\sigma^2  & i = j\!-\!1 \\
0 &\text{otherwise}
\end{cases}
\]
Hence
\[
E(\vtheta\vtheta^T) =(1\!+\!\lambda)\sigma^2 \begin{bmatrix}
2 & -1 & & & &\\
-1 & 2 & -1 & & &\\
& \ddots & \ddots &  \ddots &  & \\
& & -1 & 2 & -1 \\
& & & -1 & 2 &
\end{bmatrix}_{(k-1)\times (k-1)}.
\]
It can be directly computed that $E(\vtheta\vtheta^T)^{-1} $ is a symmetric matrix whose lower trianguilar part is 
\[ 
 \frac{1}{k(1\!+\!\lambda)\sigma^2}\!\begin{bmatrix}
(k\!-\!1)\cdot 1 & \cdots & \cdots & \cdots & \cdots   \\
(k\!-\!2)\cdot 1 & (k\!-\!2)\cdot 2 & \cdots & \cdots & \cdots \\
(k\!-\!3)\cdot 1 & (k\!-\!3)\cdot 2 & (k\!-\!3)\cdot 3 & \cdots & \cdots  \\
\vdots & \vdots & \vdots & \ddots &  \vdots \\
1\cdot 1 & 1\cdot 2 & 1\cdot 3 & \cdots & 1\cdot (k\!-\!1)  \\
\end{bmatrix}
\]
i.e., $E(\vtheta\vtheta^T)^{-1}_{ij} = E(\vtheta\vtheta^T)^{-1}_{ji} = \frac{1}{k(1+\lambda)\sigma^2}\cdot (k-i)\cdot j$ for all $1\leq i\leq j \leq k-1$. 
Therefore, $Y = E(\vxi\vtheta^T) E(\vtheta\vtheta^T)^{-1}= $ 
\[
\frac{1}{k(1\!+\!\lambda)}\left(\begin{bmatrix}
k\!-\!1 & k\!-\!2 & \cdots &1 \\
k\!-\!1 & k\!-\!2 & \cdots &1 \\
k\!-\!1 & k\!-\!2 & \cdots &1 \\
\vdots & \vdots & \ddots &\vdots\\
k\!-\!1 & k\!-\!2 & \cdots & 1 \\
\end{bmatrix} - 
\begin{bmatrix}
0 & 0 & \cdots &0 \\
k & 0 & \cdots &0 \\
k & k & \cdots &0 \\
\vdots & \vdots & \ddots &0 \\
k & k & \cdots &k \\
\end{bmatrix}
\right)_{k\times (k-1)}
\]
Hence 
\[
X = I_k - YN = \frac{1}{k(1\!+\!\lambda)}\begin{bmatrix}
1\!+\!k\lambda  & 1 & \cdots & 1 \\
1 & 1\!+\!k\lambda  & \cdots & 1 \\
\vdots & \vdots & \ddots & \vdots \\
1 & 1 & \cdots & 1\!+\!k\lambda  \\
\end{bmatrix}_{k\times k}.
\]
\end{proof}

\subsection{Proof of Corollary \ref{cor:blue_var}}
Recall that $\alpha_i = q_i + \xi_i$ and $g_i = q_i + \eta_i - q_{i+1} - \eta_{i+1}$ where $\xi_i$ and $\eta_i$  are independent Laplacian random variables. Assume without loss of generality that $\var(\xi_i)=\sigma^2$ and $\var(\eta_i) = \lambda\sigma^2$ as before.
From the matrices $X$ and $Y$ in Theorem~\ref{thm:blue} we have that $\beta_i = \frac{x_i + y_i}{k(1+\lambda)}$ where
\begin{align*}
    x_i &= \alpha_1 + \cdots + (1+k\lambda) \alpha_i + \cdots + \alpha_k\\
    &= (q_1+\xi_1) + \cdots + (1+k\lambda) (q_i+\xi_i) + \cdots + (q_k +\xi_k)
    \shortintertext{and}
    y_i &= -g_1 -2g_2 - \cdots - (i-1)g_{i-1} \\
    &\qquad + (k-i) g_i + \ldots + 2g_{k-2} + g_{k-1}\\
    &= -(q_1 + \eta_1) - (q_2+\eta_2) -\cdots -  (q_{i-1} + \eta_{i-1}) \\
    &\qquad + (k-1)(q_i + \eta_i) - (q_{i+1} + \eta_{i+1}) - \cdots - (q_{k} + \eta_k).
\end{align*}
Therefore
\begin{align*}
    \var(x_i) &= \sigma^2 + \cdots + (1+k\lambda)^2\sigma^2 + \cdots + \sigma^2\\
    &= ( k^2\lambda^2 + 2k\lambda + k)\sigma^2 \\
    \var(y_i) &= \lambda\sigma^2+ \cdots + (k-1)^2 \lambda\sigma^2 + \cdots + \lambda\sigma^2\\
    &= (k^2 - k)\lambda\sigma^2
    \shortintertext{and thus}
    \var(\beta_i) &= \frac{\var(x_i) + \var(y_i)}{k^2(1+\lambda)^2} =\frac{1 + k\lambda}{k+k\lambda}\sigma^2.
\end{align*}
Since $\var(\alpha_i) = \var(\xi_i) = \sigma^2$, we have 
\[\frac{\var(\beta_i)}{\var(\alpha_i)} = \frac{1 + k\lambda}{k+k\lambda}.\]

\subsection{Proof of Lemma~\ref{lem:confint}}
The density function of $\eta_i - \eta$ is $f_{\eta_i-\eta} (z) = \int_{-\infty}^\infty  f_{\eta_i}(x) f_{\eta}(x-z)\,dx =\frac{\epsilon_0\epsilon_*}{4} \int_{-\infty}^\infty e^{-\epsilon_*\abs{x}} e^{-\epsilon_0\abs{x-z}}\,dx.$
First  consider the case $\epsilon_0\neq \epsilon_*$. When $z\geq 0$, we have
\begin{align*}
     f_{\eta_i-\eta} (z) &= \frac{\epsilon_0\epsilon_*}{4} \int_{-\infty}^\infty e^{-\epsilon_*\abs{x}} e^{-\epsilon_0\abs{x-z}}~dx\\
    &= \frac{\epsilon_0\epsilon_*}{4} \Big(  \int_{-\infty}^0 e^{\epsilon_* x} e^{\epsilon_0(x-z)}~dx ~+ \\
    &\quad \int_{0}^z e^{-\epsilon_* x} e^{\epsilon_0(x-z)}~dx
     + \int_{z}^\infty e^{-\epsilon_* x} e^{-\epsilon_0(x-z)}~dx\Big)\\ %
    &= \frac{\epsilon_0\epsilon_*}{4} \Big( \frac{e^{-\epsilon_0z}}{\epsilon_0+\epsilon_*} + \frac{ e^{-\epsilon_*z} - e^{-\epsilon_0z}}{\epsilon_0-\epsilon_*} + \frac{e^{-\epsilon_*z}}{\epsilon_0+\epsilon_*} \Big)\\ %
    &= \frac{\epsilon_0\epsilon_* (\epsilon_0 e^{-\epsilon_*z} - \epsilon_* e^{-\epsilon_0z})}{2(\epsilon_0^2 - \epsilon_*^2)}
\end{align*}
Thus by symmetry we have for all $z\in \bbR$
\[f_{\eta_i-\eta} (z) = \frac{\epsilon_0\epsilon_* (\epsilon_0 e^{-\epsilon_*\abs{z}} - \epsilon_* e^{-\epsilon_0\abs{z}})}{2(\epsilon_0^2-\epsilon_*^2)} \] and 
\begin{align*}
    \bbP(\eta_i - \eta \geq -t ) &= \int_{-t}^{\infty} f_{\eta_i-\eta} (z)~dz =  \int_{-t}^{0} f_{\eta_i-\eta} (z)~dz  + \frac{1}{2} \\
    &= 1 - \frac{\epsilon_0^2 e^{-\epsilon_*t} - \epsilon_*^2e^{-\epsilon_0t}}{2(\epsilon_0^2 - \epsilon_*^2)}.
\end{align*}
Now if $\epsilon_0 = \epsilon_*$, then by similar computations we have
\[f_{\eta_i-\eta} (z) = (\frac{\epsilon_0}{4} + \frac{\epsilon_0^2\abs{z}}{4})e^{-\epsilon_0\abs{z}}\quad \text{and}  \] 
\begin{align*}
    \bbP(\eta_i - \eta \geq -t ) 
    &= 1 - (\frac{2+\epsilon_0t}{4})e^{-\epsilon_0t}.
\end{align*}

\subsection{Proofs in Section~\ref{sec:em} (Exp. Mech. with Gap)}\label{app:em}
\printProofs[section7]
\end{appendix}
\fi

\end{document}